\gdef\@ptsize{0} 
 \def\map#1{\mathcal #1}
\def\<{\langle}\def\>{\rangle}
\def\Tr{\operatorname{Tr}}\def\:{\hbox{\bf
    :}}
\def\spc#1{\mathcal{#1}}
\def\rank{\mathsf{rank}}
\def\set#1{\mathsf{#1}}
\newtheorem{theo}{{Theorem}}
\crefname{theo}{Theorem}{Theorems}
\newtheorem{lem}{{Lemma}}
\crefname{lem}{Lemma}{Lemmas}
\newtheorem{prop}{{Proposition}}
\crefname{prop}{Proposition}{Propositions}
\newtheorem{cor}{{Corollary}}
\crefname{cor}{Corollary}{Corollaries}
\newtheorem{definition}{{Definition}}
\crefname{definition}{Definition}{Definitions}
\newtheorem{assumption}{{Assumption}}
\crefname{assumption}{Assumption}{Assumptions}
\newcommand*{\storecounter}[2]{%
  \edef\@currentlabel{\the\value{#1}}
  \label{#2}
}
\newcommand{\ket}[1]{\left|#1\right\rangle}
\newcommand{\kket}[1]{\left|#1\middle\rangle\!\right\rangle}
\newcommand{\bra}[1]{\left\langle#1\right|}
\newcommand{\bbra}[1]{\left\langle\!\middle\langle#1\right|}
\newcommand{\braket}[2]{\left\langle#1\middle|#2\right\rangle}
\newcommand{\ketbra}[1]{\ket{#1}\bra{#1}}
\newcommand{\kketbbra}[1]{\kket{#1}\bbra{#1}}
\newcommand{\bbrakket}[2]{\left\langle\!\left\langle#1\middle|#2\right\rangle\!\right\rangle}
\newcommand{\ind}{\DataSty{ind}}
\newcommand{\True}{\KwSty{true}}
\newcommand{\False}{\KwSty{false}}
\newcommand{\independent}{\FuncSty{independence}}
\newcommand{\chione}[3]{\chi_1\!\left(#2;#3\right)_{#1}}
\tikzset{tensor/.style={rectangle,color=black,draw=black,fill=white,thick,
                    inner sep=1pt,minimum size=5mm}}
\tikzset{tensor2h/.style={rectangle,color=black,draw=black,fill=white,thick,
                    inner sep=1pt,minimum width=5mm,minimum height=10mm}}
\tikzset{parameter/.style={rectangle,color=black,draw=black,fill=black!10,thick,
                    inner sep=1pt,minimum size=5mm}}
\tikzset{virtual/.style={rectangle,inner sep=1pt,minimum size=5mm}}
\tikzset{prepare/.style={rounded rectangle, rounded rectangle east arc=none,color=black,draw=black,fill=white,thick,inner sep=1pt,minimum size=5mm}}
\tikzset{measure/.style={rounded rectangle, rounded rectangle west arc=none,color=black,draw=black,fill=white,thick,inner sep=1pt,minimum size=5mm}}
\tikzset{
    triple3/.style args={[#1] in [#2] in [#3]}{
        #1,preaction={preaction={draw,#3},draw,#2}
    }
}
\tikzset{triple/.style={triple3={[line width=0.125mm,black] in [line width=2mm,white] in [line width=2.25mm,black]}}}
\tikzset{thick triple/.style={triple3={[line width=0.25mm,black] in [line width=1.75mm,white] in [line width=2.25mm,black]}}}
\newcommand{\measurement}{\draw (55:0.5) arc (55:125:0.5); \draw (80:0.25) -- (80:0.6);}
\newcommand{\drawground}{\draw[thick] (-0.15,0) -- (0.15,0);\draw[thick] (-0.10,-0.05) -- (0.10,-0.05);\draw[thick] (-0.05,-0.1) -- (0.05,-0.1);}
\tikzset{ground/.pic={\drawground}}
\newenvironment{mathtikz}[1][]{\begin{array}{c}\begin{tikzpicture}[#1]}{\end{tikzpicture}\end{array}}
\newif\ifpgf@rectanglewrc@donecorner@
\def\pgf@rectanglewithroundedcorners@docorner#1#2#3#4#5{%
  \edef\pgf@marshal{%
    \noexpand\pgfintersectionofpaths
      {%
        \noexpand\pgfpathmoveto{\noexpand\pgfpoint{\the\pgf@xa}{\the\pgf@ya}}%
        \noexpand\pgfpathlineto{\noexpand\pgfpoint{\the\pgf@x}{\the\pgf@y}}%
      }%
      {%
        \noexpand\pgfpathmoveto{\noexpand\pgfpointadd
          {\noexpand\pgfpoint{\the\pgf@xc}{\the\pgf@yc}}%
          {\noexpand\pgfpoint{#1}{#2}}}%
        \noexpand\pgfpatharc{#3}{#4}{#5}%
      }%
    }%
  \pgf@process{\pgf@marshal\pgfpointintersectionsolution{1}}%
  \pgf@process{\pgftransforminvert\pgfpointtransformed{}}%
  \pgf@rectanglewrc@donecorner@true
}
  \savedmacro\cornerradiusnw{%
    \edef\cornerradiusnw{\pgfkeysvalueof{/pgf/rectangle corner radius north west}}%
  }
  \savedmacro\cornerradiusne{%
    \edef\cornerradiusne{\pgfkeysvalueof{/pgf/rectangle corner radius north east}}%
  }
  \savedmacro\cornerradiussw{%
    \edef\cornerradiussw{\pgfkeysvalueof{/pgf/rectangle corner radius south west}}%
  }
  \savedmacro\cornerradiusse{%
    \edef\cornerradiusse{\pgfkeysvalueof{/pgf/rectangle corner radius south east}}%
  }
    \edef\pgf@marshal{%
      \noexpand\pgfpointborderrectangle
      {\noexpand\pgfqpoint{\the\pgf@xb}{\the\pgf@yb}}
      {\noexpand\pgfqpoint{\the\pgf@xc}{\the\pgf@yc}}%
    }%
    \pgfextract@process\borderpoint{}%
       \pgf@rectanglewithroundedcorners@docorner{\cornerradiusne}{0pt}{0}{90}{\cornerradiusne}%
\tikzset{prepare tall/.style={rectangle with rounded corners, rectangle corner radius north east=0pt, rectangle corner radius south east=0pt, rectangle corner radius north west=30pt, rectangle corner radius south west=30pt, color=black,draw=black,fill=white,thick,inner sep=1pt,minimum height=22mm,minimum width=12mm}}
\tikzset{measure tall/.style={rectangle with rounded corners, rectangle corner radius north west=0pt, rectangle corner radius south west=0pt, rectangle corner radius north east=30pt, rectangle corner radius south east=30pt, color=black,draw=black,fill=white,thick,inner sep=1pt,minimum height=22mm,minimum width=12mm}}
\newcommand{\red}[1]{{#1}}
\newcommand{\supplementref}[1]{\hyperref[#1]{Supplementary Note\hspace{0.2em}\ref*{#1}}}
\newif\iffigures
\begin{document}

\title{
Quantum Causal Unravelling}
\author{Ge Bai}
\affiliation{QICI Quantum Information and Computation Initiative, Department of Computer Science, The University of Hong Kong, Pokfulam Road, Hong Kong}
\affiliation{HKU-Oxford Joint Laboratory for Quantum Information and Computation}
\author{Ya-Dong Wu}
\affiliation{QICI Quantum Information and Computation Initiative, Department of Computer Science, The University of Hong Kong, Pokfulam Road, Hong Kong}
\affiliation{HKU-Oxford Joint Laboratory for Quantum Information and Computation}
\author{Yan Zhu}
\affiliation{QICI Quantum Information and Computation Initiative, Department of Computer Science, The University of Hong Kong, Pokfulam Road, Hong Kong}
\affiliation{HKU-Oxford Joint Laboratory for Quantum Information and Computation}
\author{Masahito Hayashi}
\affiliation{Shenzhen Institute for Quantum Science and Engineering, Southern University of Science and Technology, Shenzhen 518055, China \looseness=-1}
\affiliation{Guangdong Provincial Key Laboratory of Quantum Science and Engineering, Southern University of Science and Technology, Shenzhen 518055, China}
\affiliation{Graduate School of Mathematics, Nagoya University, Nagoya, 464-8602, Japan}
\author{Giulio Chiribella}
\email{giulio@cs.hku.hk}
\affiliation{QICI Quantum Information and Computation Initiative, Department of Computer Science, The University of Hong Kong, Pokfulam Road, Hong Kong}
\affiliation{Department of Computer Science, University of Oxford, Parks Road, Oxford OX1 3QD, United Kingdom\looseness=-1}
\affiliation{HKU-Oxford Joint Laboratory for Quantum Information and Computation}
\affiliation{Perimeter Institute For Theoretical Physics, 31 Caroline Street North, Waterloo N2L 2Y5, Ontario, Canada\looseness=-1}

\begin{abstract}
Complex  processes often arise from sequences of simpler  interactions involving a few particles at a time. 
 These interactions, however, may not be directly accessible to experiments.
    Here we develop the first efficient  method  for unravelling the causal structure of the  interactions in a multipartite quantum process, under the assumption that the  process has bounded information loss and induces  causal dependencies whose strength is  above a  fixed (but otherwise arbitrary)  threshold. Our method is based on a quantum algorithm whose complexity  scales  polynomially in the total number of input/output systems, in the dimension of the systems involved in each interaction,  and in the inverse of the chosen threshold for the strength of the causal dependencies. 
Under additional assumptions, 
we also provide a second algorithm   that has lower complexity and  requires only local state preparation and local measurements.  
Our algorithms can be used to  identify processes that can be characterized efficiently with the technique of quantum process tomography.  Similarly, they can be used to identify useful communication channels in quantum networks, and to test  the internal structure  of uncharacterized quantum circuits. 
\end{abstract}


\maketitle

\section{Introduction}

Many  processes in nature arise from sequences of basic interactions, each involving a small number of physical systems. Determining  the causal structure of these interactions is important  both for  basic science and for engineering.    Often, however, the sequence of interactions giving rise to  a process of  interest may  not be directly accessible to experiments. 
For example, scattering experiments in high energy physics can probe the  relation between a set of incoming particles  and a set of outgoing particles,  but typically cannot access the individual events taking place within the scattering region. 
  In this and similar scenarios, a   fundamental problem  is to characterize the causal  structure of the interactions  by accessing only  the inputs and outputs of the process of interest, while treating the intermediate steps as a black box.    
 We call this problem, illustrated in  \autoref{fig:title},  the causal unravelling of an unknown physical process.  Explicitly, the problem of causal unravelling  is  to determine whether an unknown  process can be broken down into a sequence of simpler interactions, to determine the order of such interactions, and to determine which systems  take part in each interaction. Causal unravelling can be viewed as a special case of the broader problem of causal discovery \cite{spirtes2000causation,pearl2009causality}, namely  the task  to identify the causal relations between a given set of variables.   In the broad class of causal discovery problems, the  distinctive features of causal unraveling are that {\em (i)}  the goal is to identify a linear causal structure, corresponding to the sequence of interactions underlying the given process, and {\em (ii)}  certain variables are {\em a priori} known to be `inputs' (and therefore potential `causes'), while other variables are {\em a priori} known to be `outputs'  (and therefore potential `effects'). This scenario often arises in experimental physics,  where the input/output structure is typically clear from the design of the experiment, as in the aforementioned example of scattering experiments.
 In principle,  candidate answers  can be extracted from a full tomographic characterization of the process under consideration. However, the complexity of  process tomography grows exponentially in the number of inputs and outputs, making this approach unfeasible when the process involves a large number of systems.    

 In the classical domain, the problem of causal unravelling can be  efficiently addressed  with a variety of   algorithms developed for the general problem of causal discovery \cite{spirtes2000causation,pearl2009causality,heinze2018causal}.  Classical causal discovery algorithms often formulate causal relationships with graphical models and solve such models by structure learning algorithms, such as PC algorithm (named after Peter Spirtes and Clark Glymour) \cite{spirtes2000causation}, Greedy Equivalence Search \cite{chickering2002optimal}, and Max-Min Hill Climbing \cite{tsamardinos2006max}. These algorithms cover a wide variety of problems, by making different sets of assumptions on   the process under consideration.  Typical assumptions include  causal sufficiency---meaning that no variables are hidden---and causal faithfulness---meaning that the conditional independences among the variables are precisely those  associated to an underlying  graph used to model the causal structure.   In general, however, causal discovery is intrinsically a hard problem: when no assumption is made,  the complexity of all the known algorithms becomes exponential in the worst case over all possible instances \cite{chickering1996learning,chickering2004large}.

In the quantum domain, the problem of causal unravelling  is made even more challenging by the presence of  correlations that elude a classical  explanation~\cite{wood2015lesson,van2019quantum}. 
  In recent years, the  quantum  extension of the notion of causal model  has been addressed in a series  of works \cite{henson2014theory,pienaar2015graph,costa2016quantum,allen2017quantum,barrett2019quantum,barrett2021cyclic}, 
 providing a solid conceptual foundation to the field of quantum causal discovery. 
On the algorithmic side, however, the study of quantum causal models remained relatively underdeveloped.  Specific instances of quantum causal discovery  were studied in Refs. \cite{ried2015quantum,fitzsimons2015quantum,chiribella2019quantum}, showing that quantum resources  offer appealing advantages. 
  These examples, however, were limited to simple instances, typically involving a small number of variables and/or a small number of  hypotheses on the  causal structure.  In more general scenarios, one approach could be to perform  quantum process tomography 
  and then to infer the causal structure from the full description of the process under consideration  \cite{giarmatzi2018quantum}.  As in the classical case,  however,  the number of queries needed by  a full process tomography grows exponentially with the number of systems involved in the process, making this approach impractical as the size of the problem increases. 

\begin{figure}[h]
\centering
\iffigures
  \includegraphics[width=0.8\linewidth]{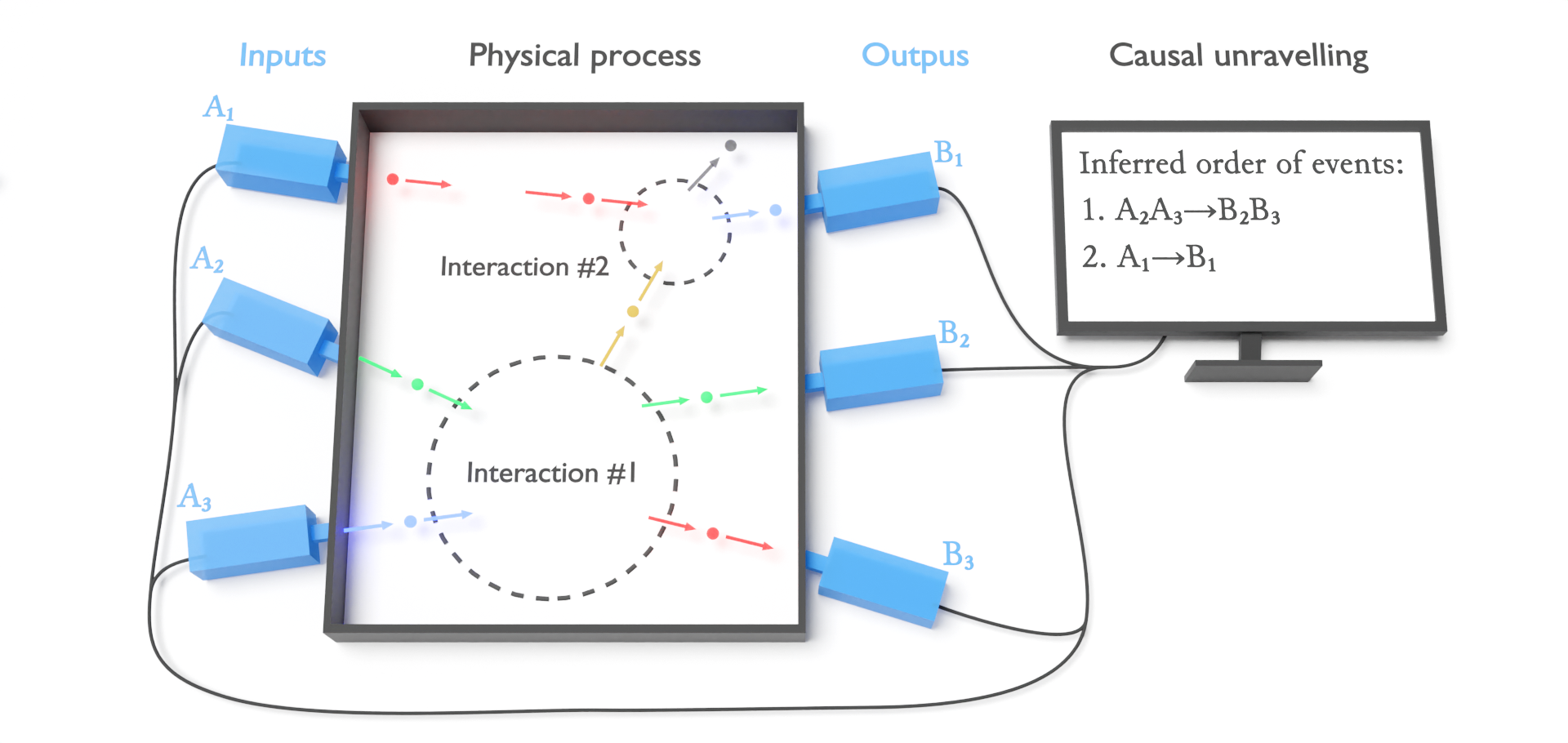}
\fi
\caption{\label{fig:title}  {\bf Causal unravelling.}  A physical process involves a set of input systems  (in the figure, $A_1, A_2,$ and $A_3$) and a set of output systems  (in the figure, $B_1, B_2,$ and  $B_3$).  The process is the result of a sequence of interactions, each of which involves a subset of the inputs and a subset of the outputs. The problem is to infer  the causal structure  of the  interactions solely from the input-output behaviour of the process. }
\end{figure}

In this paper, we provide an efficient algorithm for unravelling the causal structure of multipartite quantum processes without resorting to  full process tomography.  Our algorithm is similar to the PC algorithm \cite{spirtes2000causation} for classical causal discovery, in that it is based on a set of tests that establish the independence relations between subsets of input and output systems. We show that the algorithm has the following features: 
\begin{enumerate}
  \item The number of independence tests needed to infer the causal structure scales polynomially with the number of inputs/outputs of the process.   This feature is possible thanks to the special structure of the causal unravelling problem, where the goal is to establish a linear ordering of the interactions giving rise to the process under consideration.
  \item The independence tests produce, as a byproduct,  an estimate of the strength of correlation between the various inputs and outputs of the process.  In Methods, we show that this estimate can be obtained by performing a number of measurements that grows polynomially with the dimension of the systems under consideration, and that the number of measurements needed to conclude independence scales polynomially with the number of systems.
  \item The algorithm is exact whenever the process has bounded information loss, and satisfies a form of causal faithfulness property, namely that the strength of the causal relations, when present, is above a given threshold. When these assumptions are not satisfied, the algorithm produces an approximate result. The details of the approximate case are in \supplementref{app:approximate}.
\end{enumerate}

Moreover, the efficiency of our algorithm can be further boosted in special cases, including {\em(i)} the case where each input of a given interaction has a non-trivial causal influence on all the outputs of subsequent interactions, and {\em(ii)} the case where the process belongs to a special case of Markovian processes \cite{costa2016quantum,giarmatzi2018quantum,pollock2018operational,berk2021resource}, where each output depends only on one previous input and each input affects only one later output.  
We study these cases in the Results, where we devise an alternative algorithm that only requires local state preparations and local measurements. The number of queries to the process is only logarithmic in the number of input and output wires, thanks to a method that efficiently determines the correlations between input-output pairs as described in the Methods.

The remaining parts of this paper is structured as follows. In Results, we first formulate the quantum causal unravelling problem. In the second subsection, we give the main body of the efficient causal unravelling algorithm, discuss the assumptions and analyze its efficiency. The third subsection of Results talks about the alternative algorithm designed for special cases. At the end of Results, we briefly talk about a generalization of our algorithm. 
Future works, interpretations and the applications of the causal unravelling algorithms are addressed in the Discussion. The Methods section contains the detailed implementation of the independence tests used by the algorithms in Results.

\section{Results}


{\bf Problem formulation.}  Let us start by giving a precise formulation of the problem of quantum causal unravelling.  In this problem, an experimenter is given access to a multipartite  quantum process, with inputs labelled as $A_1,\dots,A_{n_{\rm in}}$, and outputs labelled as $B_1,\dots,B_{n_{\rm out}}$.  Note that, in general, different labels may refer to the same physical system: for example, system $A_1$ could be a single photon with a given frequency, entering in the interaction region,   and system $B_1$ could be a single photon with the same frequency, exiting the interaction region.    Mathematically,  the   process is described by a quantum channel, that is,  a completely positive trace-preserving (CPTP) linear map   $\map{C}$ transforming operators on the tensor product space $\spc{H}_{A_1}\otimes\dots\otimes\spc{H}_{A_{n_{\rm in}}}$ to operators on the tensor product space $\spc{H}_{B_1}\otimes\dots\otimes\spc{H}_{B_{n_{\rm out}}}$.    Note that, without loss of generality, one can always assume $n_{\rm in} = n_{\rm out} = n$,  as this condition can be satisfied  by adding a number of dummy systems with one-dimensional Hilbert space. In the following, we will denote by $L(\spc H)$  the set of linear operators on a generic Hilbert space $\spc H$, and by  $S(\spc H)$  the subset of density operators on $\spc H$, that is, the subset of operators $\rho  \in  L(\spc H)$  that are positive semidefinite and have unit trace.  

 \begin{figure}[h]
\iffigures
\includegraphics{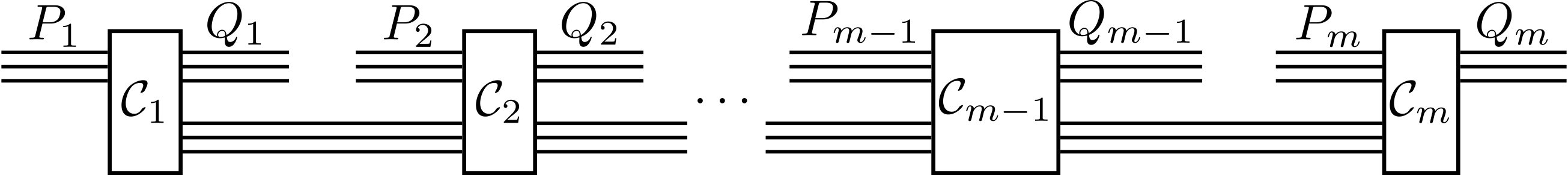}
\fi
\caption{\label{fig:PQcomb}  {\bf Decomposition of a  physical process into a  sequence of  $m$ interactions.} The inputs  (outputs) of the processes are divided into $m$ non-overlapping subsets.  The $i$-th subset of the inputs (outputs) consists of the systems that enter (exit)  the interaction at the $i$-th step (rectangular boxes in the picture). The wires connecting one interaction to the next represent intermediate systems that in this stage are not directly accessible to experiments.  For example, a photon could  enter the first interaction, remain as an intermediate system  between the first and second interaction, and then  exit the interaction region after the second interaction.  }
\end{figure}

The problem of causal unravelling is to determine whether a multipartite process can be broken down into a sequence of interactions, as in \autoref{fig:PQcomb}, and, in the affirmative case, to determine which systems are involved in each interaction.  
Mathematically, the problem is to find a partition  $\{P_i\}_{i=1}^m$  of the set $\{A_1,\dots,A_n\}$ and a partition $\{Q_i\}_{i=1}^m$ of the set $\{B_1,\dots,B_n\}$, such that the multipartite process can be decomposed into a sequence of interactions, with the $i$-th interaction involving input systems in $P_i$ and output systems in $Q_i$.  Such a sequential structure matches the framework of quantum combs \cite{chiribella2008quantum,chiribella2009theoretical}. A quantum comb is a quantum process that can be broken down into a sequence of interactions $\map{C}_1,\dots,\map{C}_k$ as in \autoref{fig:PQcomb}, while each interaction $\map{C}_i$ is a CPTP map and is called a tooth of the comb. Refs. \cite{chiribella2008quantum,chiribella2009theoretical} give a set of necessary and sufficient conditions for determining whether a given process conforms to a quantum comb, which is equivalent to whether the process admits a causal unravelling with partitions $\{P_i\}$ and $\{Q_i\}$.
We say a process $\map{C}$ has a causal unravelling $(P_1,Q_1),\dots,(P_m,Q_m)$ if it can be decomposed into the form of a quantum comb with $m$ teeth as  in \autoref{fig:PQcomb}.

\begin{figure}[h]
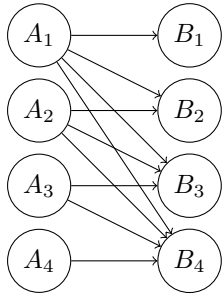

\iffigures
$$\begin{mathtikz}
    \foreach \i in {1,2,3,4} {
        \node[circle,draw=black] (A\i) at (0,-\i) {$A_{\i}$};
        \node[circle,draw=black] (B\i) at (2,-\i) {$B_{\i}$};
    }
    \foreach \i in {1,2,3,4}
        \foreach \j in {\i,...,4}
            \draw[->] (A\i) -- (B\j);
\end{mathtikz}$$
\fi
\caption{\label{fig:bipartite_graph}{\bf Causal graph of a quantum process with causal unravelling $(A_1,B_1),(A_2,B_2),(A_3,B_3),(A_4,B_4)$.} This is a bipartite graph because inputs can be independently controlled, and every two outputs are independent conditioned on all inputs. }
\end{figure}

The causal unravelling of a given multipartite process  determines the possible signalling relations between inputs and outputs.  With respect to this decomposition,  input systems at a given time  can only signal to output systems at later times.  The resulting   pattern can be graphically illustrated by a causal graph \cite{spirtes2000causation,pearl2009causality,barrett2019quantum},   as shown in \autoref{fig:bipartite_graph}.   Note that the causal graph includes all the signalling relations that are in principle compatible with the structure of the interactions. However,   a specific quantum  process may not exhibit any signalling  from a specific input to a specific output, even though the causal structure of the interactions would in principle permit it. When this occurs, the causal unravelling of a process may not be unique. For example, consider a bipartite process with inputs $\{A_1, A_2\}$ and outputs $\{B_1,  B_2\}$, with the property that $A_1$ signals to $B_1$ but not to $B_2$ and $A_2$ signals to $B_2$ but not to $B_1$. It is impossible to decide whether the signalling $A_1\to B_1$ happens before or after $A_2 \to B_2$ since they are causally uncorrelated. Therefore, the process admits two causal unravellings: $(A_1,B_1),(A_2,B_2)$ and $(A_2,B_2),(A_1,B_1)$. In this case, any of the two options is a valid solution of the causal unravelling problem.
   
Generally, causal discovery is a hard problem, even in the classical setting \cite{chickering1996learning,chickering2004large}. However, we will now show that, under a few assumptions, the more specific problem of quantum causal unravelling defined above can be solved efficiently. The first  assumption is that the basic interactions appearing in the causal unravelling involve a small number of systems, independent of the number of inputs and outputs.  At the fundamental level, this assumption is motivated by the fact that interactions are local, and typically involve a small number of systems.
 Mathematically, the assumption is that the cardinality of all  the sets   in the  partitions $\{   P_i\}_{i=1}^m$  and  $\{Q_i\}_{i=1}^m$ is no larger than a constant $c$ independent of  $n$.    For simplicity,  we will first restrict our attention to the special case   $c=1$,   meaning that the process can be broken down into interactions involving only one input and one output at a time. We will discuss larger $c$ at the end of the Results and in \supplementref{app:largerc}. 
    Hereafter, we will denote by  $\set{Comb}[(A_1,B_1),\dots,(A_n,B_n)]$ the set of quantum combs with $n$ teeth where the $i$-th tooth has input $A_i$ and output $B_i$.

In the basic $c=1$ scenario illustrated above, the problem of causal unravelling is to find out which pair of systems is involved in the first interaction, which pair is  involved in the second, and so on.   Formally, our goal is to identify an ordering of the inputs and outputs, $(A_{\sigma(1)},B_{\pi(1)}),\dots,(A_{\sigma(n)},B_{\pi(n)})$ with $\sigma$ and $\pi$ being permutations of $\{1,\dots,n\}$, such that 
$\map{C} \in \set{Comb}[(A_{\sigma(1)},B_{\pi(1)}),\dots,(A_{\sigma(n)},B_{\pi(n)})]$.

To quantify the efficiency of our algorithms, we will focus on the sample complexity, namely the number of black-box queries to the channel $\map{C}$, and on the computational complexity, including additional quantum and classical computation time measured by the number of elementary quantum gates and classical operations.


\medskip 

{\bf Efficient quantum  causal unravelling.}  We now provide an efficient quantum algorithm for quantum causal unravelling.   The main idea of the algorithm is to recursively find the last interaction in the decomposition of a given process.   
To illustrate this idea, consider the example of  \autoref{fig:bipartite_graph}. When we remove $B_4$, the node $A_4$ becomes disconnected from all the other nodes, meaning that the state of system $A_4$ does not affect the state of  the other systems.   By testing this independence condition, we can in principle check whether the graph of  \autoref{fig:bipartite_graph}    is an appropriate model for the process under consideration.  
 In general, 
suppose $(A_x,B_y)$ are the input and output of the last interaction. Since $A_x$ signals to only $B_y$, if we ignore $B_y$, $A_x$ is independent of the joint system formed by all systems excluding $A_x$ and $B_y$. This gives the following criterion that the last interaction must satisfy: 

\begin{prop} \label{prop:last_tooth_informal} \cite{chiribella2008quantum}
    The last interaction of a process $\map{C}$ involves the input/output pair $(A_x,B_y)$  if and only if $A_x$ is independent of $A_{\neq x}B_{\neq y}$,  the joint system containing all systems other than $A_x$ and $B_y$. 
\end{prop}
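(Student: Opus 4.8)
The plan is to reduce the statement to the known characterization of quantum combs from Refs.~\cite{chiribella2008quantum,chiribella2009theoretical}, via the Choi operator of $\map{C}$. First I would fix notation: write $C$ for the Choi operator of $\map{C}$ on $\spc{H}_{A_1}\otimes\dots\otimes\spc{H}_{A_n}\otimes\spc{H}_{B_1}\otimes\dots\otimes\spc{H}_{B_n}$, and recall that $\map{C}$ being a channel means $C\ge 0$ together with the trace-preservation constraint $\Tr_{B_1\dots B_n}[C]=I_{A_1\dots A_n}$. For the "if and only if", the crucial intermediate object is the notion of \emph{no-signalling from $A_x$ to $A_{\neq x}B_{\neq y}$}, which is exactly the condition that $\Tr_{B_y}[C]$ factorizes as $\Tr_{B_y}[C] = I_{A_x}\otimes \widetilde{C}$ for some operator $\widetilde{C}$ on the remaining systems. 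I would state this as the precise meaning of the informal phrase "$A_x$ is independent of $A_{\neq x}B_{\neq y}$": it is the statement that the marginal channel obtained by discarding $B_y$ does not depend on the input on $A_x$.

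\medskip

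The forward direction ($\Rightarrow$) is the easy one. If $\map{C}\in\set{Comb}[\dots,(A_x,B_y)]$ with $(A_x,B_y)$ the last tooth, then by definition $\map{C}$ decomposes as a last CPTP map $\map{C}_{\rm last}$ acting on $A_x$ together with an intermediate memory wire, producing $B_y$, composed with the earlier teeth which produce all of $A_{\neq x}B_{\neq y}$ before $A_x$ is ever fed in. Discarding $B_y$ then amounts to tracing out the output of $\map{C}_{\rm last}$; since $\map{C}_{\rm last}$ is trace-preserving, the composite with $B_y$ discarded is insensitive to the input on $A_x$, so $\Tr_{B_y}[C]$ has the factorized form above. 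Concretely, this is just the statement that $\Tr[\map{C}_{\rm last}(\rho\otimes\xi)]=\Tr[\xi]$ for every state $\rho$ on $A_x$.

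\medskip

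The converse ($\Leftarrow$) is where the real content lies, and I expect it to be the main obstacle. Here I would invoke the Ref.~\cite{chiribella2008quantum,chiribella2009theoretical} characterization: a positive operator $C$ with $\Tr_{B_1\dots B_n}[C]=I_{A_1\dots A_n}$ is the Choi operator of a comb with teeth ordered $(A_1,B_1),\dots,(A_n,B_n)$ if and only if it satisfies a nested hierarchy of linear constraints of the form $\Tr_{B_k}[C^{(k)}] = I_{A_k}\otimes C^{(k-1)}$, where $C^{(n)}=C$ and each $C^{(k-1)}$ is again a valid comb Choi operator on the first $k-1$ input/output pairs. The hypothesis "$A_x$ independent of $A_{\neq x}B_{\neq y}$" supplies exactly the \emph{top} constraint of this hierarchy, with $(A_x,B_y)$ playing the role of the last pair: it says $\Tr_{B_y}[C] = I_{A_x}\otimes C'$ with $C' := \widetilde C$. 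It then remains to check that $C'$ is itself a legitimate channel Choi operator on the remaining $n-1$ input/output pairs in \emph{some} order — which it is, since $C'$ is positive (a partial trace of a positive operator) and satisfies $\Tr_{B_{\neq y}}[C'] = \Tr_{B_{\neq y}}\Tr_{B_y}[C]/\dim\spc H_{B_y}\cdot(\text{const})$; more carefully, tracing the top relation over the remaining outputs and using $\Tr_{B_1\dots B_n}[C]=I$ pins down the normalization of $C'$ so that it is the Choi operator of a CPTP map from $A_{\neq x}$ to $B_{\neq y}$. Since every channel on $n-1$ pairs is trivially a comb with $n-1$ teeth in any chosen order, $\map{C}$ is a comb with $(A_x,B_y)$ as its last tooth. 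The delicate points to get right are the bookkeeping of normalization factors (powers of $\dim\spc H_{B_y}$) when passing between the trace-preservation condition on $\map{C}$ and that on the reduced map, and making explicit that "last interaction involves $(A_x,B_y)$" should be read as "there \emph{exists} a causal unravelling of $\map{C}$ whose last tooth is $(A_x,B_y)$", consistent with the non-uniqueness discussed around \autoref{fig:bipartite_graph}.
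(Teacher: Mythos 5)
The paper itself does not prove this statement: it is imported from Ref.~\cite{chiribella2008quantum}, restated in \supplementref{app:pre} as the Choi-operator condition $\Tr_{B_y}[C]=C_{A_{\neq x},B_{\neq y}}\otimes I_{A_x}/d_{A_x}$, and again attributed to that reference. Your proposal supplies an actual argument, and it takes the right route: translate ``$A_x$ independent of $A_{\neq x}B_{\neq y}$'' into the no-signalling/factorization condition on $\Tr_{B_y}[C]$, get the forward direction from trace preservation of the last CPTP map, and get the converse from the realization theorem for combs (the nested linear constraints on the Choi operator are sufficient, not just necessary, for a physical decomposition into CPTP maps linked by memory wires). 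This matches both the formal version in \supplementref{app:pre} and the way the proposition is actually consumed by the algorithm, namely as the statement $\map{C}\in\set{Comb}[(A_{\neq x},B_{\neq y}),(A_x,B_y)]$ (cf.\ the proposition $P(k)$ in the proof of \autoref{thm:unitary2}). Your closing remark that ``last interaction involves $(A_x,B_y)$'' must be read existentially, because of the non-uniqueness of unravellings, is also the correct reading.

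One sentence in your converse is false and should be removed: ``every channel on $n-1$ pairs is trivially a comb with $n-1$ teeth in any chosen order.'' A generic channel from $\spc{H}_{A_1}\otimes\spc{H}_{A_2}$ to $\spc{H}_{B_1}\otimes\spc{H}_{B_2}$ is not in $\set{Comb}[(A_1,B_1),(A_2,B_2)]$ (take the SWAP channel, where $B_1$ depends on $A_2$). Fortunately you do not need this claim: the existential reading of the conclusion only requires the residual channel $C'$ to serve as a \emph{single} coarse-grained first tooth $(A_{\neq x},B_{\neq y})$, i.e.\ a two-tooth decomposition, and for that it suffices that $C'$ is a positive operator with the correct trace-preservation marginal --- exactly what you verify. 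Whether the residual channel decomposes further is a separate question, handled recursively by \autoref{func:recursive}, and is not part of this proposition. With that sentence replaced by the weaker (true) statement, the argument is sound.
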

More details can be found in \supplementref{app:pre}.   By scanning the possible pairs $(A_x,B_y)$, we can find if one of them satisfy  the above criterion, and in the affirmative case, we can assign that pair to  the last interaction. 
Note that, in general, there may be more than one pair that satisfy the required condition. After one pair is found, one can reduce the problem to a smaller graph   containing $n-1$ inputs and $n-1$ outputs.   If at every step a suitable pair is found, then the final result is a valid causal unravelling of the original process. If at one step no pair can be found, the algorithm will then conclude that no causal unravelling with $c=1$ exists for the remaining subgraph. At this point, the algorithm can continue by considering causal unravellings with higher values of $c$ for the remaining subgraph, which is discussed at the end of the Results and detailed in \supplementref{app:largerc}.

A description of the algorithm is  provided  in \autoref{func:recursive}. The algorithm runs in a recursive manner: for an $n$-tooth comb, it finds the last tooth of the comb, remove the tooth (feeds the input  with an arbitrary state and discards its output), and reduces the problem to finding the causal unravelling of an $(n-1)$-tooth comb. 
We repeat the above procedure until we reach the bottom case $n=1$,  and thus obtain the order of all inputs and outputs. If the last tooth cannot be found in some iteration, it means that the current channel cannot be further decomposed, and the algorithm will output the trivial causal unravelling $(P,Q)$ where $P$ ($Q$) is the set of all input (output) wires of the current channel.

\begin{algorithm}[H]
    \caption{Efficient quantum causal unravelling algorithm}\label{func:recursive}
    \Indentp{0.5em}
    \SetInd{0.5em}{1em}
    \SetKwInOut{Preprocessing}{Preprocessing}
    \SetKwInOut{Input}{Input}
    \SetKwInOut{Output}{Output}
    \SetKw{Next}{next}
    \SetKwFunction{unravel}{unravel}

    \ResetInOut{Input}
    \Input{Black-box access to quantum channel $\map{C}$}
    \ResetInOut{Output}
    \Output{A causal unravelling  of $\map{C}$}
    \BlankLine
    \If{$\map{C}$ has only one input wire $A_x$ and one output wire $B_y$}
    {
        Output $(A_x,B_y)$ and exit\;
    }
    \ForEach{input-output pair $(A_x,B_y)$ of $\map{C}$}
    {
        \If(\tcp*[f]{Done by checking whether $A_x$ and $A_{\neq x}B_{\neq y}$ are independent using the quantum circuit described in the Methods}){$(A_x,B_y)$ is the last tooth\label{line:ind}}
        {
            Let $\map{C}_{A_{\neq x},B_{\neq y}}(\rho) := \Tr_{B_y}[\map{C}(\rho \otimes \tau_{A_x})]$, where $\tau_{A_x}$ is an arbitrary state on system $A_x$ \;
            \DontPrintSemicolon\tcp*[r]{Remove $A_x$ and $B_y$ from $\map{C}$}\PrintSemicolon
            Recursively run this algorithm on the reduced channel $\map{C}_{A_{\neq x},B_{\neq y}}$\;
            Output $(A_x,B_y)$ and exit\tcp*{Append $(A_x,B_y)$ to the end of the output}
        }
    }
    Let $P$ ($Q$) be the set of all input (output) wires of $\map{C}$\;
    Output $(P,Q)$ and exit\tcp*{Cannot be further decomposed}
\end{algorithm}

We now discuss the efficiency of this algorithm. First, we show that the number of independence tests is polynomial in $n$. 
Let $T_{\rm test}(n)$ be the number of independence tests required for an $n$-to-$n$ channel. In this algorithm, an independence test is performed for at most each of the $n^2$ input-output pairs $(A_x,B_y)$, resulting in $O(n^2)$ independence tests. After finding a last tooth, the problem size is reduced to $n-1$. Therefore, $T_{\rm test}(n)$ can be given by the recursive relation $T_{\rm test}(n) = O(n^2)+T_{\rm test}(n-1)$ with $T_{\rm test}(1)=O(1)$, solving which gives $T_{\rm test}(n)=O(n^3)$. 

Second, we show that the independence tests can be efficiently realized. This is a non-trivial problem, because testing whether two generic systems are in a product state is computationally hard in the worst-case scenario \cite{gutoski2013quantum}. 
Nevertheless, in the Methods we design a quantum circuit that performs the independence tests efficiently under assumptions of bounded information loss and causal faithfulness. Our circuit converts the independence test to the estimation of the distance between quantum states, which is done by the SWAP test \cite{buhrman2001quantum}.

Now, we give the efficiency guarantee for our algorithm. We first discuss the exact case, when our algorithm produces the exact causal unravelling of $\map{C}$ based on two assumptions. We will use a few parameters related to the Choi state \cite{choi1975completely} of the process $\map{C}$, defined as the state $C  :=  ( {\map C}   \otimes\map I)   (  \kketbbra I )/d_{\rm in}$, where $d_{\rm in}$ is the total dimension of all the input systems, and $\kket I =  \sum_{j=1}^{d_{\rm in}}  \ket j \otimes \ket j $ is the canonical (unnormalized) maximally entangled state. The rank of the Choi state $C$ is called the Kraus rank of the process $\map{C}$. Since a unitary evolution, namely a process without information loss, has Kraus rank equal to one, the Kraus rank could be interpreted as the degree of information loss introduced by the process. 

  An important parameter entering into the analysis is the degree of independence between systems, defined in the following.    
    For a state $\rho$, we say two disjoint subsystems $S$ and $T$ are independent if $\rho_{ST} = \rho_{S} \otimes \rho_{T}$, where $\rho_S$, $\rho_T$ and $\rho_{ST}$ are the marginal states of $\rho$ on the subsystems $S$, $T$ and the joint system $ST$, respectively. The degree of independence between $S$ and $T$ is then defined as 
    $\chione\rho S T := \| \rho_{ST} - \rho_S \otimes \rho_T \|_1$ \,,
   where $\|X\|_1 := \Tr\left[\sqrt{X^\dag X}\right]$ denotes the trace norm. 
Clearly, $\chione\rho S T = 0$ if and only if $\rho_{ST} = \rho_{S} \otimes \rho_{T}$. More generally, the trace distance is related to the probability to distinguish the states, and thus $\chione\rho S T$ measures the probability that an observer correctly decides whether the subsystems are independent or correlated.   In the following, we will apply this definition to the Choi state $C$ of the process $\map C$.  With this choice,   $\chione C S T$ satisfies the conditions for a quantum causality measure, as defined in Ref. \cite{jia2018quantifying}.


To facilitate the efficiency analysis of our algorithm, we first put the process $\map{C}$ in a standard form where all input and output wires have the same dimension $d_A$. 
This standard form does not limit the generality of the quantum process we investigate. For a process $\map{C}$ with input dimensions $d_{A_1},\dots,d_{A_n}$ and output dimensions $d_{B_1},\dots,d_{B_n}$, we can pick $d_A=\max\{d_{A_1},\dots,d_{A_n},d_{B_1},\dots,d_{B_n}\}$ and regard each input or output wire of $\map{C}$ as a subspace of a $d_A$-dimensional system, thus transforming $\map{C}$ into a process whose wires all have dimension $d_A$. 
In our analysis, we will use the Kraus rank of the process in the standard form to characterize the information loss. We assume that the information loss is bounded, which is given by the following assumption:


\begin{assumption} \label{ass:dAdM} \label{ass:lowrank}
The Kraus rank of $\map{C}$, after transforming it into the standard form, is bounded by a polynomial of $n$.
\end{assumption}


To ensure that the algorithm outputs the correct causal unravelling, we further require that the process satisfies a form of causal faithfulness, meaning that the strength of causal relations is either zero or above a threshold. 
Using $\chi_1$ as a quantitative measure of correlation, we adopt the following assumption:

\begin{assumption}\label{ass:ST_threshold}
   There exists a number $\chi_{\min}>0$ such that, for any two disjoint sets of wires $S$ and $T$ being tested for independence, either
   \begin{enumerate}
      \item $\chione C S T = 0$, or
      \item $\chione C S T \geq \chi_{\min}$,
   \end{enumerate}
   where $C$ is the Choi state of the quantum process $\map{C}$.
\end{assumption}

The threshold $\chi_{\min}$ determines the resolution of the independence tests. To guarantee the correctness of \autoref{func:recursive}, the independence tests must be precise enough to detect correlations above this threshold with high probability.
The efficiency and correctness of \autoref{func:recursive} are given in the following theorem, whose proof is in \supplementref{app:unitary2}: 
\begin{theo}\label{thm:unitary2}
Under Assumptions \ref{ass:dAdM} and \ref{ass:ST_threshold}, for any confidence parameter $\kappa_0>0$, \autoref{func:recursive} satisfies the following conditions: 

\begin{enumerate}
   \item With probability $1-\kappa_0$, the output of \autoref{func:recursive} is a correct causal unravelling for $\map{C}$.
   \item The number of queries to $\map{C}$ is in the order of
    \begin{align}
        T_{\rm sample}=O\left(n^3 d_A^2 r_{\map{C}}^2 \chi_{\min}^{-4} \log (n \kappa_0^{-1}) \right)
    \end{align}
      where $r_{\map{C}}$ is the Kraus rank of $\map{C}$ in the standard form with all wires having dimension $d_A$.
   \item The computational complexity is in the order of $O(T_{\rm sample}n\log d_A)$.
\end{enumerate}
\end{theo}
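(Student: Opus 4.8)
The plan is to prove the three claims together, using a structural induction on the number of wires for correctness, a union bound over the independence tests to control the total failure probability, and the circuit of the Methods section as a black box for the cost of each individual test. For correctness (claim~1) I would induct on $n$, the number of input/output wires of the channel handed to \autoref{func:recursive}; the base case $n=1$ is the first branch of the algorithm. For the inductive step I would condition on the event that every independence test run during the whole execution returns the correct verdict. Under this event, \autoref{prop:last_tooth_informal} says that a scanned pair $(A_x,B_y)$ is accepted precisely when $\chione C {A_x} {A_{\neq x}B_{\neq y}}=0$, i.e.\ precisely when it is a genuine last tooth; hence if the loop accepts some pair then $\map C\in\set{Comb}[\dots,(A_x,B_y)]$, the reduced channel $\map C_{A_{\neq x},B_{\neq y}}$ is a well-defined $(n-1)$-to-$(n-1)$ channel still obeying Assumptions~\ref{ass:dAdM} and~\ref{ass:ST_threshold}, and the inductive hypothesis provides a causal unravelling of it to which appending $(A_x,B_y)$ gives one for $\map C$; while if the loop accepts no pair, \autoref{prop:last_tooth_informal} shows no single pair can be split off, so the trivial output $(P,Q)$ is the correct coarsest answer. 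Thus the output is correct whenever all tests are.

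The one point that needs its own argument is that Assumption~\ref{ass:dAdM} survives the recursion, and for this I would use the fact that removing a genuine last tooth cannot increase the Kraus rank. Since $(A_x,B_y)$ being a last tooth means that discarding $B_y$ erases all dependence of $\map C$ on its $A_x$ input, one has $\Tr_{B_y}\circ\map C=\map C_{A_{\neq x},B_{\neq y}}\otimes\Tr_{A_x}$, and passing to Choi states gives $\Tr_{B_y}[C]=C'\otimes(\mathbb 1_{A_x}/d_A)$, where $C'$ is the Choi state of $\map C_{A_{\neq x},B_{\neq y}}$. Comparing ranks, $d_A\,\rank(C')=\rank(\Tr_{B_y}[C])\le d_A\,\rank(C)$, hence $\rank(C')\le\rank(C)$. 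Iterating, every channel that occurs in the recursion has Kraus rank at most $r_{\map C}$, so all per-test cost parameters stay uniform across the recursion.

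Putting the pieces together, the recursion performs at most $n^2$ tests before the problem shrinks by one, so $T_{\rm test}(n)=O(n^2)+T_{\rm test}(n-1)=O(n^3)$ tests in total; by the Methods circuit, under Assumptions~\ref{ass:dAdM} and~\ref{ass:ST_threshold} a single test decides independence with error probability at most $\kappa$ using $O(d_A^2 r_{\map C}^2\chi_{\min}^{-4}\log\kappa^{-1})$ queries, the idea being to recast $\|\cdot\|_1$-closeness to a product state as a Hilbert--Schmidt quantity estimated by SWAP tests, paying only a $\mathrm{poly}(d_A,r_{\map C})$ factor in the $\|\cdot\|_1$-versus-$\|\cdot\|_2$ conversion (using the rank bound) and using $\chi_{\min}$ to fix the needed resolution. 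Choosing $\kappa=\kappa_0/T_{\rm test}(n)$ and union-bounding makes the probability that some test errs at most $\kappa_0$, which with the induction above gives claim~1 and turns the logarithm into $\log(n\kappa_0^{-1})$; multiplying the $O(n^3)$ tests by the per-test cost gives $T_{\rm sample}$, which is claim~2. Claim~3 then follows by noting that each query prepares a maximally entangled state on $O(n\log d_A)$ qubits, calls $\map C$ once, and runs a SWAP test on $O(n\log d_A)$ qubits, each step costing $O(n\log d_A)$ elementary operations with the classical bookkeeping subdominant, so the total computational cost is $O(T_{\rm sample}\,n\log d_A)$.

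The step I expect to be the main obstacle is the two-sided analysis of the independence test within this budget: it must never accept a correlated pair (soundness, without which the correctness induction fails) yet must accept with high probability whenever $\chione C S T=0$ (completeness, without which the recursion may not terminate with a valid unravelling), and both must hold using only the stated number of queries. This is precisely what the Methods construction delivers, and it is where Assumptions~\ref{ass:dAdM} and~\ref{ass:ST_threshold} are essential; the structural fact of the previous paragraph is what makes those per-test guarantees uniform all the way down the recursion.
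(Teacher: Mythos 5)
Your proposal is correct and follows essentially the same route as the paper's proof: a union bound over the $O(n^3)$ SWAP-test-based independence tests with $\kappa=\kappa_0/T_{\rm test}(n)$, a two-sided per-test guarantee obtained by converting the trace-norm criterion to a Hilbert--Schmidt estimate at a cost controlled by the Kraus rank, and an induction showing that splitting off a genuine last tooth cannot increase the Kraus rank (your identity $\Tr_{B_y}[C]=C'\otimes(\mathbb 1_{A_x}/d_A)$ is exactly the paper's Lemma on constant channels combined with its recursion lemma). The only cosmetic difference is that you treat the per-test guarantee as a black box from the Methods, whereas the paper spells it out as a separate lemma with explicit choices $\delta\le\chi_{\min}^2/(8d_A r_{\map C})$ and $\varepsilon=\delta/5$; the substance is the same.
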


 \autoref{thm:unitary2} guarantees that, under appropriate assumptions,  the sample complexity of our algorithm is polynomial in the number of input and output systems of the process under consideration.  This feature is in stark contrast with the exponential complexity of full process tomography. As a consequence, our algorithm offers a speedup over  for other algorithms, such as the one proposed in  Ref. \cite{giarmatzi2018quantum}, which require  process tomography as an intermediate step. 

Assumptions \ref{ass:dAdM} and \ref{ass:ST_threshold} guarantee that \autoref{func:recursive} produces an exact causal unravelling. However, both assumptions can be lifted if we only require an approximate causal unravelling, meaning that the process $\map{C}$ is within a certain error of another process compatible with the causal unravelling output by the algorithm. In \supplementref{app:approximate}, we formulate this approximate case and prove that the error is small under the condition that every marginal Choi state of $\map{C}$ obtained by taking only the first $k$ inputs and $k-1$ outputs in the causal unravelling of $\map{C}$ has polynomial rank up to a small error. This condition can be verified efficiently during the execution of \autoref{func:recursive}.

\medskip 
{\bf Causal unravelling with local observations.} 
We now show that, under some assumptions on the input-output relations, one can design  algorithms that have much lower sample complexity in terms of $n$ compared to \autoref{func:recursive}, and are more experimentally friendly, in that they require only local state preparation and local measurements.

In the Methods, we show an efficient algorithm to detect the pairwise correlations between input and output wires of $\map{C}$ with local state preparation and local measurements. The algorithm computes a Boolean matrix $\ind_{ij}$ such that, with high probability, for every $i$ and $j$, $A_i$ and $B_j$ are approximately independent whenever $\ind_{ij}=\True$, and are correlated whenever $\ind_{ij}=\False$.
With some assumptions, this Boolean matrix $\ind_{ij}$ is sufficient to give the exact causal unravelling. The first case is given by the following assumption on the process $\map{C}$:

\begin{assumption} \label{ass:totalorder}
$\map{C}$ is a quantum comb in $\set{Comb}[(A_{\sigma(1)},B_{\pi(1)}),\dots,(A_{\sigma(n)},B_{\pi(n)})]$, and there exists a constant $\chi_{\min}>0$ such that, 
for any pair of input and output wires $A_{\sigma(i)}$ and $B_{\pi(j)}$, if $j\geq i$, then $\chione C{A_{\sigma(i)}}{B_{\pi(j)}} \geq \chi_{\min}$.
\end{assumption}

\autoref{ass:totalorder} indicates a non-trivial correlation between any pair consisting of an input system and an output system,  with the property that the input system appears before the output system in the overall causal order. In other words, for any $j \geq i$, $C_{A_{\sigma(i)},B_{\pi(j)}}$ is away from $C_{A_{\sigma(i)}} \otimes C_{B_{\pi(j)}}$ by distance  $\chi_{\min}$. Meanwhile, \autoref{ass:totalorder} defines a total order of the input (output) wires, and ensures a unique causal unravelling that $\map{C}$ is compatible with. 
Under this assumption, if $A_i$ is the $k$-th input, namely $i=\sigma(k)$, $A_i$ is correlated with $n-k+1$ output wires including every output $B_{\pi(j)}$ with $j\geq k$, and is independent of the other outputs. 
In other words, if we find $A_i$ is correlated with exactly $c_A(i)$ output wires, it must be the $(n-c_A(i)+1)$-th input. With this, the order of input wires can be exactly determined, and a similar statement can be applied to order the output wires.

In \supplementref{app:totalorder}, we give the details of this algorithm, and analyze its efficiency given by the following theorem:

\begin{theo}\label{thm:order_linear}
For a quantum comb $\map{C} \in \set{Comb}[(A_{\sigma(1)},B_{\pi(1)}),\dots,(A_{\sigma(n)},B_{\pi(n)})]$ satisfying \autoref{ass:totalorder}, there is an algorithm that satisfies the following conditions:
\begin{enumerate}
   \item With probability $1-\kappa$, the algorithm outputs the correct causal unravelling $(A_{\sigma(1)},B_{\pi(1)}),\dots,(A_{\sigma(n)},B_{\pi(n)})$.
   \item The algorithm uses only local state preparations and local measurements and the number of queries to $\map{C}$ is in the order of
\begin{align} \label{eq:order_linear}
    N = O\left(d_A^6 d_B^6 \chi_{\min}^{-2} \log(n d_A d_B \kappa^{-1}) \right) \,,
\end{align}
   where $d_A := \max_i d_{A_i}, d_B := \max_j d_{B_j}$. 
   \item The computational complexity is in the order of $O( Nn(n + d_A + d_B^4))$.
\end{enumerate}

\end{theo}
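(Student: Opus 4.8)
The plan is to reduce the theorem to the pairwise correlation‑detection subroutine of the Methods together with a short combinatorial argument, and then to account for the probability, sample and computational costs.

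\textbf{Step 1 (a clean gap for every pair).} I would first show that, for a comb $\map{C}\in\set{Comb}[(A_{\sigma(1)},B_{\pi(1)}),\dots,(A_{\sigma(n)},B_{\pi(n)})]$, one has $\chione C{A_{\sigma(i)}}{B_{\pi(j)}}=0$ exactly whenever $j<i$. This is a direct consequence of the comb structure: the output $B_{\pi(j)}$ of the $j$-th tooth is produced from the memory of the previous teeth and the input $A_{\sigma(j)}$, hence depends causally only on $A_{\sigma(1)},\dots,A_{\sigma(j)}$; tracing the Choi state $C$ down to $A_{\sigma(i)}B_{\pi(j)}$ with $i>j$ replaces all other channel inputs by the maximally mixed state, and since $A_{\sigma(i)}$ is not among the inputs that $B_{\pi(j)}$ can depend on, the reduced channel ignores it, giving $C_{A_{\sigma(i)}B_{\pi(j)}}=C_{A_{\sigma(i)}}\otimes C_{B_{\pi(j)}}$. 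Combined with the hypothesis $\chione C{A_{\sigma(i)}}{B_{\pi(j)}}\ge\chi_{\min}$ for $j\ge i$ from \autoref{ass:totalorder}, this yields a dichotomy for every ordered pair: $\chione C{A_i}{B_j}$ is either $0$ or at least $\chi_{\min}$, with nothing in between. The Methods subroutine is designed exactly to separate these two regimes, so it returns $\ind_{ij}=\True$ on the independent pairs and $\ind_{ij}=\False$ on the correlated pairs.

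\textbf{Step 2 (reconstruction by counting).} Next I would argue that the correct matrix $\ind$ pins down $(\sigma,\pi)$. If $A_i$ is the $k$-th input, $i=\sigma(k)$, then by Step 1 it is correlated precisely with the $n-k+1$ outputs $B_{\pi(j)}$, $j\ge k$, so the count $c_A(i):=\#\{j:\ind_{ij}=\False\}$ equals $n-k+1$; since $k\mapsto n-k+1$ is a bijection of $\{1,\dots,n\}$, the $k$-th input is the unique $i$ with $c_A(i)=n-k+1$, which recovers $\sigma$. Symmetrically, if $B_j$ is the $l$-th output it is correlated precisely with the inputs $A_{\sigma(i)}$, $i\le l$, so $c_B(j):=\#\{i:\ind_{ij}=\False\}=l$ and the $l$-th output is the unique $j$ with $c_B(j)=l$, which recovers $\pi$. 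The algorithm thus calls the Methods subroutine to obtain $\ind$, computes the $2n$ counts $c_A(i)$ and $c_B(j)$, sorts them to read off $\sigma$ and $\pi$, and outputs $(A_{\sigma(1)},B_{\pi(1)}),\dots,(A_{\sigma(n)},B_{\pi(n)})$. For item 1, I would run the Methods subroutine with per-pair confidence $\kappa/n^2$ and take a union bound over the $n^2$ pairs, so that all entries of $\ind$ are simultaneously correct with probability at least $1-\kappa$; conditioned on this event, Steps 1--2 output the exact causal unravelling.

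\textbf{Step 3 (sample and computational cost).} Because the Methods subroutine reuses one batch of query data for all pairs, the union bound only costs a multiplicative $\log(n^2/\kappa)$ factor; combining it with the tomographic $d_A^6 d_B^6\chi_{\min}^{-2}$ and $\log(d_Ad_B)$ scaling established there gives the claimed $N=O\big(d_A^6 d_B^6\chi_{\min}^{-2}\log(nd_Ad_B\kappa^{-1})\big)$ queries, each of which is a local state preparation followed by a local measurement (item 2). For item 3, the computation is that of the Methods subroutine on the $N$ samples, namely $O(Nn(n+d_A+d_B^4))$, plus the $O(n^2)$ needed to form the counts and the $O(n\log n)$ sort at the end, which are dominated. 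The main obstacle is not in this argument at all — the combinatorics is immediate — but in the Methods: constructing an independence test between $A_i$ and $B_j$ that uses only local operations, separates $\chi_1=0$ from $\chi_1\ge\chi_{\min}$ with the stated $d_A^6 d_B^6\chi_{\min}^{-2}$ sample cost, and can be evaluated for all $n^2$ pairs from a single pool of $N$ queries so that the $n$-dependence stays logarithmic; verifying that the shared-data estimators remain accurate simultaneously across all pairs (so that the union bound is legitimate) is the step that needs the most care.
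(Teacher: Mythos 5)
Your proposal is correct and follows essentially the same route as the paper's proof in the Supplementary Notes: run the shared-data pairwise independence subroutine with threshold between $0$ and $\chi_{\min}$ and per-pair confidence $\kappa/n^2$, union-bound over the $n^2$ pairs, recover $\sigma$ and $\pi$ by counting and sorting the correlated pairs, and inherit the sample and computational costs from the subroutine's analysis. Your Step 1 merely makes explicit the no-signalling fact ($\chi_1=0$ for $j<i$) that the paper leaves implicit in its reading of Assumption 3, which is a harmless and correct addition.
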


Note the sample complexity of this algorithm grows only logarithmically with $n$.

A similar idea could be adopted to the case where the process belongs to a special case of Markovian processes \cite{costa2016quantum,giarmatzi2018quantum,pollock2018operational,berk2021resource}, where each output depends only on one previous input and each input affects only one later output. This indicates that the process is decomposable to a tensor product of $n$ channels each with one input and one output. 
In our problem, the order of inputs and outputs is unknown, and we have the following assumption:

\begin{assumption} \label{ass:memoryless}
The process $C$ is a tensor product of $n$ channels, $\map{C} = \bigotimes_{i=1}^n \map{C}_i$ with $\map{C}_i: \spc{H}_{A_{i }}\to \spc{H}_{B_{\pi'(i)}}$ for some permutation $\pi'$.
\end{assumption}

Since each output is related to at most one input and each input affects at most one output, after obtaining $\ind_{ij}$, we can obtain the causal unravelling by matching each input-output pair $(A_i,B_j)$ with $\ind_{ij}=\False$. 

If there exists a threshold $\chi_{\min} >0$ such that either $\chione C{A_i}{B_j}=0$ or $\chione C{A_i}{B_j} \geq \chi_{\min}$ holds for every $A_i$ and $B_j$, then the algorithm has the same complexity as in \autoref{thm:order_linear} that is logarithmic in $n$. However, in case a threshold $\chi_{\min}$ is not known, we can still show that the algorithm is efficient yet produces an approximate answer with an error bound defined by the diamond norm  \cite{kitaev2002classical}. The diamond norm, also known as the completely bounded trace norm, is a distance measure between channels defined for $\map{C},\map{D}: L(\spc{H}_A) \to L(\spc{H}_B)$ as $\| \map{C} - \map{D} \|_\diamond := \max_{\rho \in S(\spc{H}_A \otimes \spc{H}_A)} \| (\map{C}\otimes \map{I}_A) (\rho) - (\map{D}\otimes \map{I}_A) (\rho) \|_1$, where $\map{I}_A:L(\spc{H}_A) \to L(\spc{H}_A)$ is the identity map. The diamond norm measures the maximum probability to distinguish two channels, and is tighter than the trace distance since $\|\map{C} - \map{D} \|_\diamond \leq \|C - D\|_1$ for all channels $\map{C}$ and $\map{D}$ with Choi states $C$ and $D$. 
The error bound is stated in the following theorem, whose proof is in \supplementref{app:memoryless}.

\begin{theo}\label{thm:memoryless}
For a quantum process $\map{C}$ satisfying \autoref{ass:memoryless}, there is an algorithm that outputs a causal unravelling $(A_1,B_{\pi(1)}),\dots,(A_n,B_{\pi(n)})$ satisfying the following conditions:
\begin{enumerate}
   \item With probability $1-\kappa$, the causal unravelling is approximately correct in the following sense:
      \begin{align}
          \exists \map{D} \in \set{Comb}[ (A_1,B_{\pi(1)}),\dots,(A_n,B_{\pi(n)}) ], ~ \|\map{C} - \map{D} \|_\diamond \leq \varepsilon \,.
      \end{align}
   \item The algorithm uses only local state preparations and local measurements and the number of queries to $\map{C}$ is in the order of
\begin{align}
    N = O\left(n^2 d_A^8 d_B^6 \varepsilon^{-2} \log(n d_A d_B \kappa^{-1}) \right)\,,
\end{align}
   where $d_A := \max_i d_{A_i}$ and $d_B := \max_j d_{B_j}$. 
   \item The computational complexity is in the order of $O( Nn(n + d_A + d_B^4))$.
\end{enumerate}
\end{theo}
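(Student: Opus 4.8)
\medskip
\noindent The plan is to reduce \autoref{thm:memoryless} to the pairwise independence test of the Methods, to recover the causal unravelling by a combinatorial matching, and to bound the diamond-norm error by a perturbation argument that ``deletes'' the teeth whose correlation is too weak to be detected. First I would record the consequences of \autoref{ass:memoryless}: the Choi state factorizes as $C=\bigotimes_{i=1}^{n}C_i$, where $C_i$ is the normalized Choi state of $\map C_i:\spc H_{A_i}\to\spc H_{B_{\pi'(i)}}$, whose input marginal is $I_{A_i}/d_{A_i}$ and whose output marginal is $\sigma_i:=\map C_i(I_{A_i}/d_{A_i})$. Hence $\chione C{A_i}{B_{\pi'(i')}}=0$ for $i\neq i'$, while $\eta_i:=\chione C{A_i}{B_{\pi'(i)}}=\bigl\|\,C_i-(I_{A_i}/d_{A_i})\otimes\sigma_i\,\bigr\|_1$ may be arbitrary.

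Next I would invoke the local-observation subroutine of the Methods (the same one behind \autoref{thm:order_linear}) with resolution $\delta$ and total failure probability $\kappa$, obtaining a Boolean matrix $\ind_{ij}$ such that, with probability $1-\kappa$: $\ind_{ij}=\False$ implies $\chione C{A_i}{B_j}>0$, and $\ind_{ij}=\True$ implies $\chione C{A_i}{B_j}\le\delta$. This step uses only local preparations and measurements and costs $O\!\bigl(d_A^{6}d_B^{6}\,\delta^{-2}\log(nd_Ad_B\kappa^{-1})\bigr)$ queries. On the $1-\kappa$ event, $\ind_{ij}=\False$ can occur only when $j=\pi'(i)$ (since $\chione C{A_i}{B_j}=0$ otherwise), so the ``correlated'' pairs form a partial matching between inputs and outputs; I would extend it arbitrarily to a perfect matching, giving a permutation $\pi$, and output $(A_1,B_{\pi(1)}),\dots,(A_n,B_{\pi(n)})$. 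Put $S:=\{i:\ind_{i,\pi'(i)}=\False\}$ and $W:=[n]\setminus S$, so that $\pi(i)=\pi'(i)$ for $i\in S$ and $\eta_i\le\delta$ for $i\in W$.

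To control the error I would take $\map D:=\bigl(\bigotimes_{i\in S}\map C_i\bigr)\otimes\bigl(\bigotimes_{i\in W}\map R_i\bigr)$, where $\map R_i(\rho):=\Tr[\rho]\,\sigma_i$ is the constant channel $\spc H_{A_i}\to\spc H_{B_{\pi'(i)}}$ with fixed output $\sigma_i$; since a constant channel is insensitive to the label of its output wire, $\map D$ may equally be written as $\bigotimes_i\map D_i$ with $\map D_i:\spc H_{A_i}\to\spc H_{B_{\pi(i)}}$, so $\map D\in\set{Comb}[(A_1,B_{\pi(1)}),\dots,(A_n,B_{\pi(n)})]$ with trivial memory wires. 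The relation between the diamond norm and the trace norm of the normalized Choi states gives $\|\map C_i-\map R_i\|_\diamond\le d_{A_i}\,\eta_i\le d_A\,\delta$ for $i\in W$, and telescoping over the tensor factors (tensoring with a channel does not increase the diamond norm) yields $\|\map C-\map D\|_\diamond\le\sum_{i\in W}\|\map C_i-\map R_i\|_\diamond\le n\,d_A\,\delta$. Choosing $\delta=\Theta\!\bigl(\varepsilon/(n d_A)\bigr)$ makes this at most $\varepsilon$ and turns the query count into $N=O\!\bigl(n^2 d_A^{8}d_B^{6}\,\varepsilon^{-2}\log(n d_A d_B\kappa^{-1})\bigr)$; the computational cost is that of the Methods subroutine applied over $N$ queries plus the $O(n^2)$ matching extraction, i.e. the stated $O\!\bigl(Nn(n+d_A+d_B^4)\bigr)$.

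The step I expect to be the crux is the deletion argument: certifying that a tooth whose Choi state is only $\delta$-close to a product can be replaced by a fixed-output channel at diamond-norm cost $O(d_A\delta)$ \emph{uniformly over the output wire it is later reattached to}, and that over the (possibly many) undetected teeth these costs accumulate only linearly in $n$, so that the required test resolution, and hence the sample size, scales as claimed. The remaining ingredients — the union bound over the $n^2$ tests and the observation that the $\False$-pairs form a partial matching — are routine.
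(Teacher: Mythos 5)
Your proposal is correct and follows essentially the same route as the paper's proof: run the pairwise independence subroutine at resolution $\Theta(\varepsilon/(nd_A))$, note that the detected correlations form a partial matching forced to agree with $\pi'$, replace each undetected tooth by the constant channel emitting its output marginal (which is insensitive to output relabeling, so the arbitrary completion of the matching is harmless), and telescope the per-tooth diamond-norm cost $d_{A_i}\,\chione C{A_i}{B_{\pi'(i)}}$ over the at most $n$ weak teeth. The only differences are notational (your $\delta$ versus the paper's $2\chi_-$ with $\chi_-=\varepsilon_0$).
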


\medskip 
{\bf Causal unravelling with interactions between more inputs and outputs.} 
In the algorithms shown so far, we assumed that the process under consideration admits a causal unravelling where each interaction involves exactly one input and one output. More generally, \autoref{func:recursive} can be easily extended to the scenario where each interaction involves at most $c$ inputs and $c$ outputs of the original process. 
Instead of considering each wire separately, the idea is to consider a subset of at most $c$ input (output) wires and perform independence tests on the subsets.

In \autoref{func:recursive}, one enumerates an input-output pair $(A_x,B_y)$ and checks whether it is the last tooth by performing an independence test between $A_x$ and $A_{\neq x}B_{\neq y}$. To deal with larger $c$, we replace this procedure by enumerating a subset of input wires $P\subset\{A_1,\dots,A_{n_{\rm in}}\}$ and a subset of output wires $Q\subset\{B_1,\dots,B_{n_{\rm out}}\}$, satisfying $|P|\leq c$ and $|Q|\leq c$. Then we check whether $(P,Q)$ is the last tooth of $\map{C}$, which, according to \autoref{prop:last_tooth_informal}, is equivalent to checking the independence between $P$ and $A_{\notin P}B_{\notin Q}:=\{A_i|A_i\notin P\}\cup\{B_j|B_j \notin Q\}$. The independence tests can still be implemented with SWAP tests.
Like \autoref{func:recursive}, after we decide $(P,Q)$ to be the last tooth, the wires in $P$ and $Q$ are removed from consideration, and the problem is reduced to the causal unravelling of a smaller channel. This process is done recursively until one reaches the bottom case. 
We give the detailed algorithm and analysis in \supplementref{app:largerc}. For constant $c$, under some assumptions, the complexity of the algorithm is still polynomial in $n$ and $d_A$, while the exponent depends on $c$.

\section{Discussion}


In this paper we developed an efficient algorithm for  discovering  linear causal structures between the inputs and outputs of a multipartite quantum process.  Our algorithm provides a partial solution to the more general quantum causal discovery problem, whose goal is to produce a full causal graph describing arbitrary causal correlations in an arbitrary  set of quantum variables.  Our algorithm can be used as the first step for quantum causal discovery, and to obtain the full causal structure, additional tests may be adopted to detect signalling between more subsets of input and output wires. 
Since the most general quantum causal discovery problem is intrinsically hard, an interesting direction for future work is  to examine  to what extent the problem of quantum causal discovery can be solved by an efficient algorithm in scenarios beyond the linear structure analyzed in this work.

The efficiency of our algorithms relies on some assumptions. For \autoref{func:recursive}, the low-rank assumption is the key in both the exact case (\autoref{ass:lowrank}) and the approximate case discussed in \supplementref{app:approximate}. This assumption avoids the computational difficulty of deciding whether a completely general state is a product state \cite{gutoski2013quantum}. Physically, a quantum process has a low rank if the number of uncontrolled particles entering and/or exiting the interaction region is small. In this picture, the uncontrolled  particles in the input can be  regarded as sources of environmental noise, and the uncontrolled  particles in the output are responsible for information loss in the process. 
Intuitively, without the low-rank assumption, the causal correlations will be obscured by  the noise, and it will be hard to discover them without additional prior knowledge.
On the other hand, if one has prior knowledge, as in the case of processes satisfying Assumptions \ref{ass:totalorder} and \ref{ass:memoryless}, the low-rank assumption may be lifted.




The ability to infer the underlying causal structure of a process is useful for a variety of applications. 
Classically, discovering causal relationships is the goal of many research areas with numerous applications in social and biomedical sciences \cite{spirtes2000causation} such as the construction of the gene expression network \cite{spirtes2000constructing}. Causal discovery allows us to understand complex systems whose internal structures are not directly accessible, and to discover possible models for the internal mechanisms. Quantum causal discovery, likewise, enables the modelling of quantum physical processes with inaccessible internal structure, for example, discovering the individual interactions in scattering experiments. Below we list some specific examples where our causal unravelling algorithms can be applied to the detection of correlations and the modelling of internal structures of complex processes.




First, causal relations among quantum variables  are relevant to the study of  quantum networks \cite{kimble2008quantum,elliott2002building,wehner2018quantum}, where the presence of a causal relation between  two systems can be used to test whether it is possible to send signals  from one node to another.   
In a realistic setting, the signalling patterns within a quantum network may change dynamically, depending on the number of users of the network at a given moment of time, on the way the messages are routed from the senders to the receivers, and also on changes in the environment, which may affect the  availability of transmission paths between nodes.   Such a dynamical structure occurs frequently in classical wireless networks \cite{johnson1996dynamic,royer1999review}, and is likely to arise in a future quantum internet. In this context, our algorithms provide   an efficient way to detect dynamical changes in  the availability of data transmission paths.


The detection of causal relations is also relevant to the verification of quantum devices,  as it can be used as an initial  test to determine  whether a given quantum device  generates input-output correlations with a desired causal structure. Such a test could serve as an initial screening to rule out devices that are not suitable for a given task, and could be followed by more refined  quantum benchmarks  \cite{bai2018test} which quantify how well the device performs a desired task.   In this context, the benefit of the causal unravelling test  is that it could save the effort of performing more refined tests in case the process under consideration does not comply with the desired causal structure.


Finally, our causal unravelling algorithm can be used as a preliminary step to full process tomography.  
 By detecting the causal structure of multipartite quantum processes, one can sometimes design a tailor-made tomography scheme that ignores   unnecessary correlations, and achieves full process tomography without requiring an exponentially large number of measurement setups.  For example, a process that admits a causal unravelling with systems of bounded dimension at every step    can be efficiently represented by a tensor network state \cite{fannes1992finitely,verstraete2008matrix}, for which   tomography  can be performed efficiently \cite{cramer2010efficient}.  
In a quantum communication network, efficient tomography of the transmission paths is crucial for the design of encoding, decoding and calibration schemes for more efficient data transmission. In physics experiments, the causal structure and tomography data are useful for modelling the underlying physical process, for example, by finding the smallest quantum model that reproduces the observed data \cite{gu2012quantum,monras2016quantum,thompson2017using}.
More generally, characterizing the causal structure of a multipartite process as a tensor network enables the use of efficient protocols that exploits the tensor network structure, including simulation protocols \cite{verstraete2008matrix,shi2006classical,vidal2008class} and compression protocols \cite{bai2020quantum}.

\section{Methods}

{\bf Efficient tests for the last tooth via the SWAP test.} 
In the Results, we have given the framework of \autoref{func:recursive}. In this section, we discuss 
 how the tests for the last tooth, namely \autoref{line:ind} of \autoref{func:recursive}, can be carried out efficiently.

Consider a process $\map{C}$ of three input wires $A_1,A_2,A_3$ and three output wires $B_1,B_2,B_3$, and suppose that we want to test  whether $(A_1,B_1)$ is the last tooth, which is equivalent to the independence test between $A_1$ and $A_2A_3B_2B_3$ according to \autoref{prop:last_tooth_informal}. Testing the independence between $A_1$ and $A_2A_3B_2B_3$ can be converted to the estimation of $\chione{C}{A_1}{A_2A_3B_2B_3} = \|C_{A_1A_2A_3B_2B_3} - C_{A_1}\otimes C_{A_2A_3B_2B_3}\|_1$, which is the distance between marginal Choi states. Each copy of $C_{A_1A_2A_3B_2B_3}$ or $C_{A_2A_3B_2B_3}$ can be prepared with one use of the process $\map{C}$. Note that $C_{A_1}$ equals to $I_{A_1}/d_{A_1}$ by definition of a CPTP map. Given the ability to prepare the marginal Choi states, we now consider the estimation of their distance, which gives the value of $\chione{C}{A_1}{A_2A_3B_2B_3}$. In the following, we first talk about the estimation of another distance measure, the Hilbert-Schmidt distance, and use it to bound the trace distance as used by $\chi_1$.

\tikzset{C1to4/.pic={
    \node[tensor, minimum height=20mm] (C2) at (0,0) {$\map{C}$};
    \coordinate (C1) at ($(C2)+(0,0.75)$);
    \coordinate (C0) at ($(C1)+(0,0.75)$);
    \coordinate (C3) at ($(C2)-(0,0.75)$);
    \coordinate (C4) at ($(C3)-(0,0.75)$);
    \coordinate (C5) at ($(C4)-(0,0.75)$);
    \foreach \i in {0,1,2,3,4,5}
    {
        \coordinate (A\i) at ($(C\i-|C2.west)+(-0.85,0)$);
        \coordinate (B\i) at ($(C\i-|C2.east)+(0.85,0)$);
    }
    \coordinate (grd) at ($(B1)+(0.25,-0.1)$);
    \foreach \i in {1,2,3}
    {
        \draw[thick] (C\i-|C2.west) -- node[above, midway]{$A_{\i}$} (A\i);
        \draw[thick] (C\i-|C2.east) -- node[above, midway]{$B_{\i}$} (B\i);
    }
    \pic at (grd) {ground};
    \draw[thick] (grd) |- (B1);
    \node[prepare tall, minimum height=25mm, anchor=east] (phi) at ($(A3)!0.5!(A4)-(0,0)$) {$\ket{\Phi_+}_{A_2A_3A_2'A_3'}$};
    \draw[thick] (phi.east|-A2) -- (A2);
    \draw[thick] (phi.east|-A3) -- (A3);
    \draw[thick] (phi.east|-A4) -- (C2.east|-A4) -- node[above, midway]{$A_2'$} (B4);
    \draw[thick] (phi.east|-A5) -- (C2.east|-A5) -- node[above, midway]{$A_3'$} (B5);
    }
}

Generally, the Hilbert-Schmidt distance between two states $\rho$ and $\sigma$ is defined as $\|\rho - \sigma\|_2$, where $\|X\|_2:= \sqrt{\Tr[X^\dag X]}$ denotes the Schatten 2-norm, also known as the Frobenius norm. It is related to the trace distance by the following inequality \cite{coles2019strong}:
\begin{align}\label{eq:norm12_main}
    2\|\rho - \sigma\|_2^2 \leq \|\rho - \sigma\|_1^2 \leq \frac{4\rank(\rho)\rank(\sigma)}{\rank(\rho)+\rank(\sigma)} \|\rho - \sigma\|_2^2 \,.
\end{align}

The Hilbert-Schmidt distance between two quantum states can be estimated via SWAP tests \cite{buhrman2001quantum}. 
  The SWAP test uses the quantum circuit in \autoref{fig:SWAP} to estimate $\Tr[\rho\sigma]$ for two given quantum states $\rho$ and $\sigma$.

\newcommand{\cS}{\mathsf{CSWAP}}
\newcommand{\swap}{\mathsf{SWAP}}

\begin{figure}[h]
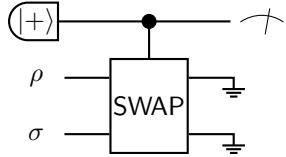

\iffigures
$$\begin{array}{ccc}
    \begin{mathtikz}
        \node[tensor, minimum height=12.5mm] (swap) at (0,-0.375) {$\swap$};
        \node (phi1) at (-1.5,0) {$\rho$};
        \node (phi2) at (-1.5,-0.75) {$\sigma$};
        \node[prepare] (c) at (-1.5,0.75) {$\ket{+}$};
        \node (right) at (1.125, 0) {};
        \node (left) at (-1.125, 0) {};
        \node (measure) at (1.5,0.75) {\tikz{\measurement}};
        \draw[thick] (phi1-|left) -- (phi1-|swap.west);
        \draw[thick] (phi2-|left) -- (phi2-|swap.west);
        \coordinate (grd1) at ($(right|-phi1) - (0,0.15)$);
        \coordinate (grd2) at ($(right|-phi2) - (0,0.15)$);
        \pic at (grd1) {ground};
        \pic at (grd2) {ground};
        \draw[thick] (swap.east|-phi1) -| (grd1);
        \draw[thick] (swap.east|-phi2) -| (grd2);
        \draw[thick] (c) -- (measure);
        \fill (swap|-c) circle [radius=0.1];
        \draw[thick] (swap|-c) -- (swap);
    \end{mathtikz}
\end{array}$$
\fi
\caption{\label{fig:SWAP}{\bf The SWAP test circuit.} The circuit consists of a controlled-SWAP gate with control qubit initialized to $\ket+$. Measuring the control system under the $\{\ket{+},\ket{-}\}$ basis yields the outcome $\ket{+}$ with probability $(1+\Tr[\rho\sigma])/2$. The ground symbol means discarding the system.}
\end{figure}

  If we run the circuit in \autoref{fig:SWAP} for $N$ times and let $c_+$ be the number of times observing outcome $\ket +$, then $2c_+/N - 1$ is an estimate of $\Tr[\rho\sigma]$. 
  The algorithm that yields an estimate of $\Tr[\rho\sigma]$ is as follows, which produces an estimate with error no more than $\varepsilon$ with probability $1-\kappa$ as shown in \autoref{lem:SWAP}.

  \begin{function}[H]
      \caption{SWAPTEST($\rho$, $\sigma$, $\varepsilon$, $\kappa$)}\label{func:SWAPTEST}
      \Indentp{0.5em}
      \SetInd{0.5em}{1em}
      \SetKwInOut{Preprocessing}{Preprocessing}
      \SetKwInOut{Input}{Input}
      \SetKwInOut{Output}{Output}
      \SetKw{Next}{next}
      \SetKwData{ind}{ind}
      \SetKwData{last}{last}
      \SetKwData{purity}{purity}
      \SetKwData{maxpurity}{maxpurity}

      \ResetInOut{Input}
      \Input{Quantum states $\rho$ and $\sigma$ (accessed by oracles that generate the states), error threshold $\varepsilon$, confidence $\kappa$}
      \ResetInOut{Output}
      \Output{Approximate value of $\Tr[\rho\sigma]$}
      \BlankLine
      $N \gets \lceil 2\varepsilon^{-2}\log(2/\kappa) \rceil$\;
      Run the circuit in \autoref{fig:SWAP} for $N$ times. Let $c_+$ be the number of outcome $\ket +$\;
      Return $2c_+/N - 1$\;
  \end{function}

  \begin{lem}\label{lem:SWAP}
  With probability $1-\kappa$, the SWAP test estimates $\Tr[\rho\sigma]$ within error $\varepsilon$.
  \end{lem}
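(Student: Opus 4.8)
The plan is to prove \autoref{lem:SWAP} by the standard Hoeffding-bound argument for the SWAP test. The circuit in \autoref{fig:SWAP} outputs, on each of the $N$ independent runs, a Bernoulli random variable $X_k$ that equals $1$ if the control is measured in $\ket{+}$ and $0$ otherwise. As recalled in the caption of \autoref{fig:SWAP}, the probability of outcome $\ket{+}$ is $p := (1 + \Tr[\rho\sigma])/2$, so each $X_k$ has mean $p$. Writing $\hat p := c_+/N = \frac1N\sum_{k=1}^N X_k$, the estimator returned by \autoref{func:SWAPTEST} is $\hat\tau := 2\hat p - 1$, whose expectation is exactly $2p-1 = \Tr[\rho\sigma]$. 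Thus the estimator is unbiased, and $|\hat\tau - \Tr[\rho\sigma]| \le \varepsilon$ is equivalent to $|\hat p - p| \le \varepsilon/2$.

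\textbf{Concentration step.} Since the $X_k$ are i.i.d.\ and take values in $[0,1]$, Hoeffding's inequality gives
\begin{align}
    \Pr\!\left[ |\hat p - p| > \frac{\varepsilon}{2} \right] \le 2 \exp\!\left( -2 N \left(\frac{\varepsilon}{2}\right)^2 \right) = 2 \exp\!\left( -\frac{N \varepsilon^2}{2} \right).
\end{align}
Substituting the choice $N = \lceil 2\varepsilon^{-2}\log(2/\kappa)\rceil \ge 2\varepsilon^{-2}\log(2/\kappa)$ made in \autoref{func:SWAPTEST} yields $2\exp(-N\varepsilon^2/2) \le 2\exp(-\log(2/\kappa)) = \kappa$. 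Hence with probability at least $1-\kappa$ we have $|\hat p - p|\le \varepsilon/2$, i.e.\ $|\hat\tau - \Tr[\rho\sigma]| \le \varepsilon$, which is the claim.

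\textbf{Where the work really is.} The inequality itself is routine; the only point that needs care is the verification that the control-qubit statistics are exactly $\Pr[\ket{+}] = (1+\Tr[\rho\sigma])/2$. I would include a one-line check: preparing the control in $\ket{+}$, applying the controlled-SWAP, and post-selecting with the projector $\ketbra{+}$ on the control leaves the target registers in the (unnormalized) state $\tfrac14(\mathbb{1} + \swap)(\rho\otimes\sigma)(\mathbb{1}+\swap)$ up to the control, whose trace is $\tfrac14\big(\Tr[\rho]\Tr[\sigma] + 2\Tr[\swap\,(\rho\otimes\sigma)] + \Tr[\rho]\Tr[\sigma]\big) = \tfrac12(1 + \Tr[\rho\sigma])$, using $\Tr[\swap\,(\rho\otimes\sigma)] = \Tr[\rho\sigma]$ and $\Tr[\rho]=\Tr[\sigma]=1$. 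With this identity in hand the rest is the Hoeffding estimate above, so I do not anticipate any genuine obstacle; the main thing to get right is matching the constants in $N$ to the failure probability $\kappa$.
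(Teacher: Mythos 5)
Your proof is correct and follows essentially the same route as the paper's: apply Hoeffding's inequality to the Bernoulli outcomes with deviation $\varepsilon/2$ for $c_+/N$, then substitute $N = \lceil 2\varepsilon^{-2}\log(2/\kappa)\rceil$ to get failure probability $\kappa$. The only addition is your explicit verification that $\Pr[\ket{+}] = (1+\Tr[\rho\sigma])/2$, which the paper simply takes as given from the caption of \autoref{fig:SWAP}.
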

  \begin{proof}
  For each run, the probability that the outcome is $\ket +$ is $(1+\Tr[\rho\sigma])/2$.
  By Hoeffding's inequality,
  \begin{align}
  \nonumber    \Pr \left[ \left| \frac{c_+}{N} - \frac{1+\Tr[\rho\sigma]}{2} \right| \leq \varepsilon/2 \right] & \geq 1-2e^{-\varepsilon^2 N /2} \\
      \Pr \left[ \left| 2c_+/N - 1 - \Tr[\rho\sigma] \right| \leq \varepsilon \right] & \geq 1-2e^{-\varepsilon^2 ( 2\varepsilon^{-2}\log(2/\kappa)) /2} = 1-\kappa
  \end{align}
  \end{proof}

  The SWAP test is efficient in the sense that its circuit complexity is linear in the number of qubits representing the systems. 
  For $d$-dimensional states $\rho$ and $\sigma$, the controlled-SWAP gate acting on $\rho$ and $\sigma$ can be realized with $\lceil \log d \rceil$ controlled-SWAP gates acting on each pair of corresponding qubits of $\rho$ and $\sigma$, and thus can be implemented with $O(\log d)$ gates. 

  Running SWAP tests on $\rho \otimes \sigma$, $\rho \otimes \rho$, and $\sigma\otimes\sigma$, we are able to obtain estimates for $\Tr[\rho\sigma]$, $\Tr[\rho^2]$ and $\Tr[\sigma^2]$, respectively. From these estimates we can compute the Hilbert-Schmidt distance 
  between $\rho$ and $\sigma$ 
  according to the following equation:
  \begin{align}\label{eq:HS_from_Tr}
      \|\rho - \sigma\|_2^2 = \Tr[\rho^2] + \Tr[\sigma^2] - 2\Tr[\rho\sigma]\,.
  \end{align}

Coming back to the independence tests, by applying SWAP tests on marginal Choi states $C_{A_1A_2A_3B_2B_3}$ and $C_{A_1}\otimes C_{A_2A_3B_2B_3}$, we can estimate their Hilbert-Schimidt distance. Using Eq. (\ref{eq:norm12_main}), we can bound their trace distance, obtain a bound on $\chione{C}{A_1}{A_2A_3B_2B_3}$, determine the independence between $A_1$ and $A_2A_3B_2B_3$, and decide whether $(A_1,B_1)$ is the last tooth. This procedure is summarized in \autoref{func:findlast_direct}.

\begin{function}[H]
      \caption{checklast($\map{C}$, $A_x$, $B_y$, $\varepsilon$, $\delta$, $\kappa$)}\label{func:findlast_direct}
      \Indentp{0.5em}
      \SetInd{0.5em}{1em}
      \SetKwInOut{Preprocessing}{Preprocessing}
      \SetKwInOut{Input}{Input}
      \SetKwInOut{Output}{Output}
      \SetKw{Next}{next}
      \SetKw{Break}{break}
      \SetKwData{ind}{ind}
      \SetKwData{last}{last}
      \SetKwData{purity}{purity}
      \SetKwData{maxpurity}{maxpurity}
      \SetKwFunction{SWAPTEST}{SWAPTEST}

      \ResetInOut{Input}
      \Input{A quantum channel $\map{C}$, input wire $A_x$, output wire $B_y$, error thresholds $\varepsilon,\delta$, confidence $\kappa$}
      \ResetInOut{Output}
      \Output{Whether $(A_x,B_y)$ is the last tooth of $\map{C}$}
      \BlankLine

      Prepare the circuits that generate the states $C_1 := \Tr_{B_y}[C]$ and $C_2 := I_{A_x}/d_{A_x} \otimes \Tr_{A_xB_y}[C]$\;

      $p_{1} \gets $\SWAPTEST{$C_1$, $C_1$, $\varepsilon$, $\kappa$} \;
      $p_{2} \gets $\SWAPTEST{$C_2$, $C_2$, $\varepsilon$, $\kappa$} \;
      $p_{3} \gets $\SWAPTEST{$C_1$, $C_2$, $\varepsilon$, $\kappa$} \;
      \uIf(\tcp*[f]{The estimated distance is larger than $\delta$}){$p_{1} + p_{2} - 2 p_{3} > \delta$}
      {
          Return \False \tcp*{$(A_x,B_y)$ is not the last tooth}
      }
      \Else
      {
          Return \True \tcp*{$(A_x,B_y)$ is the last tooth}
      }
  \end{function}

This completes \autoref{func:recursive}. The choices of $\varepsilon$ and $\delta$ are done in the detailed error analysis in \supplementref{app:unitary2}.
To ensure the correctness and efficiency, we need to further ensure that Eq. (\ref{eq:norm12_main}) provides a useful bound by giving upper bounds on the ranks of the marginal Choi states. Such upper bounds can be obtained from \autoref{ass:lowrank}, and the details are in \supplementref{app:unitary2}.

We have shown the possibility of using the SWAP test as an efficient test of independence.
The SWAP test could in principle be replaced by any algorithm for estimating the Hilbert-Schmidt distance, with possibly lower sample complexity \cite{buadescu2019quantum}.
In a quantum network scenario, causal unravelling may be performed by multiple parties, each of which has access to one input system or one output system. 
A bonus of the SWAP test is that, since a controlled-SWAP gate for two multipartite states can be decomposed into controlled-SWAP gates on the local systems of each party, only the control qubit needs to be transferred from one party to another in each round of the test. Additionally, the control qubit is measured after each round, and therefore parties do not need to carry quantum memories over subsequent rounds.

\medskip 
{\bf Testing the independence between all input-output pairs.} 
In this section, we show the algorithm to test the independence between every pair of input and output wires, resulting in a Boolean matrix $\ind_{ij}$. 
The first step is to  collect statistics of the channel $\map{C}$.  When collecting data, we need to ensure that the information encoded in the quantum data are preserved, for which purpose we use informationally complete POVMs \cite{prugovevcki1977information}. The elements of such a POVM on Hilbert space $\spc{H}$ form a spanning set of $L(\spc{H})$, and the number of elements can be chosen as $(\dim\spc{H})^2$. 

We pick an informationally complete POVM $\{P_{A_i,\alpha}\}_{\alpha=1}^{d_{A_i}^2}$ for each input wire $A_i$ and $\{Q_{B_j,\beta}\}_{\beta=1}^{d_{B_j}^2}$ for each output wire $B_j$. We pick $N$ to be the number of queries to the channel $\map{C}$. In each query, we measure the Choi state of $\map{C}$ with the POVM
\begin{align}\label{eq:tomo_POVM}
  \left\{P_{A_1,\alpha_1} \otimes \dots \otimes P_{A_n,\alpha_n} \otimes Q_{B_1,\beta_1} \otimes \dots \otimes Q_{B_n,\beta_n}\right\}_{\alpha_1,\dots,\alpha_n,\beta_1,\dots,\beta_n}
\end{align}
This POVM can be realized with local measurements on each wire. Let $(\alpha_1^{(k)},\dots,\alpha_n^{(k)},\beta_1^{(k)},\dots,\beta_n^{(k)})$ be the outcome of the $k$-th query. We can write the outcomes in a matrix as shown in \autoref{fig:matrix}.

\begin{figure}[h]
\centering
\iffigures
\begin{tikzpicture}
    \matrix[row sep={7.5mm,between origins},nodes={inner sep=3mm}]
    {
        \node(r1){$k$}; & \node(c2){$A_1$}; & \node(c3){$\cdots$}; & \node(c4){$A_i$}; & \node(c5){$\cdots$}; & \node(c6){$A_n$}; & \node(c7){$B_1$}; & \node(c8){$\cdots$}; & \node(c9){$B_j$}; & \node(c10){$\cdots$}; & \node(c11){$B_n$};\\
        \node(r2){1}; & \node{$\alpha_1^{(1)}$}; & \node{$\cdots$}; & \node{$\alpha_i^{(1)}$}; & \node{$\cdots$}; & \node{$\alpha_n^{(1)}$}; & \node{$\beta_1^{(1)}$}; & \node{$\cdots$}; & \node{$\beta_j^{(1)}$}; & \node{$\cdots$}; & \node{$\beta_n^{(1)}$};\\
        \node(r3){2}; & \node{$\alpha_1^{(2)}$}; & \node{$\cdots$}; & \node{$\alpha_i^{(2)}$}; & \node{$\cdots$}; & \node{$\alpha_n^{(2)}$}; & \node{$\beta_1^{(2)}$}; & \node{$\cdots$}; & \node{$\beta_j^{(2)}$}; & \node{$\cdots$}; & \node{$\beta_n^{(2)}$};\\
        \node(r4){$\vdots$}; & \node{$\vdots$}; & \node{$\ddots$}; & \node{$\vdots$}; & \node{$\ddots$}; & \node{$\vdots$}; & \node{$\vdots$}; & \node{$\ddots$}; & \node{$\vdots$}; & \node{$\ddots$}; & \node{$\vdots$};\\
        \node(r5){$N$}; & \node{$\alpha_1^{(N)}$}; & \node{$\cdots$}; & \node{$\alpha_i^{(N)}$}; & \node{$\cdots$}; & \node{$\alpha_n^{(N)}$}; & \node{$\beta_1^{(N)}$}; & \node{$\cdots$}; & \node{$\beta_j^{(N)}$}; & \node{$\cdots$}; & \node{$\beta_n^{(N)}$};\\
    };
    \foreach \c in {2,3,5,6,7,8,10,11}
    {
        \fill[white,opacity=0.67] (c\c.west|-r1.north) rectangle (c\c.east|-r5.south);
    }
    \draw (r1.south west) -- (c11.south east);
    \draw (r1.north east) -- (r5.south east);
    \draw[dashed] (c4.north west) rectangle (c4.east|-r5.south);
    \draw[dashed] (c9.north west) rectangle (c9.east|-r5.south);
    \draw[<->] (c4.center|-r5.south) -- +(0,-0.5) -- node[above]{Independent?} ($(c9.center|-r5.south)+(0,-0.5)$) -- (c9.center|-r5.south);
\end{tikzpicture}
\fi
\caption{\label{fig:matrix}{\bf Matrix of measurement outcomes.} Each row corresponds to a query to the process, and each column corresponds to an input or output wire. When we investigate $A_i$ and $B_j$, we only look at the corresponding columns in the matrix and check the independence from data in these columns.}
\end{figure}

This matrix will be used to determine the independence between every pair of input and output wires. When we investigate $A_i$ and $B_j$, we only look at the corresponding columns in the matrix and check the independence from data in these columns. Since we are using an informationally complete POVM, the independence between the two columns of the matrix is equivalent to the independence between the input $A_i$ and output $B_j$.
With this idea, we will be able to compute a Boolean matrix $\ind_{ij}$ such that $\ind_{ij}=\True$ if and only if $A_i$ and $B_j$ are independent, up to an error related to $N$, the number of queries.

One may notice that the POVM in Eq. (\ref{eq:tomo_POVM}) is enough for a full process tomography of $\map{C}$. This is true if we pick an exponentially large $N$ as large as the square of the product of all input and output dimensions.
However, to compute $\ind_{ij}$ with a small error, $N$ does not need to be large. We show that $N$ could be chosen to grow only logarithmically with $n$, far less than a full process tomography.

\begin{lem}\label{lem:ind_informal}
Given a channel $\map{C}$ with input wires $A_1,\dots,A_n$ and output wires $B_1,\dots,B_n$, there exists an algorithm that satisfies the following conditions:
\begin{enumerate}
   \item With probability $1-n^2\kappa_0$, the algorithm produces an output satisfying:
      \begin{enumerate}
          \item if $\ind_{i,j} = \False$, then $\chi_1(C_{A_i,B_j}) > \chi_- -\varepsilon_0$
          \item if $\ind_{i,j} = \True$, then $\chi_1(C_{A_i,B_j}) \leq \chi_- + \varepsilon_0$
      \end{enumerate}
      where $\chi_->0$ is a threshold that can be freely chosen.
   \item The number of queries to $C$ is in the order of
      \begin{align}
         N = O\left(d_A^6 d_B^6\varepsilon_0^{-2} \log(d_A d_B \kappa_0^{-1}) \right)\,.
      \end{align}
\end{enumerate}
\end{lem}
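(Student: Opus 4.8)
The plan is to reduce, for each pair $(A_i,B_j)$, the quantum independence question to a classical estimation problem about the outcome distribution of the local POVMs, and then to control the statistical error with elementary concentration. First I would fix for every input wire an informationally complete POVM $\{P_{A_i,\alpha}\}_{\alpha=1}^{d_{A_i}^2}$ and for every output wire an IC-POVM $\{Q_{B_j,\beta}\}_{\beta=1}^{d_{B_j}^2}$, chosen so that the linear map $\mathcal{M}_{ij}\colon X\mapsto\bigl(\Tr[X\,(P_{A_i,\alpha}\otimes Q_{B_j,\beta})]\bigr)_{\alpha,\beta}$ is a well-conditioned bijection from the Hermitian operators on $\spc{H}_{A_i}\otimes\spc{H}_{B_j}$ onto $\R^{d_{A_i}^2 d_{B_j}^2}$ --- for instance a SIC-POVM (when available), or more generally any IC-POVM whose canonical dual frame elements have trace norm $O(d_{A_i}d_{B_j})$, so that $\|\mathcal{M}_{ij}^{-1}\|_{1\to1}=O(d_A d_B)$. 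Let $p_{ij}$ be the joint outcome distribution obtained by measuring the Choi state $C$ on the columns $A_i$ and $B_j$, and $p^A_{ij},p^B_{ij}$ its two marginals. Since each POVM resolves the identity, $\mathcal{M}_{ij}(C_{A_i,B_j})=p_{ij}$ and $\mathcal{M}_{ij}(C_{A_i}\otimes C_{B_j})=p^A_{ij}\otimes p^B_{ij}$, so the causality measure $\chione{C}{A_i}{B_j}=\|C_{A_i,B_j}-C_{A_i}\otimes C_{B_j}\|_1$ (written $\chi_1(C_{A_i,B_j})$ in the statement) satisfies, with $D_{ij}:=\|p_{ij}-p^A_{ij}\otimes p^B_{ij}\|_1$, the two-sided bound $D_{ij}\le d_A^2 d_B^2\,\chione{C}{A_i}{B_j}$ (elementary, bounding $\|P_{A_i,\alpha}\otimes Q_{B_j,\beta}\|_\infty$ and counting outcomes) and $\chione{C}{A_i}{B_j}\le\|\mathcal{M}_{ij}^{-1}\|_{1\to1}\,D_{ij}=O(d_A d_B)\,D_{ij}$ (from injectivity of $\mathcal{M}_{ij}$).

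The algorithm itself then proceeds as outlined in the text: collect $N$ i.i.d.\ outcomes of the product POVM of Eq.~(\ref{eq:tomo_POVM}); for each pair $(i,j)$, form the empirical joint distribution $\hat p_{ij}$ from the two relevant columns (whose single-column empiricals are precisely its marginals $\hat p^A_{ij},\hat p^B_{ij}$), linearly invert to obtain the estimate $\hat C_{A_i,B_j}:=\mathcal{M}_{ij}^{-1}(\hat p_{ij})$ of the two-wire marginal Choi operator, compute $\hat\chi_{ij}:=\|\hat C_{A_i,B_j}-\hat C_{A_i}\otimes\hat C_{B_j}\|_1$ with $\hat C_{A_i},\hat C_{B_j}$ the marginals of $\hat C_{A_i,B_j}$, and set $\ind_{ij}:=\True$ iff $\hat\chi_{ij}\le\chi_-$. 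For the error analysis I would first show $\|\hat p_{ij}-p_{ij}\|_1\le\eta$ with probability at least $1-\kappa_0$: applying Hoeffding entrywise and a union bound over the at most $d_A^2 d_B^2$ outcomes forces each entry within $\eta/(d_A^2 d_B^2)$, which holds once $N=\Omega\bigl(d_A^4 d_B^4\,\eta^{-2}\log(d_A d_B\kappa_0^{-1})\bigr)$. Since tracing out a subsystem is an $\ell_1$ contraction, the same bound propagates to $\hat p^A_{ij},\hat p^B_{ij}$, hence $\|\hat p^A_{ij}\otimes\hat p^B_{ij}-p^A_{ij}\otimes p^B_{ij}\|_1=O(\eta)$; pushing both through $\mathcal{M}_{ij}^{-1}$ and using the triangle inequality for the trace norm gives $\bigl|\hat\chi_{ij}-\chione{C}{A_i}{B_j}\bigr|=O\bigl(\|\mathcal{M}_{ij}^{-1}\|_{1\to1}\,\eta\bigr)=O(d_A d_B\,\eta)$.

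It then remains to set $\eta=\Theta(\varepsilon_0/(d_A d_B))$ so that this last estimate is at most $\varepsilon_0$. Then $\ind_{ij}=\False$ implies $\hat\chi_{ij}>\chi_-$, hence $\chione{C}{A_i}{B_j}>\chi_- -\varepsilon_0$, while $\ind_{ij}=\True$ implies $\hat\chi_{ij}\le\chi_-$, hence $\chione{C}{A_i}{B_j}\le\chi_- +\varepsilon_0$; a union bound over the $n^2$ pairs (which all reuse the same $N$ samples) yields the advertised success probability $1-n^2\kappa_0$. Substituting $\eta=\Theta(\varepsilon_0/(d_A d_B))$ into the sample-size condition gives $N=O\bigl(d_A^6 d_B^6\,\varepsilon_0^{-2}\log(d_A d_B\kappa_0^{-1})\bigr)$, and the per-pair postprocessing (linear inversion on a $d_{A_i}d_{B_j}$-dimensional operator and a trace-norm evaluation) is polynomial in $d_A,d_B$, as claimed. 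The only non-routine ingredient --- hence the main obstacle --- is the quantitative two-sided comparison between the trace norm $\chione{C}{A_i}{B_j}$ and the empirically accessible $\ell_1$ distance $D_{ij}$ with explicit polynomial constants, i.e.\ exhibiting an IC-POVM whose inverse frame operator is well conditioned so that $\|\mathcal{M}_{ij}^{-1}\|_{1\to1}=O(d_A d_B)$; once this is in place the concentration arguments and the bookkeeping of the polynomial factors are standard.
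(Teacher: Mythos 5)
Your proposal is correct and follows essentially the same route as the paper's proof (Lemmas \ref{lem:alg_ind} and \ref{lem:ind} via \autoref{alg:c1} and \autoref{alg:ind}): product informationally complete POVMs, entrywise Hoeffding bounds with a union bound over outcomes, linear inversion to a reconstructed two-wire Choi operator, a conditioning bound on the measurement map to convert the statistical error into a trace-norm error on $\chi_1$, and a final union bound over the $n^2$ pairs. The only differences are cosmetic: the paper quantifies the conditioning through the minimum eigenvalue of the frame operator ($\ell_2$ bookkeeping via \autoref{lem:2normbounds} followed by $\|X\|_1\le\sqrt{\rank(X)}\,\|X\|_2$) rather than through $\|\mathcal{M}_{ij}^{-1}\|_{1\to 1}$ and the dual frame, arriving at the same $O(d_A^6 d_B^6\varepsilon_0^{-2}\log(d_Ad_B\kappa_0^{-1}))$ bound for SIC-POVMs, and it replaces your direct measurement of the Choi state by sampling local input states $\psi_\alpha$ with probability $\Tr[P_\alpha]/d_{A}$, which reproduces exactly the same outcome statistics (Eq. (\ref{eq:channel_statistics})) while keeping the protocol local.
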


The detailed algorithm and the proof are in \supplementref{app:all_pairs}.


The Boolean matrix $\ind_{ij}$ gives a partial order of the wires: if $\ind_{ij}=\False$, then $A_i$ must be before $B_j$. This partial order may not produce the full ordering of the wires, but will be a convenient initial guess for the causal unravelling, since its complexity is much lower than the general algorithm \autoref{func:recursive} for large $n$. In the Results, we show that under Assumptions \ref{ass:totalorder} or \ref{ass:memoryless}, this partial order is enough to infer the full order.


Furthermore, the matrix of outcomes (\autoref{fig:matrix}) is a conversion from the quantum process to classical data, making it possible to adopt classical causal discovery algorithms \cite{heinze2018causal,spirtes2000causation,chickering2002optimal,tsamardinos2006max}. For example, one could use algorithms based on the graph model, and try to find a graph in the form of \autoref{fig:bipartite_graph} that best fits the observation data in \autoref{fig:matrix}. 


\section{Data availability}
The authors declare that the data supporting the findings of this study are available within the paper and in the Supplementary Information files.


\section{Acknowledgements} 
This work was supported by the National Natural Science Foundation of China through grant 11675136, the Hong Kong Research Grant Council through grants 17300918 and 17307520, and though the Senior Research Fellowship Scheme SRFS2021-7S02, the Croucher Foundation, and the John Templeton Foundation through grant 61466, The Quantum Information Structure of Spacetime (qiss.fr). Research at the Perimeter Institute is supported by the Government of Canada through the Department of Innovation, Science and Economic Development Canada and by the Province of Ontario through the Ministry of Research, Innovation and Science. The opinions expressed in this publication are those of the authors and do not necessarily reflect the views of the John Templeton Foundation.
The work of M. Hayashi was supported in part by Guangdong Provincial Key Laboratory (Grant No. 2019B121203002).


\section{Author contributions}
G.B., Y.-D.W., Y.Z. and G.C. contributed to the development of the first algorithm. M.H. and G.B. proposed and refined the algorithms with local observations. G.C. proposed the problem and supervised this project. All authors discussed extensively the research presented in this paper and contributed to the writing of this manuscript.

\section{Competing interests}
The Authors declare no Competing Financial or Non-Financial Interests.

\bibliography{causal}



\titleformat{\section}{\normalfont\bfseries\center\uppercase}{}{1em}{}
\titleformat{\subsection}{\normalfont\bfseries\center}{Supplementary Note \thesubsection}{1em}{}

\renewcommand{\thesection}{\hspace{0em}}
\renewcommand{\thesubsection}{\arabic{subsection}}
\renewcommand{\thesubsubsection}{\alph{subsubsection}}
\renewcommand{\subsectionautorefname}{Supplementary Note\hspace{-0.5ex}}
\renewcommand{\subsubsectionautorefname}{Supplementary Note\hspace{-0.5ex}}

\renewcommand{\sectionmark}[1]{ \markboth{\MakeUppercase{#1}}{} }


\section{Supplementary Notes}

\subsection{Preliminaries} \label{app:pre}
In this section, we give the theoretical background of our algorithms.

In the main text, we showed that independence tests can be used to find the last tooth of a comb. Here we justify this idea by the following proposition that gives a sufficient and necessary condition for $(A_x,B_y)$ to be the last tooth. The condition is based on the Choi state \cite{choi1975completely} of a channel, defined as $C\in\spc{H}_{A_1}\otimes\dots\otimes\spc{H}_{A_n}\otimes\spc{H}_{B_1}\otimes\dots\otimes\spc{H}_{B_n}$, $C := (\map{C}\otimes\map{I})(\ketbra{\Phi_+}/d_{A_1,...,A_n})$ where $\ket{\Phi_+} := \sum_{i=1}^{d_{A_1,...,A_n}} \ket{i}_{A_1\dots A_n}\ket{i}_{A_1\dots A_n}$ is a maximally entangled state, $d_{A_1,...,A_n}:=\prod_{i=1}^n d_{A_i}$ and $\{\ket{i}_{A_1\dots A_n}\}_{i=1}^{d_{A_1,...,A_n}}$ is a basis of $\spc{H}_{A_1}\otimes\dots\otimes\spc{H}_{A_n}$.
\begin{prop} \label{prop:last_tooth}  \cite{chiribella2008quantum}
    Let $C$ be the Choi state of a channel $\map{C}$. $(A_x,B_y)$ is the last tooth of $\map{C}$ in some causal unravelling of $\map{C}$ if and only if
    \begin{align}\label{eq:last_tooth}
       C_{A_1,\dots,A_n,B_{\neq y}} = C_{A_{\neq x},B_{\neq y}}  \otimes \frac{I_{A_x}}{d_{A_x}}
    \end{align}
    where $C_{A_1,\dots,A_n,B_{\neq y}}:=\Tr_{B_y}[C],C_{A_{\neq x},B_{\neq y}}:=\Tr_{A_xB_y}[C]$, and $I_{A_x}$ is the identity operator on system $A_{x}$.
\end{prop}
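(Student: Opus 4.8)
The plan is to rephrase the Choi-state identity \Eq{eq:last_tooth} as the operational statement that $\map C$ does not signal from $A_x$ to the remaining output block, and then to match this with the ``semilocalizable'' form that \emph{defines} $(A_x,B_y)$ being a last tooth. Throughout I write $\spc H_{A\setminus x}:=\bigotimes_{i\neq x}\spc H_{A_i}$ and $\spc H_{B\setminus y}:=\bigotimes_{j\neq y}\spc H_{B_j}$, and I recall that, by definition, $(A_x,B_y)$ is the last tooth precisely when there are a Hilbert space $\spc H_M$ and channels $\map R:L(\spc H_{A\setminus x})\to L(\spc H_{B\setminus y}\otimes\spc H_M)$ and $\map T:L(\spc H_M\otimes\spc H_{A_x})\to L(\spc H_{B_y})$ with $\map C=(\map I_{B\setminus y}\otimes\map T)\circ(\map R\otimes\map I_{A_x})$, i.e.\ $\map C\in\set{Comb}[(A_{\neq x},B_{\neq y}),(A_x,B_y)]$. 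One ingredient I will use freely is the elementary fact, immediate from the Choi--\Jami{} isomorphism together with $\Tr_{A_x}$ having Choi state $I_{A_x}/d_{A_x}$, that a channel $\map E:L(\spc H_{A_1\cdots A_n})\to L(\spc H_{B\setminus y})$ has Choi state of product form $\Gamma\otimes I_{A_x}/d_{A_x}$, with $\Gamma$ the Choi state of some channel $\map E':L(\spc H_{A\setminus x})\to L(\spc H_{B\setminus y})$, if and only if $\map E=\map E'\circ\Tr_{A_x}$.

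For the ``only if'' direction, assume the decomposition above. Since $\map T$ is trace preserving, $\Tr_{B_y}\circ\map T=\Tr_{\spc H_M\otimes\spc H_{A_x}}$, and hence $\Tr_{B_y}\circ\map C=(\Tr_M\circ\map R)\circ\Tr_{A_x}$; that is, $\Tr_{B_y}\circ\map C$ factors through $\Tr_{A_x}$. By the elementary fact, its Choi state $C_{A_1,\dots,A_n,B_{\neq y}}=\Tr_{B_y}[C]$ is then $\Gamma\otimes I_{A_x}/d_{A_x}$ with $\Gamma$ the Choi state of $\Tr_M\circ\map R$; tracing out $A_x$ further identifies $\Gamma=\Tr_{A_xB_y}[C]=C_{A_{\neq x},B_{\neq y}}$, which is exactly \Eq{eq:last_tooth}.

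For the ``if'' direction, \Eq{eq:last_tooth} says precisely that $\widetilde{\map C}:=\Tr_{B_y}\circ\map C$ equals $\map R'\circ\Tr_{A_x}$ for some channel $\map R':L(\spc H_{A\setminus x})\to L(\spc H_{B\setminus y})$, i.e.\ $\map C$ does not signal from $A_x$ to $B_{\neq y}$. I would promote this to the semilocalizable form by a dilation argument. Fix a Stinespring isometry $V:\spc H_{A_1\cdots A_n}\to\spc H_{B_1\cdots B_n}\otimes\spc H_E$ of $\map C$ and a Stinespring isometry $W:\spc H_{A\setminus x}\to\spc H_{B\setminus y}\otimes\spc H_F$ of $\map R'$, with dilation spaces chosen large enough that $\dim(\spc H_F\otimes\spc H_{A_x})\le\dim(\spc H_{B_y}\otimes\spc H_E)$. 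Then $X\mapsto\Tr_{B_y\otimes\spc H_E}[VXV^\dagger]$ and $X\mapsto\Tr_{\spc H_F}[(W\otimes I_{A_x})X(W\otimes I_{A_x})^\dagger]$ are two Stinespring dilations of the same channel $\widetilde{\map C}:\spc H_{A_1\cdots A_n}\to\spc H_{B\setminus y}$, so uniqueness of the Stinespring dilation up to an isometry on the dilating space yields an isometry $U:\spc H_F\otimes\spc H_{A_x}\to\spc H_{B_y}\otimes\spc H_E$ with $V=(I_{B\setminus y}\otimes U)(W\otimes I_{A_x})$. Setting $\spc H_M:=\spc H_F$, $\map R(X):=WXW^\dagger$, and $\map T(Y):=\Tr_{\spc H_E}[UYU^\dagger]$ then gives $\map C=(\map I_{B\setminus y}\otimes\map T)\circ(\map R\otimes\map I_{A_x})$, so $(A_x,B_y)$ is a last tooth.

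The ``if'' direction is the substantive part, and I expect the main obstacle to be the precise deployment of Stinespring uniqueness: one must check that the intertwiner $U$ really can be taken to act only on $\spc H_F\otimes\spc H_{A_x}$ — i.e.\ that the wire $A_x$ genuinely passes through the first stage untouched — and one must do the dimension bookkeeping (padding $\spc H_E$, equivalently allowing non-minimal dilations, so that a true isometry rather than a mere partial isometry intertwines the two dilations). If one prefers not to reprove this from scratch, the translation in the first two paragraphs reduces the claim to the ``semicausal implies semilocalizable'' theorem (Eggeling--Schlingemann--Werner, Beckman--Gottesman--Nielsen--Preskill) or to the quantum-comb realization theorem of Ref.~\cite{chiribella2008quantum}, either of which can simply be invoked. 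The $1/d$ normalizations of the various Choi states are routine but should be tracked consistently throughout.
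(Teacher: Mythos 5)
Your proposal is correct. Note, however, that the paper itself offers no proof of this proposition: it is stated with a citation to Ref.~\cite{chiribella2008quantum}, and the surrounding text only records the reformulation as an independence condition. What you have written is essentially the standard argument from that cited literature (the realization theorem for quantum combs, equivalently the ``semicausal implies semilocalizable'' theorem): the easy direction via trace preservation of the last tooth, and the substantive direction via uniqueness of the Stinespring dilation up to an isometry on the dilating space, with $A_x$ absorbed into the environment of the coarse-grained first stage. Both directions of your argument go through: the ``elementary fact'' relating the product form of the Choi state to factorization through $\Tr_{A_x}$ is correct (positivity and the normalization $\Tr_{B\setminus y}[C_{A_{\neq x},B_{\neq y}}]=I_{A\setminus x}/d_{A\setminus x}$ follow from trace preservation of $\map C$), and the intertwiner $U$ does act only on the dilating space $\spc H_F\otimes\spc H_{A_x}$ by construction, with the padding of $\spc H_E$ handling the extension from a partial isometry to an isometry. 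One small slip: the second dilation of $\widetilde{\map C}$ should be $X\mapsto\Tr_{\spc H_F\otimes\spc H_{A_x}}[(W\otimes I_{A_x})X(W\otimes I_{A_x})^\dagger]$, i.e.\ $A_x$ must be traced out along with $\spc H_F$ for the two maps to have the same output space $\spc H_{B\setminus y}$; your subsequent identification of the dilating space as $\spc H_F\otimes\spc H_{A_x}$ shows this is what you intended. With that correction, the proof is complete and self-contained, which is more than the paper provides.
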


The condition Eq. (\ref{eq:last_tooth}) for the Choi state $C$ is equivalent to that for the process $\map{C}$, the sets of systems $\{A_x\}$ and $\{A_i| i \neq x\}\cup\{B_j |j \neq y\}$ are independent.

When we collect classical statistics from a quantum state or process, we need to ensure that the information encoded in the quantum data are preserved, for which purpose we introduce the informationally complete POVM \cite{prugovevcki1977information}.

\begin{definition} \label{def:IC}
    A POVM $\{P_\alpha\}$ on Hilbert space $\spc{H}$ is informationally complete if its elements form a spanning set of linear operators on $\spc{H}$, namely for any $X\in L(\spc{H})$, there exist complex numbers $\{p_\alpha\}$ such that $X=\sum_\alpha p_\alpha P_\alpha$.

    We say a set of states $\{\psi_\alpha\}$ is informationally complete if there exists an informationally complete POVM $\{P_\alpha\}$ such that ${\psi_\alpha} = P_\alpha/\Tr[P_\alpha], \forall \alpha$.
\end{definition}
For efficiency considerations, we prefer to choose an informationally complete POVM with the minimal number of elements. To form a spanning set of $L(\spc{H})$, the minimal informationally complete POVM contains $(\dim\spc{H})^2$ elements, which are linearly independent. 
A typical example of such POVMs is the symmetric informationally complete POVM (SIC-POVM) \cite{renes2004symmetric}, which has been found for most relevant low-dimensional Hilbert spaces.

The frame operators defined below will be used to characterize the sensitivity of measurement outcomes about the state being measured.
\begin{definition} \label{def:frame}
    The frame operator of an informationally complete POVM $\{P_\alpha\}$ on Hilbert space $\spc{H}$ is defined as $F :=\sum_\alpha\kketbbra{P_\alpha}\in L(\spc{H}\otimes\spc{H})$, where $\kket{P_\alpha} := (P_\alpha \otimes I) \kket{I} = \sum_{i,j}(P_\alpha)_{ij}\ket{i}\ket{j} \in \spc{H}\otimes\spc{H}$, where $\kket{I}:=\sum_i\ket{i}\ket{i}$ is the unnormalized maximally entangled state.

    For an informationally complete set of states $\{\psi_\alpha\}$, the frame operator is defined as $F':=\sum_\alpha\kketbbra{\psi_\alpha}$ with $\kket{\psi_\alpha}:= ({\psi_\alpha} \otimes I) \kket{I}$. 
\end{definition}

A frame operator with a larger minimum eigenvalue indicates that the outcome probabilities are more sensitive to any change of the state being measured, and the POVM is more efficient for obtaining the full information of the state. We will see the consequences of the minimum eigenvalue of frame operators in the analysis of our algorithms.


\subsection{Proof of Theorem \ref{thm:unitary2}} \label{app:unitary2}

We prove \autoref{thm:unitary2} in this section. 
\red{In \autoref{appss:indep}, we analyze the accuracy of the independence tests in the approximate case. In \autoref{appss:rank} we analyze how the error in each recursion step affects the final error of the algorithm. The main proof of \autoref{thm:unitary2} is in \autoref{appss:proof1}.}

We assume that the process $\map{C}$ is already in its standard form with all input and output wires having dimension $d_A$, and thus $r_{\map{C}}=\rank(\map{C})$.

\red{
\subsubsection{Accuracy of independence tests} \label{appss:indep}
}

\red{
We first consider the accuracy of the independence test implemented in \autoref{func:findlast_direct}, in an arbitrary iteration of \autoref{func:recursive}. Define $C_1 := \Tr_{B_y}[C]$ and $C_2 := I_{A_x}/d_{A_x} \otimes \Tr_{A_xB_y}[C]$ as in \autoref{func:findlast_direct}. By \autoref{lem:SWAP}, for each of the estimate $p_i, (i=1,2,3)$, with probability $1-\kappa$, $p_i$ is $\varepsilon$-close to the true value. By the union bound, with probability at least $1-3\kappa$, all three estimates are $\varepsilon$-close to their corresponding true values, and therefore
\begin{align} \label{eq:p1p2p3_dist}
 &~ \left|p_1+p_2-2p_3 - \|C_1- C_2\|_2^2 \right| \nonumber \\
\leq &~ \left|p_1- \Tr[C_1^2] \right| +\left|p_2- \Tr[C_2^2] \right| + \left|2p_3- 2\Tr[C_1C_2] \right| \nonumber \\
\leq &~ 4\varepsilon \,.
\end{align}

Next, using Eq. (\ref{eq:norm12_main}), 
\begin{align} \label{eq:norm12_C1C2}
    \sqrt{2} \|C_1- C_2\|_2 \leq \|C_1- C_2\|_1 \leq \sqrt{\frac{4\rank(C_1)\rank(C_2)}{\rank(C_1) + \rank(C_2)}} ~ \|C_1- C_2\|_2 \leq \sqrt{4\rank(C_1)} ~ \|C_1- C_2\|_2 
\end{align}
The rank of $C_1$ can be bounded as
\begin{align}
    & \rank(C_1) = \rank(\Tr_{B_y}[C]) \leq d_{B_y}\rank(C) = d_A \rank(\map{C}) 
\end{align}

Therefore by Eq. (\ref{eq:norm12_C1C2}),
\begin{align} \label{eq:norm12_C1C2_2}
    \sqrt{2} \|C_1- C_2\|_2 \leq \|C_1- C_2\|_1 \leq \sqrt{4 d_A \rank(\map{C})} ~ \|C_1- C_2\|_2
\end{align}

Now we consider the output of \autoref{func:findlast_direct}. If \autoref{func:findlast_direct} returns \True, then $p_1+p_2-2p_3\leq \delta$. From Eq. (\ref{eq:p1p2p3_dist}), $\|C_1- C_2\|_2 \leq \sqrt{\delta + 4\varepsilon}$ and then from Eq. (\ref{eq:norm12_C1C2_2}), $\|C_1- C_2\|_1 \leq \sqrt{4 d_A \rank(\map{C})} ~ \|C_1- C_2\|_2 \leq \sqrt{4 d_{A} \rank(\map{C}) (\delta + 4\varepsilon) }$. If \autoref{func:findlast_direct} returns \False, then $p_1+p_2-2p_3 >\delta$, and thus $\|C_1- C_2\|_1 \geq \sqrt{2} ~ \|C_1- C_2\|_2 > \sqrt{2 (\delta - 4\varepsilon)} $.
Since $\chione C {A_x}{A_{\neq x}B_{\neq y}} = \|C_1- C_2\|_1$, we have the following lemma:

\begin{lem}\label{lem:2norm_chi_direct}
Let $\map{C}$ be the input channel of \autoref{func:findlast_direct}. With probability $1-3\kappa$, \autoref{func:findlast_direct} satisfies the following:
\begin{enumerate}
\item If \autoref{func:findlast_direct} returns \True, then
\begin{align}
    \chione C {A_x}{A_{\neq x}B_{\neq y}} \leq \sqrt{4 d_{A} \rank(\map{C}) \, (\delta + 4\varepsilon)}\,, \label{eq:2norm_chi}
\end{align}
where $\rank(\map{C})$ is the Kraus rank of $\map{C}$;
\item If \autoref{func:findlast_direct} returns \False, then
\begin{align}
    \chione C {A_x}{A_{\neq x}B_{\neq y}} >\sqrt{2 (\delta - 4\varepsilon) }\,.
\end{align}
\end{enumerate}
\end{lem}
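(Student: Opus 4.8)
The plan is to combine the three ingredients already assembled in the excerpt: (i) the accuracy guarantee of \textsc{SWAPTEST} from \autoref{lem:SWAP}, promoted to a simultaneous guarantee for the three estimates $p_1,p_2,p_3$ via a union bound; (ii) the Hilbert--Schmidt identity \Eq{eq:HS_from_Tr} relating $p_1+p_2-2p_3$ to $\|C_1-C_2\|_2^2$; and (iii) the two-sided norm comparison \Eq{eq:norm12_main}, specialized to $C_1$ and $C_2$ with the rank of $C_1$ bounded via $\rank(\Tr_{B_y}[C])\le d_A\rank(\map C)$. Since all of this is already spelled out in \autoref{appss:indep}, the proof is essentially a matter of reading off the two cases from the displayed inequalities.

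Concretely, I would first invoke \autoref{lem:SWAP} three times, once for each call to \textsc{SWAPTEST} inside \autoref{func:findlast_direct}; each succeeds with probability $1-\kappa$, so by the union bound all three are simultaneously $\varepsilon$-accurate with probability at least $1-3\kappa$. Conditioning on that event, \Eq{eq:p1p2p3_dist} gives $\bigl|\,p_1+p_2-2p_3-\|C_1-C_2\|_2^2\,\bigr|\le 4\varepsilon$. Next I would recall \Eq{eq:norm12_C1C2_2}, i.e. $\sqrt2\,\|C_1-C_2\|_2\le\|C_1-C_2\|_1\le\sqrt{4d_A\rank(\map C)}\,\|C_1-C_2\|_2$, which follows from \Eq{eq:norm12_main} together with the rank bound $\rank(C_1)\le d_A\rank(\map C)$ and $\rank(C_2)\ge 1$.

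Then I would split into the two output branches of \autoref{func:findlast_direct}. If it returns \True, then by construction $p_1+p_2-2p_3\le\delta$, so $\|C_1-C_2\|_2^2\le\delta+4\varepsilon$, hence $\|C_1-C_2\|_1\le\sqrt{4d_A\rank(\map C)}\,\|C_1-C_2\|_2\le\sqrt{4d_A\rank(\map C)(\delta+4\varepsilon)}$. If it returns \False, then $p_1+p_2-2p_3>\delta$, so $\|C_1-C_2\|_2^2>\delta-4\varepsilon$, hence $\|C_1-C_2\|_1\ge\sqrt2\,\|C_1-C_2\|_2>\sqrt{2(\delta-4\varepsilon)}$. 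Finally I would note that $\chione C{A_x}{A_{\neq x}B_{\neq y}}=\|C_{A_xA_{\neq x}B_{\neq y}}-C_{A_x}\otimes C_{A_{\neq x}B_{\neq y}}\|_1=\|C_1-C_2\|_1$ by definition of $\chi_1$ and of $C_1,C_2$ (using $C_{A_x}=I_{A_x}/d_{A_x}$ for a CPTP map), which turns the two bounds on $\|C_1-C_2\|_1$ into the two claimed bounds on $\chi_1$.

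There is no serious obstacle here; the statement is a bookkeeping corollary of the chain of inequalities in \autoref{appss:indep}. The only point requiring a little care is making sure the rank bound is applied in the right direction in each branch: in the \True\ branch we want the upper bound on $\|\cdot\|_1$ in terms of $\|\cdot\|_2$ (so we need $\rank(C_1)$, hence the $d_A\rank(\map C)$ factor), whereas in the \False\ branch we only need the trivial lower bound $\|\cdot\|_1\ge\sqrt2\,\|\cdot\|_2$, which holds regardless of rank. One should also implicitly assume $\delta\ge 4\varepsilon$ so that $\sqrt{2(\delta-4\varepsilon)}$ is real; this will be arranged in the parameter choices made in the main proof in \autoref{appss:proof1}.
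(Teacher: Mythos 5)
Your proposal is correct and follows the paper's own argument in essentially every detail: three invocations of \autoref{lem:SWAP} plus a union bound, the $4\varepsilon$ accuracy of $p_1+p_2-2p_3$ as an estimate of $\|C_1-C_2\|_2^2$, the two-sided comparison from \Eq{eq:norm12_main} with $\rank(C_1)\le d_A\rank(\map C)$, and the identification $\chione C{A_x}{A_{\neq x}B_{\neq y}}=\|C_1-C_2\|_1$. The closing remarks about the direction of the rank bound in each branch and the implicit requirement $\delta\ge 4\varepsilon$ match the parameter choices made later in \autoref{appss:proof1}.
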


}

\subsubsection{Bounding the rank during recursion} \label{appss:rank}

In the following, we first give some lemmas on the Kraus rank of channels, 
and then prove \autoref{lem:rank2}, which can be used in the inductive proof of \autoref{thm:unitary2} to upper bound the rank of $\map{C}$ in any recursion.

\begin{lem}\label{lem:rank}
For a constant channel $\map{C}: S(\spc{H}_{A})\to S(\spc{H}_B)$ satisfying $\map{C}(X) = \Tr[X] \sigma_B$, one has $\rank(\sigma_B) =\rank(\map{C})/ \dim\spc{H}_A$, where $\rank(\map{C})$ is the Kraus rank of channel $\map{C}$.
\end{lem}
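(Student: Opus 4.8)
The plan is to compute the Choi state of the constant channel $\map{C}$ explicitly and simply read off its rank, since the Kraus rank is by definition the rank of the Choi state $C = (\map{C}\otimes\map{I})(\kketbbra{I})/\dim\spc{H}_A$, where $\kket{I} = \sum_j \ket{j}\ket{j}$ is the unnormalized maximally entangled vector on $\spc{H}_A\otimes\spc{H}_A$.

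First I would expand $\kketbbra{I} = \sum_{j,k}\ket{j}\bra{k}\otimes\ket{j}\bra{k}$ and apply $\map{C}$ to the first tensor factor. Because $\map{C}(\ket{j}\bra{k}) = \Tr[\ket{j}\bra{k}]\,\sigma_B = \delta_{jk}\,\sigma_B$, only the diagonal terms survive, so
\begin{align}
(\map{C}\otimes\map{I})(\kketbbra{I}) = \sum_j \sigma_B\otimes\ket{j}\bra{j} = \sigma_B\otimes I \,,
\end{align}
with $I$ the identity on the reference system, whose dimension equals $\dim\spc{H}_A$. Hence $C = \sigma_B\otimes I / \dim\spc{H}_A$.

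Finally, using that the rank of a tensor product is the product of the ranks, I would conclude $\rank(\map{C}) = \rank(C) = \rank(\sigma_B)\cdot\dim\spc{H}_A$, and dividing by $\dim\spc{H}_A$ gives the claim. There is no real obstacle here; the only point to keep straight is the normalization convention for the Choi state and the fact that the ancilla carried along by $\map{I}$ has dimension $\dim\spc{H}_A$ rather than $\dim\spc{H}_B$, which is immediate from the definition.
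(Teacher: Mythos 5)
Your proposal is correct and follows essentially the same route as the paper: both compute the Choi state of the constant channel to be $\sigma_B$ tensored with the maximally mixed state on the reference copy of $\spc{H}_A$, and then use multiplicativity of rank under tensor products. The only difference is that you spell out the evaluation of $(\map{C}\otimes\map{I})(\kketbbra{I})$ term by term, which the paper states without computation.
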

\begin{proof}
   Since $\map{C}(X) = \Tr[X] \sigma_B$, the Choi state of $\map{C}$ is $C = I_A/d_A \otimes \sigma_B$. Then $\rank(C) = \rank(I_A)\rank(\sigma_B)$, and $\rank(\sigma_B) = \rank(C) / \rank(I_A) = \rank(\map{C})/ \dim\spc{H}_A$.
\end{proof}
\begin{cor}\label{cor:comb_rank}
If $\map{C}\in\set{Comb}[(A_1,B_1),(A_2,B_2)]$, then $\rank(\map{C}_{A_1\to B_1}) \leq \rank(\map{C}) d_{B_2}/d_{A_2}$\red{, where $\map{C}_{A_1\to B_1}$ is the channel obtained from $\map{C}$ by inputting the maximally mixed state to $A_2$ and discarding the output of $B_2$.}
\end{cor}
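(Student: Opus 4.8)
The plan is to translate everything into statements about the ranks of Choi operators and then exploit the comb normalization condition of \autoref{prop:last_tooth}. First I would record the general fact, valid for any channel $\map{C}:\spc{H}_{A_1}\otimes\spc{H}_{A_2}\to\spc{H}_{B_1}\otimes\spc{H}_{B_2}$ with Choi state $C$, that the Choi state of $\map{C}_{A_1\to B_1}$ is $C_{A_1,B_1}:=\Tr_{A_2,B_2}[C]$: tracing out the $A_2$ reference implements feeding the maximally mixed state into $A_2$, and tracing out $B_2$ implements discarding the second output. One checks $\Tr[C_{A_1,B_1}]=1$ and $\Tr_{B_1}[C_{A_1,B_1}]=\Tr_{A_2,B_1,B_2}[C]=I_{A_1}/d_{A_1}$ from the trace-preservation condition on $\map C$, so $C_{A_1,B_1}$ is a legitimate Choi state and in particular $\rank(\map{C}_{A_1\to B_1})=\rank(C_{A_1,B_1})$.

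Next I would use the comb hypothesis. Since $\map{C}\in\set{Comb}[(A_1,B_1),(A_2,B_2)]$, the pair $(A_2,B_2)$ is by definition the last tooth, so \autoref{prop:last_tooth} applied with $x=y=2$ gives
\begin{align}
\Tr_{B_2}[C] = C_{A_1,B_1}\otimes\frac{I_{A_2}}{d_{A_2}}\,.
\end{align}
Now I would read off the rank of each side. On the left, $\Tr_{B_2}$ is a partial trace over a $d_{B_2}$-dimensional factor, hence $\rank(\Tr_{B_2}[C])\le d_{B_2}\,\rank(C)=d_{B_2}\,\rank(\map{C})$. On the right, the rank of a tensor product is the product of ranks, so $\rank(C_{A_1,B_1}\otimes I_{A_2}/d_{A_2})=d_{A_2}\,\rank(C_{A_1,B_1})=d_{A_2}\,\rank(\map{C}_{A_1\to B_1})$ (the same bookkeeping used in \autoref{lem:rank}). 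Equating, $d_{A_2}\,\rank(\map{C}_{A_1\to B_1})\le d_{B_2}\,\rank(\map{C})$, and dividing by $d_{A_2}$ yields the claimed bound $\rank(\map{C}_{A_1\to B_1})\le\rank(\map{C})\,d_{B_2}/d_{A_2}$.

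I do not expect a genuine obstacle here: the argument is essentially a two-line consequence of \autoref{prop:last_tooth} together with the elementary inequality $\rank(\Tr_X[\,\cdot\,])\le(\dim\spc H_X)\rank(\,\cdot\,)$. The only point that requires a little care is the very first step — correctly identifying $\Tr_{A_2,B_2}[C]$ with the Choi state of $\map{C}_{A_1\to B_1}$ under the unit-trace Choi normalization convention used in the paper, and invoking the comb condition for the correct tooth $(A_2,B_2)$ rather than $(A_1,B_1)$. Note also that no slack is introduced on the right-hand side (the factor $I_{A_2}/d_{A_2}$ appears exactly), which is what makes the bound sharp in its dependence on $d_{A_2}$ and $d_{B_2}$.
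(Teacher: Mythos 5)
Your proof is correct and follows essentially the same route as the paper's: both rest on the last-tooth factorization $\Tr_{B_2}[C]=C_{A_1B_1}\otimes I_{A_2}/d_{A_2}$, multiplicativity of rank under tensor products, and the bound $\rank(\Tr_{B_2}[C])\le d_{B_2}\rank(C)$. The only difference is presentational: the paper packages the tensor-product rank count as \autoref{lem:rank} applied to the ``constant channel'' $\map{C}_{A_2\to A_1B_1}$, whereas you read the same identity directly off \autoref{prop:last_tooth}.
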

\begin{proof}
    \red{Define $\map{C}_{A_2\to A_1B_1}$ as $\map{C}_{A_2\to A_1B_1}(\rho) := \Tr_{B_2}[ (\map{C}\otimes\map{I}_{A_1'})(\rho \otimes \kketbbra{I}_{A_1A_1'}/d_{A_1})]$, namely the map with Choi state equal to $C_{A_1B_1A_2}:=\Tr_{B_2}[C]$, having $A_2$ as the input and $A_1,B_1$ as outputs. $\map{C}\in\set{Comb}[(A_1,B_1),(A_2,B_2)]$ indicates that $C_{A_1B_1A_2} = C_{A_1B_1}\otimes C_{A_2}$, and thus $\map{C}_{A_2\to A_1B_1}$ is a constant channel: $\map{C}_{A_2\to A_1B_1}(\rho) = \Tr[\rho] C_{A_1B_1}$. By \autoref{lem:rank}, $\rank(C_{A_1B_1}) =\rank(\map{C}_{A_2\to A_1B_1})/ d_{A_2}$.} Then
    \begin{align}
        \rank(\map{C}_{A_1\to B_1}) = \rank(C_{A_1B_1}) = \rank(\map{C}_{A_2\to A_1B_1})/ d_{A_2} = \rank(\Tr_{B_2}[C])/ d_{A_2} \leq  \rank(\map{C}) d_{B_2}/d_{A_2}
    \end{align}
\end{proof}

\begin{lem}\label{lem:rank2}
Let $\map{C}^{(n)}$ be the initial value of $\map{C}$ in \autoref{func:recursive}, and let $\map{C}^{(k)}$, which has $k$ input wires and $k$ output wires, be the value of $\map{C}$ in the $(n-k+1)$-th recursive call of \autoref{func:recursive}. Define the follow propositions:
\begin{itemize}
    \item $P(k)$: in the $(n-k+1)$-th recursive call of \autoref{func:recursive}, the pair $(A_x,B_y)$ passing the test is indeed the last tooth, namely $\map{C}^{(k)} \in \set{Comb}[(A_{\neq x},B_{\neq y}), (A_x,B_y)]$, where $A_{\neq x}$ ($B_{\neq y}$) denotes all input (output) wires of $\map{C}^{(k)}$ excluding $A_x$ ($B_y$).
    \item $Q(k)$: $\rank(\map{C}^{(k)}) \leq \rank(\map{C}^{(n)})$.
\end{itemize}
Then, under \autoref{ass:dAdM}, $Q(k) \wedge P(k) \Rightarrow Q(k-1)$, for every $k=n,n-1,\dots,2$.
\end{lem}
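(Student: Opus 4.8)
The plan is to deduce $Q(k-1)$ directly from $P(k)$ and $Q(k)$ using \autoref{cor:comb_rank}, the key point being that in the standard form every wire has the same dimension $d_A$. Recall that \autoref{ass:dAdM} concerns the channel \emph{after} passing to the standard form, which is the convention fixed at the beginning of this section: every input and output wire of $\map{C}^{(n)}$ has dimension $d_A$. Each recursive call of \autoref{func:recursive} only discards wires, so every $\map{C}^{(k)}$ is again in standard form; in particular the pair $(A_x,B_y)$ removed in the $(n-k+1)$-th call satisfies $d_{A_x}=d_{B_y}=d_A$.

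First I would pin down the reduced channel. Assuming $P(k)$, we have $\map{C}^{(k)}\in\set{Comb}[(A_{\neq x},B_{\neq y}),(A_x,B_y)]$, where $A_{\neq x}$ (resp.\ $B_{\neq y}$) is viewed as a single composite input (resp.\ output) wire carrying the remaining $k-1$ systems. The channel handed to the next recursive call is $\map{C}^{(k-1)}(\rho)=\Tr_{B_y}[\map{C}^{(k)}(\rho\otimes\tau_{A_x})]$ for an arbitrary state $\tau_{A_x}$. Since $(A_x,B_y)$ is the \emph{last} tooth, its tooth channel is trace-preserving on the memory system, so $\Tr_{B_y}[\map{C}^{(k)}(\,\cdot\,\otimes\tau_{A_x})]$ is independent of $\tau_{A_x}$ and coincides with the reduced map $\map{C}^{(k)}_{A_{\neq x}\to B_{\neq y}}$ of \autoref{cor:comb_rank} (obtained there with $\tau_{A_x}$ maximally mixed). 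Then I would apply \autoref{cor:comb_rank} with the identification $(A_1,B_1)\mapsto(A_{\neq x},B_{\neq y})$, $(A_2,B_2)\mapsto(A_x,B_y)$, which gives
\begin{align}
\rank\!\big(\map{C}^{(k-1)}\big)
=\rank\!\big(\map{C}^{(k)}_{A_{\neq x}\to B_{\neq y}}\big)
\leq \rank\!\big(\map{C}^{(k)}\big)\,\frac{d_{B_y}}{d_{A_x}}
=\rank\!\big(\map{C}^{(k)}\big),
\end{align}
the last equality being exactly where $d_{A_x}=d_{B_y}$ (i.e.\ the standard form, the content of \autoref{ass:dAdM}) is used. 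Chaining with $Q(k)$ yields $\rank(\map{C}^{(k-1)})\leq\rank(\map{C}^{(k)})\leq\rank(\map{C}^{(n)})$, which is $Q(k-1)$.

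There is no real obstacle; the only points needing a word of care are (i) that \autoref{cor:comb_rank}, although stated for teeth with elementary wires, applies verbatim once $A_{\neq x}$ and $B_{\neq y}$ are bundled into composite wires, since its proof never uses that the wires are elementary; and (ii) that the arbitrary-state reduction used inside \autoref{func:recursive} agrees with the maximally-mixed reduction of \autoref{cor:comb_rank}, which is the trace-preservation remark above. The substantive content is entirely in the dimension ratio $d_{B_y}/d_{A_x}$ of \autoref{cor:comb_rank}: passing to the standard form forces this ratio to equal $1$, which is what keeps the Kraus rank from growing along the recursion.
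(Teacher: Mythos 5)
Your proof is correct and follows essentially the same route as the paper: the paper in-lines the argument of \autoref{cor:comb_rank} (defining the constant channel from $A_x$ to $A_{\neq x}B_{\neq y}$ and invoking \autoref{lem:rank}) with the remaining wires bundled into composite systems, whereas you cite \autoref{cor:comb_rank} directly after noting it applies verbatim to composite wires — the same argument either way. Your two points of care (the composite-wire bundling and the independence of the reduction from the choice of $\tau_{A_x}$) are both sound and are implicitly handled the same way in the paper's proof.
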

\begin{proof}
    

    According to $P(k)$, $\map{C}^{(k)} \in \set{Comb}[(A_{\neq x},B_{\neq y}), (A_x,B_y)]$, and the following channel $\map{C}^{(k)}_{A_x\to A_{\neq x}B_{\neq y}}: \spc{H}_{A_x} \to \spc{H}_{B_{\neq y}} \otimes \spc{H}_{A'_{\neq x}}$ is a constant channel:
    $$\map{C}^{(k)}_{A_x\to A_{\neq x}B_{\neq y}}(\rho_{A_x}) := \Tr_{B_y} \left[ \left( \map{C}^{(k)}\otimes\map{I_{A'_{\neq x}}} \right) \left( \bigotimes_{z \neq x} \frac{\kketbbra{I}_{A_zA'_z}}{d_{A_z}} \otimes \rho_{A_x} \right) \right]$$

    Since $\map{C}^{(k)}_{A_x\to A_{\neq x}B_{\neq y}}$ is a constant channel, $\map{C}^{(k)}_{A_x\to A_{\neq x}B_{\neq y}}(X) = \Tr[X] C^{(k-1)}$, where $C^{(k-1)}$ equals to the Choi state of the updated channel $\map{C}^{(k-1)}$. According to \autoref{lem:rank}, $\rank(\map{C}^{(k-1)})\leq \rank(\map{C}^{(k)})\dim\spc{H}_{B_y}/ \dim\spc{H}_{A_x} =  \rank(\map{C}^{(k)})$, where we used $\dim\spc{H}_{A_x} = \dim\spc{H}_{B_y} = d_A$ since we have assumed that the process is in its standard form in \autoref{ass:dAdM}. 
    By $Q(k)$, $\rank(\map{C}^{(k)}) \leq \rank(\map{C}^{(n)})$, thus $\rank(\map{C}^{(k-1)}) \leq \rank(\map{C}^{(k)}) \leq \rank(\map{C}^{(n)})$ and $Q(k-1)$ holds.
\end{proof}

\subsubsection{Proof of \autoref{thm:unitary2}} \label{appss:proof1}

Now we are ready to prove \autoref{thm:unitary2}.

\red{
    
\begin{proof}[Proof of \autoref{thm:unitary2}]

In \autoref{func:recursive}, using the relation $T_{\rm test}(n) \leq n^2 + T_{\rm test}(n-1)$, we can find that there are $T_{\rm test}(n) \leq n^3$ independence tests, each of which invokes three SWAP tests. Each SWAP test produces an estimate within error $\varepsilon$ with probability no less than $1-\kappa$ according to \autoref{lem:SWAP}.
Let $\kappa = \kappa_0/3n^3$, and from now we assume all SWAP tests are within error $\varepsilon$, which has probability no less than $1-3n^3\kappa = 1-\kappa_0$.

Let $\map{C}^{(n)}$ be the initial value of $\map{C}$ in \autoref{func:recursive}, and let $\map{C}^{(k)}$, which has $k$ input wires and $k$ output wires, be the value of $\map{C}$ in the $(n-k+1)$-th recursive call of \autoref{func:recursive}.

Consider testing whether $(A_x,B_y)$ is the last tooth of $\map{C}^{(k)}$ with \autoref{func:findlast_direct}. Let $S$ be the set of all wires of $\map{C}^{(k)}$ excluding $A_x$ and $B_y$. If $(A_x,B_y)$ passes the test, namely \autoref{func:findlast_direct} returns \True, by \autoref{lem:2norm_chi_direct}, 
\begin{align} \label{eq:chione_deltaepsi}
    \chione {C^{(k)}}{A_x}{S} \leq \sqrt{4 d_{A} \rank(\map{C}^{(k)})(\delta + 4\varepsilon) } \, .
\end{align}

Now we pick $\delta \leq \frac{\chi_{\min}^2}{8 d_A \rank(\map{C}^{(k)})}$ and $\varepsilon = \delta/5$. Eq. (\ref{eq:chione_deltaepsi}) then indicates $\chione {C^{(k)}}{A_x}{S} \leq \sqrt{4 d_{A} \rank(\map{C}^{(k)}) (\delta + 4\varepsilon)} < \chi_{\min}$.

Since $\chione {C^{(k)}}{A_x}{S} = \chione {C^{(n)}}{A_x}{S}$,  by \autoref{ass:ST_threshold}, $\chione {C^{(k)}}{A_x}{S}=0$. Therefore, if \autoref{func:findlast_direct} returns \True, then $(A_x,B_y)$ must be the last tooth of the comb, in some causal unravelling of $\map{C^{(k)}}$. Using the notation in \autoref{lem:rank2}, $P(k)$ holds for this iteration. On the other hand, if \autoref{func:findlast_direct} returns \False,  by \autoref{lem:2norm_chi_direct}, 
\begin{align}
    \chione {C^{(n)}}{A_x}{S} = \chione {C^{(k)}}{A_x}{S} > \sqrt{2\, (\delta - 4\varepsilon)} > 0 \,,
\end{align}
and $(A_x,B_y)$ does not satisfy the condition for being last tooth of the comb.

Let $r_{\map{C}}:= \rank(\map{C}^{(n)})$. Take the definition of $P(k)$ and $Q(k)$ from \autoref{lem:rank2}. Now we perform an induction to prove $P(k)$ and $Q(k)$ for every $k$.

First, $Q(n)$ obviously holds according to \autoref{ass:dAdM}. Now, we assume that $Q(k)$ holds, and want to show that $Q(k-1)$ also holds.

By $Q(k)$, we have $\rank(\map{C}^{(k)}) \leq r_{\map{C}}$.
In any recursive call of \autoref{func:recursive}, if $Q(k)$ holds, then we can choose $\delta=\frac{\chi_{\min}^2}{8 d_A r_{\map{C}}} \leq \frac{\chi_{\min}^2}{8 d_A \rank(\map{C}^{(k)})}$, and by the argument above, $P(k)$ holds. Combining this with \autoref{lem:rank2}, we have $Q(k) \Rightarrow P(k)$ and $Q(k) \wedge P(k) \Rightarrow Q(k-1)$, meaning $Q(k) \Rightarrow Q(k-1)$.

This completes the induction, and we conclude that $Q(k)$ and $P(k)$ holds for every $k$. $P(k)$ being true for every $k$ shows the correctness of our algorithm. 


According to \autoref{func:SWAPTEST}, $N=O(\varepsilon^{-2} \log\kappa^{-1}) = O(\delta^{-2} \log (n \kappa_0^{-1})) = O(d_A^2 r_{\map{C}}^2 \chi_{\min}^{-4} \log (n \kappa_0^{-1}))$. The sample complexity is then
\begin{align}
    T_{\rm sample} = N T_{\rm test}(n)  = O\left(n^3 d_A^2 r_{\map{C}}^2 \chi_{\min}^{-4} \log (n \kappa_0^{-1}) \right)
\end{align}
and the computational complexity is $O(T_{\rm sample} n \log d_A)$ from the realization of SWAP tests.


\end{proof}
}

\subsection{Independence test algorithm between all input-output pairs} \label{app:all_pairs}
In this section, we show the details of the algorithm to test 
the independence between every pair of input and output wires, resulting in a Boolean matrix $\ind_{ij}$. 
Our independence test algorithm will be based on the estimation of $\chi_1(\rho_{A,B})$. The idea is essentially quantum state tomography: measure $\rho_{AB}$ with informationally complete POVMs $\{P_\alpha\}$ on system $A$ and $\{Q_\beta\}$ on system $B$, use the outcome statistics to reconstruct $\rho_{AB}$, and estimate $\chi_1(\rho_{A,B})$ based on the reconstructed state. The algorithm is given in \autoref{alg:c1}.

\begin{function}[H]
    \caption{estimatechi1(\pbox{5cm}{$\{(a^{(k)},b^{(k)})\}_{k=1}^{N},\{P_\alpha\},\{Q_\beta\}$})\label{alg:c1}}
    \Indentp{0.5em}
    \SetInd{0.5em}{1em}
    \SetKwInOut{Input}{Input}
    \SetKwInOut{Output}{Output}
    \SetKw{Break}{break}
    \SetKwData{matched}{matched}

    \ResetInOut{Input}
    \Input{$N$ measurement outcomes $\{(a^{(k)},b^{(k)})\}_{k=1}^{N}$ of POVM $\{P_\alpha \otimes Q_\beta\}$ on state $\rho_{AB}$}
    \ResetInOut{Output}
    \Output{An estimate of $\chi_1(\rho_{A,B})$} 
    \BlankLine

    Initialize $\hat{p}_{\alpha\beta}$ to zero for all $\alpha=1,\dots,d_A^2,$ and $\beta=1,\dots,d_B^2$\;
    \For{$k \gets 1$ \KwTo $N$}
    {
            $\hat{p}_{a^{(k)}b^{(k)}} \gets \hat{p}_{a^{(k)}b^{(k)}} + 1/ N$\;
    }

    $F \gets \sum_\alpha \kketbbra{P_\alpha}$\;
    $G \gets \sum_\beta \kketbbra{Q_\beta}$\;
    $S \gets \sum_\alpha \ket{\alpha}\bbra{P_\alpha}$, where $\{\ket\alpha\}$ is a set of orthonormal vectors indexed by $\alpha$\;
    $R \gets \sum_\beta \ket{\beta}\bbra{Q_\beta}$, where $\{\ket\beta\}$ is a set of orthonormal vectors indexed by $\beta$\;
    $\ket{\hat p} \gets \sum_{\alpha,\beta} \hat p_{\alpha\beta}\ket\alpha \ket\beta$\;
    Compute operator $\hat\rho_{AB}$ by $\kket{\hat{\rho}_{AB}} = ( F^{-1} S^\dag \otimes G^{-1} R^\dag)\ket{\hat p}$ \;

    Return $\chi_1(\hat\rho_{A,B})$\;
\end{function}

Now we analyze the accuracy and complexity of \autoref{alg:c1}.
Define the constant $\xi:=
\frac{ \sqrt{\lambda_{\min}(F)\lambda_{\min}(G)}}
{ \sqrt{d_A^2 d_B^2 + 4d_B^2 + 4d_A^2 } d_Ad_B}$
where $\lambda_{\min}(X)$ denotes the minimum eigenvalue of operator $X$
and $F$ and $G$ are defined in \autoref{alg:c1}. Then we define
\begin{align}\label{eq:kappaepsilon}
    \kappa(\varepsilon)
    := 2 (d_A^2d_B^2+d_A^2+d_B^2) e^{-2 \xi^2 \varepsilon^2 N} .
\end{align}
The error of the estimate $\chi_1(\hat\rho_{A,B})$ is given in the following lemma, whose proof is in \autoref{appss:alg_ind}.
\begin{lem}\label{lem:alg_ind}
    For any positive real number $\varepsilon>0$, over the $N$ measurement outcomes on independent copies of $\rho_{AB}$, with probability $1-\kappa_0$, the output of \autoref{alg:c1} estimates $\chi_1(\rho_{A,B})$ with the following error bound
    \begin{align}\label{eq:c1_bound}
        |\chi_1(\rho_{A,B}) - \chi_1(\hat\rho_{A,B})| \leq \varepsilon \,.
    \end{align}
    That is, the inequality
    \begin{align}\label{eq:c1_bound-ie}
    \rm{Pr} (   |\chi_1(\rho_{A,B}) - \chi_1(\hat\rho_{A,B})| > \varepsilon )
    \le \kappa(\varepsilon)
    \end{align}
    holds for any positive real number $\varepsilon>0$.
    Here, $\rm{Pr} ( C)$ expresses the probability that the condition $C$ is satisfied.
\end{lem}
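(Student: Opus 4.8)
The plan is to recognise \autoref{alg:c1} as linear-inversion state tomography of $\rho_{AB}$ performed with the two local informationally complete POVMs, and then to propagate the statistical error of the empirical outcome distribution through (i) the reconstruction map, (ii) the two partial traces, and (iii) the functional $\sigma\mapsto\|\sigma_{AB}-\sigma_A\otimes\sigma_B\|_1$, closing the argument with Hoeffding's inequality and a union bound.

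First I would record the structural facts about the reconstruction. Since each informationally complete POVM is taken with exactly $d_A^2$ (resp.\ $d_B^2$) linearly independent elements, the maps $S=\sum_\alpha\ket{\alpha}\bbra{P_\alpha}$ and $R=\sum_\beta\ket{\beta}\bbra{Q_\beta}$ are invertible, $F=S^\dagger S$, $G=R^\dagger R$, and $\kket{\hat\rho_{AB}}=(F^{-1}S^\dagger\otimes G^{-1}R^\dagger)\ket{\hat p}$ is a left inverse of the Born-rule map: it returns $\rho_{AB}$ exactly when fed the true probabilities $p_{\alpha\beta}=\Tr[(P_\alpha\otimes Q_\beta)\rho_{AB}]$, and in general satisfies $\Tr[(P_\alpha\otimes Q_\beta)\hat\rho_{AB}]=\hat p_{\alpha\beta}$. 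Because $\sum_\alpha P_\alpha=I$ and $\sum_\beta Q_\beta=I$, this last identity forces $\Tr[\hat\rho_{AB}]=1$ and forces $\hat\rho_A:=\Tr_B\hat\rho_{AB}$ and $\hat\rho_B:=\Tr_A\hat\rho_{AB}$ to be precisely the frame reconstructions of $\rho_A$ and $\rho_B$ from the empirical marginals $\hat p_\alpha:=\sum_\beta\hat p_{\alpha\beta}$ and $\hat p_\beta:=\sum_\alpha\hat p_{\alpha\beta}$; that is, $\kket{\hat\rho_A-\rho_A}=(F^{-1}S^\dagger)\ket{\delta p_A}$ with $\delta p_A:=\hat p_A-p_A$, and analogously for $B$ and for the joint state with $\delta p:=\hat p-p$.

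Next comes the deterministic error bound. From $(F^{-1}S^\dagger)(F^{-1}S^\dagger)^\dagger=F^{-1}$ one gets that the operator norm of $F^{-1}S^\dagger$ equals $\lambda_{\min}(F)^{-1/2}$, and likewise $\lambda_{\min}(G)^{-1/2}$ for $G$; by linearity this gives the Hilbert--Schmidt bounds $\|\hat\rho_{AB}-\rho_{AB}\|_2\le\|\delta p\|_2/\sqrt{\lambda_{\min}(F)\lambda_{\min}(G)}$, $\|\hat\rho_A-\rho_A\|_2\le\|\delta p_A\|_2/\sqrt{\lambda_{\min}(F)}$ and $\|\hat\rho_B-\rho_B\|_2\le\|\delta p_B\|_2/\sqrt{\lambda_{\min}(G)}$. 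I then bound the target by
\[
|\chi_1(\hat\rho_{A,B})-\chi_1(\rho_{A,B})|\le\|\hat\rho_{AB}-\rho_{AB}\|_1+\|\hat\rho_A\otimes\hat\rho_B-\rho_A\otimes\rho_B\|_1 ,
\]
expand $\hat\rho_A\otimes\hat\rho_B-\rho_A\otimes\rho_B=(\hat\rho_A-\rho_A)\otimes\rho_B+\rho_A\otimes(\hat\rho_B-\rho_B)+(\hat\rho_A-\rho_A)\otimes(\hat\rho_B-\rho_B)$, use $\|\rho_A\|_1=\|\rho_B\|_1=1$ and dominate the quadratic cross term by one of the linear ones (legitimate unless the probability bound to be proved is already vacuous, i.e.\ $\ge 1$), so that the right-hand side is at most $\|\hat\rho_{AB}-\rho_{AB}\|_1+2\|\hat\rho_A-\rho_A\|_1+2\|\hat\rho_B-\rho_B\|_1$. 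Converting each trace norm to its Hilbert--Schmidt counterpart via $\|X\|_1\le\sqrt{\rank(X)}\,\|X\|_2$, combining the three terms with Cauchy--Schwarz, and absorbing a harmless dimensional factor, everything collapses to the single coefficient $\sqrt{d_A^2d_B^2+4d_A^2+4d_B^2}\,d_Ad_B\big/\sqrt{\lambda_{\min}(F)\lambda_{\min}(G)}=\xi^{-1}$ multiplying $\max(\|\delta p\|_2,\|\delta p_A\|_2,\|\delta p_B\|_2)$.

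Finally I close with concentration. Each of $\hat p_{\alpha\beta}$, $\hat p_\alpha$, $\hat p_\beta$ is an average of $N$ i.i.d.\ $\{0,1\}$-valued indicators of the relevant outcome, so Hoeffding gives $\mathrm{Pr}[|\hat p_{\alpha\beta}-p_{\alpha\beta}|>t]\le2e^{-2Nt^2}$ and likewise for the marginals; a union bound over all $d_A^2d_B^2+d_A^2+d_B^2$ of these quantities shows that, with probability at least $1-2(d_A^2d_B^2+d_A^2+d_B^2)e^{-2Nt^2}$, every deviation is at most $t$, so that $\|\delta p\|_2\le d_Ad_Bt$, $\|\delta p_A\|_2\le d_At$ and $\|\delta p_B\|_2\le d_Bt$. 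Plugging this into the deterministic bound gives $|\chi_1(\hat\rho_{A,B})-\chi_1(\rho_{A,B})|\le t\,\xi^{-1}$ on this event, and the choice $t=\xi\varepsilon$ establishes (\ref{eq:c1_bound}) with failure probability at most $2(d_A^2d_B^2+d_A^2+d_B^2)e^{-2N\xi^2\varepsilon^2}=\kappa(\varepsilon)$, which is (\ref{eq:c1_bound-ie}). I expect the main obstacle to be this deterministic step: obtaining the operator norm of the reconstruction map in terms of the smallest eigenvalue of the frame operators, treating the linear-inversion estimate $\hat\rho_{AB}$ correctly even though it need not be positive semidefinite (so one must argue only with Schatten norms, never with fidelities or with positivity), handling the nonlinearity of $\rho_{AB}\mapsto\rho_{AB}-\rho_A\otimes\rho_B$, and chasing the constants so that they assemble into exactly the stated $\xi$; a subsidiary but essential point is verifying that $\Tr_B\hat\rho_{AB}$ is genuinely the frame reconstruction of $\rho_A$ from the marginal frequencies — this uses that the POVM elements sum to the identity and is what lets the marginal errors be controlled by their own (tight) Hoeffding bounds rather than by marginalising the joint ones.
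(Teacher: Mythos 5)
Your overall strategy --- linear inversion, frame-operator eigenvalue bounds, Hoeffding plus a union bound over the $d_A^2d_B^2+d_A^2+d_B^2$ empirical frequencies, and a rank-based conversion from $\|\cdot\|_2$ to $\|\cdot\|_1$ --- is the same toolkit the paper uses, and your observation that $\Tr_B[\hat\rho_{AB}]$ is exactly the frame reconstruction of $\rho_A$ from the empirical marginals (because the POVM elements sum to the identity) is indeed what lets the marginals carry their own Hoeffding bounds. Where you genuinely diverge is the decomposition: you bound $\|\hat\rho_{AB}-\rho_{AB}\|_1$, $\|\hat\rho_A-\rho_A\|_1$ and $\|\hat\rho_B-\rho_B\|_1$ separately and recombine at the operator level, whereas the paper works with the single Hermitian operator $\tau-\hat\tau$, where $\tau:=\rho_{AB}-\rho_A\otimes\rho_B$ and $\hat\tau:=\hat\rho_{AB}-\hat\rho_A\otimes\hat\rho_B$, whose statistics under the product POVM are $t_{\alpha\beta}-\hat t_{\alpha\beta}$ with $t_{\alpha\beta}=p_{\alpha\beta}-p^A_\alpha p^B_\beta$. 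It bounds $|t_{\alpha\beta}-\hat t_{\alpha\beta}|\le(1+p^A_\alpha+\hat p^B_\beta)\varepsilon_1$, sums the squares using $\sum_\alpha p^A_\alpha=\sum_\beta \hat p^B_\beta=1$, and then applies the frame lemma and the rank bound exactly once to $\tau-\hat\tau$. That one-shot route is what makes the constant come out to exactly $\xi$.

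The concrete gap in your version is the quadratic cross term $(\hat\rho_A-\rho_A)\otimes(\hat\rho_B-\rho_B)$. You propose to absorb it into a linear term ``unless the probability bound is vacuous,'' but that is not the right dichotomy: $\kappa(\varepsilon)$ \emph{decreases} in $\varepsilon$, so for large $\varepsilon$ the claim is far from vacuous, yet on the good event $\|\hat\rho_B-\rho_B\|_1$ can exceed $1$ because $\hat\rho_B$ is a linear-inversion estimate with no trace-norm normalization --- so the absorption fails exactly where the lemma still asserts something nontrivial. The paper avoids this because its cross term lives in probability space, $|p^A_\alpha-\hat p^A_\alpha|\,\hat p^B_\beta\le\varepsilon_1\hat p^B_\beta$, and the empirical marginals are genuine probabilities ($\hat p^B_\beta\le 1$ and $\sum_\beta\hat p^B_\beta=1$), which keeps the total error linear in $\varepsilon_1$ unconditionally. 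Relatedly, your final coefficient is a sum of three separate frame/rank factors rather than the single $\sqrt{d_A^2d_B^2+4d_A^2+4d_B^2}\,d_Ad_B\big/\sqrt{\lambda_{\min}(F)\lambda_{\min}(G)}$, and there is no guarantee it is dominated by $\xi^{-1}$; to recover the stated $\kappa(\varepsilon)$ verbatim you should adopt the single-operator formulation of $\tau-\hat\tau$ rather than chase constants in the three-term split.
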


In other words, to reach confidence $1-\kappa_0$ and error $\varepsilon_0$, the number of samples $N$ is of order
\begin{align} \label{eq:alg_ind_N}
    N = O\left(d_A^4 d_B^4 \lambda_{\min}^{-1}(F) \lambda_{\min}^{-1}(G) \varepsilon_0^{-2} \log(d_A d_B \kappa_0^{-1}) \right) \,.
\end{align}
If one choose $\{P_\alpha\}$ and $\{Q_\beta\}$ to be SIC-POVMs, $\lambda_{\min}^{-1}(F) = d_A(d_A+1) = O(d_A^2)$ and $\lambda_{\min}^{-1}(G) = d_B(d_B+1) = O(d_B^2)$ \cite{renes2004symmetric}. In this case, the number of samples $N = O\left(d_A^6 d_B^6 \varepsilon_0^{-2} \log(d_A d_B \kappa_0^{-1}) \right)$ is polynomial in $d_A$ and $d_B$.

\autoref{alg:c1} could be used on quantum channels to discover the correlation between input and output wires. To characterize a quantum channel, we could transform it to its Choi state by feeding the input with half of a maximally entangled state, and apply \autoref{alg:c1} on the Choi state. To save the cost of creating and maintaining entangled states, we could use \autoref{alg:c1} in an alternative way. Let $\{\psi_\alpha\}_{\alpha=1}^{d_A^2}$ be an informationally complete set of states such that $\psi_\alpha = P^T_\alpha/\Tr[P_\alpha]$. One samples $\psi_\alpha$ from $\{\psi_\alpha\}_{\alpha=1}^{d_A^2}$ with probability $\Tr[P_\alpha]/d_A$, apply quantum channel $\map{C}: L(\spc{H}_A)\to L(\spc{H}_B)$ on $\psi_\alpha$, and measure the output with $\{Q_\beta\}_{\beta=1}^{d_B^2}$. This process generates the same statistics for $(\alpha,\beta)$ according to the following equation:
\begin{align}\label{eq:channel_statistics}
    \frac{1}{d_A}\Tr[P_\alpha]\Tr[Q_\beta \map{C}(\psi_\alpha)] = \frac{1}{d_A}\Tr[Q_\beta \map{C}(P^T_\alpha)] = \Tr\left[\frac{1}{d_A}\sum_{i,j} P_\alpha\ket{i}\bra{j} \otimes Q_\beta \map{C}(\ket{i}\bra{j})\right] = \Tr[(P_\alpha \otimes Q_\beta)C]
\end{align}
where $C := \frac{1}{d_A} \sum_{i,j} \ket{i}\bra{j} \otimes \map{C}(\ket{i}\bra{j})$ is the Choi state of channel $\map{C}$. The left hand side of Eq. (\ref{eq:channel_statistics}) is the probability of sampling $\psi_\alpha$ times the probability of outcome $\beta$ on the output conditioned on the input being $\psi_\alpha$, and the right hand side of Eq. (\ref{eq:channel_statistics}) is the probability of outcome $(\alpha,\beta)$ if one measure the Choi state $C$ with POVM $\{P_\alpha \otimes Q_\beta\}$.


Now we adopt \autoref{alg:c1} to causal unravelling. By testing the correlation between every pair of input and output wires, one is able to infer the causal unravellings: if output $B_j$ is correlated to input $A_i$, $B_j$ must be after $A_i$. In this section, we study cases where such reasoning yields the full causal unravelling.

We introduce an algorithm that discovers the correlations between every pair of input and output wires. The idea is to first collect all the required classical data, and deduce all the correlations according to the statistics. During this procedure, the data are reused for different input-output pairs, leading to fewer queries to the channel $\map{C}$.

\begin{function}[H]
    \caption{independence($\map{C}, N, \chi_-$)\label{alg:order_linear2}\label{alg:ind}}
    \Indentp{0.5em}
    \SetInd{0.5em}{1em}
    \SetKwInOut{Input}{Input}
    \SetKwInOut{Output}{Output}
    \SetKwFunction{estimatechi}{estimatechi1}

    \ResetInOut{Input}
    \Input{Black-box access of a quantum comb $\map{C}$ with input wires $A_1,\dots,A_n$ and output wires $B_1,\dots,B_n$}
    \ResetInOut{Output}
    \Output{An array $\ind$ where $\ind_{i,j}$ being \True{} indicates $A_i$ and $B_j$ are independent}
    \BlankLine

    Pick informationally complete linearly independent POVMs $\{P_{\alpha}^{i}\}_{\alpha=1}^{d_{A_i}^2}$ for each input $A_i$ and $\{Q_{\beta}^{j}\}_{\beta=1}^{d_{B_j}^2}$ for each output $B_j$\; 
    Define ${\psi_\alpha^{i}}:= (P_{\alpha}^{i})^T/\Tr[P_{\alpha}^{i}],~ \forall i=1,\dots,n,~\alpha=1,\dots,d_{A_i}^2$\;
    Generate $N$ random $n$-tuples $\{(a_1^{(k)},a_2^{(k)},\dots,a_n^{(k)})\}_{k=1}^{N}$ where $a_i^{(k)}$ is chosen from $\{1,\dots,d_{A_i}^2\}$ with probability $\Pr(a_i^{(k)}=\alpha ) = \Tr[P_{\alpha}^{i}]/d_{A_i}$\;
    \For{$k \gets 1$ \KwTo $N$}
    {
        Apply $\map{C}$ to input state $\psi^1_{a_1^{(k)}} \otimes \psi^2_{a_2^{(k)}} \dots \otimes \psi^n_{a_n^{(k)}}$ \;
        For each $j$, measure $j$-th output wire with $\{Q_{\beta}^{j}\}_{\beta=1}^{d_{B_j}^2}$. Let the outcomes be $(b_1^{(k)},b_2^{(k)},\dots,b_n^{(k)}), ~ b_j^{(k)} \in \{1,\dots,d_{B_j}^2\}$ \;
    }
    \For{$i \gets 1$ \KwTo $n$}
    {
        \For{$j \gets 1$ \KwTo $n$}
        {
            \uIf{$\estimatechi(\{(a_i^{(k)},b_j^{(k)})\}_{k=1}^{N}, \{ P_\alpha^i \}, \{ Q_\beta^j \}) \leq \chi_-$}
            {
                $\ind_{i,j} \gets \True$ \;
            }
            \Else
            {
                $\ind_{i,j} \gets \False$ \;
            }
        }
    }
    Output $\ind$\;
\end{function}

To understand \autoref{alg:ind}, we observe that $\{(a_i^{(k)},b_j^{(k)})\}_{k=1}^{N}$ are sampled from the same distribution as the outcomes of measuring $C_{A_i,B_j}$ with $\{P_\alpha^i \otimes Q_\beta^j\}$, and $\FuncSty{estimatechi1}$ produces an estimate of $\chione{C}{A_i}{B_j}$. Conditioned on that all $n^2$ number of calls to $\FuncSty{estimatechi1}$ succeed, which has probability no less than $1-n^2\kappa_0$, all the estimations $\chione{C}{A_i}{B_j}$ have error no greater than $\varepsilon_0$. We summarize these observations in the following lemma:

\begin{lem}\label{lem:ind}
With probability $1-n^2\kappa_0$, \autoref{alg:ind} produces an output satisfying:
\begin{enumerate}
    \item if $\ind_{i,j} = \False$, then $\chi_1(C_{A_i,B_j}) > \chi_- -\varepsilon_0$
    \item if $\ind_{i,j} = \True$, then $\chi_1(C_{A_i,B_j}) \leq \chi_- + \varepsilon_0$
\end{enumerate}
where $\kappa_0$ and $\varepsilon_0$ are related as
\begin{align} \label{eq:kappaepsilon2}
\kappa_0 = 2 (d_A^2d_B^2+d_A^2+d_B^2) e^{-2 \xi^2 \varepsilon_0^2 N} .
\end{align}
where $\xi:=\frac{\sqrt{\lambda_{\min,\rm in}\lambda_{\min,\rm out}}}{ \sqrt{d_A^2 d_B^2 + 4d_B^2 + 4d_A^2 } d_Ad_B}$,  $\lambda_{\min,\rm in}:=\min{\lambda_{\min}(F_i)}$, $\lambda_{\min,\rm out}:=\min{\lambda_{\min}(G_i)}$, and $F_i$ ($G_i$) is the frame opertor of $\{P_{\alpha}^{i}\}_{\alpha=1}^{d_{A_i}^2}$ ($\{Q_{\beta}^{j}\}_{\beta=1}^{d_{B_j}^2}$).
\end{lem}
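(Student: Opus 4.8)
The plan is to reduce \autoref{lem:ind} to \autoref{lem:alg_ind}, applied separately to each input--output pair, followed by a union bound. The first step is to pin down precisely the distribution of the classical data that \autoref{alg:ind} feeds into $\FuncSty{estimatechi1}$. I would show that, for every fixed pair $(i,j)$, the samples $\{(a_i^{(k)},b_j^{(k)})\}_{k=1}^{N}$ are independent across $k$ and are drawn from exactly the distribution obtained by measuring the marginal Choi state $C_{A_i,B_j}$ (the reduction of $C$ to systems $A_i$ and $B_j$) with the product POVM $\{P_\alpha^i\otimes Q_\beta^j\}$. Independence across $k$ is immediate because each query of \autoref{alg:ind} uses freshly and independently sampled input labels. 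For the distribution, I would use the multipartite version of Eq.~(\ref{eq:channel_statistics}): sampling $\psi^1_{a_1}\otimes\cdots\otimes\psi^n_{a_n}$ with probability $\prod_\ell \Tr[P^\ell_{a_\ell}]/d_{A_\ell}$, applying $\map{C}$, and measuring the outputs with $\bigotimes_\ell\{Q^\ell_{\beta_\ell}\}$ reproduces the statistics of measuring the full Choi state $C$ with $\bigotimes_\ell\{P^\ell_{\alpha_\ell}\}\otimes\bigotimes_\ell\{Q^\ell_{\beta_\ell}\}$; then, since $\sum_\alpha\Tr[P^\ell_\alpha]/d_{A_\ell}=1$ and $\sum_\beta Q^\ell_\beta=I$ for every $\ell$, summing the joint probabilities over every label except $a_i$ and $b_j$ replaces each unused $P^\ell_{\alpha_\ell}$ and $Q^\ell_{\beta_\ell}$ by the identity, which under the trace is precisely the partial trace over the corresponding system. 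Hence the marginal probability of $(a_i,b_j)$ equals $\Tr[(P^i_\alpha\otimes Q^j_\beta)\,C_{A_i,B_j}]$. Since $\FuncSty{estimatechi1}$ depends only on the classical outcome labels, \autoref{lem:alg_ind} applies verbatim with $\rho_{AB}=C_{A_i,B_j}$, even though the same $N$ queries are reused for all $n^2$ pairs.

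Next I would invoke \autoref{lem:alg_ind} with $\rho_{AB}=C_{A_i,B_j}$ and $\varepsilon=\varepsilon_0$ for each pair. This gives that the value $\chi_1(\hat\rho_{A_i,B_j})$ returned by $\FuncSty{estimatechi1}$ obeys $|\chi_1(C_{A_i,B_j})-\chi_1(\hat\rho_{A_i,B_j})|\le\varepsilon_0$ except with probability at most the per-pair quantity $\kappa_{ij}(\varepsilon_0)$ given by Eq.~(\ref{eq:kappaepsilon}) with the constants $d_{A_i},d_{B_j},\lambda_{\min}(F_i),\lambda_{\min}(G_j)$. I would then observe that replacing these by the uniform quantities $d_A=\max_i d_{A_i}$, $d_B=\max_j d_{B_j}$, $\lambda_{\min,\rm in}=\min_i\lambda_{\min}(F_i)$, $\lambda_{\min,\rm out}=\min_j\lambda_{\min}(G_j)$ only enlarges this bound: the prefactor $2(d_A^2d_B^2+d_A^2+d_B^2)$ dominates its per-pair counterpart, while the constant $\xi$ of \autoref{lem:ind} is a lower bound on every per-pair $\xi_{ij}$, because passing from $\xi_{ij}$ to $\xi$ only shrinks the numerator and only enlarges the denominator, so that $e^{-2\xi^2\varepsilon_0^2N}\ge e^{-2\xi_{ij}^2\varepsilon_0^2N}$. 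Thus each of the $n^2$ estimates fails with probability at most $\kappa_0$ as defined in Eq.~(\ref{eq:kappaepsilon2}), and a union bound yields that, with probability at least $1-n^2\kappa_0$, all of them are within $\varepsilon_0$ of the corresponding true value $\chi_1(C_{A_i,B_j})$.

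Finally, I would condition on this event and read off the two implications from the thresholding rule of \autoref{alg:ind}: if $\ind_{i,j}=\False$ then $\chi_1(\hat\rho_{A_i,B_j})>\chi_-$, whence $\chi_1(C_{A_i,B_j})>\chi_--\varepsilon_0$; and if $\ind_{i,j}=\True$ then $\chi_1(\hat\rho_{A_i,B_j})\le\chi_-$, whence $\chi_1(C_{A_i,B_j})\le\chi_-+\varepsilon_0$. The main obstacle I anticipate is the first step --- carefully verifying that marginalizing the multipartite sampling-and-measurement statistics over the unused wires reproduces the marginal Choi state $C_{A_i,B_j}$, so that \autoref{lem:alg_ind} is legitimately applicable with $\rho_{AB}=C_{A_i,B_j}$, together with the bookkeeping that makes the uniform constants $d_A,d_B,\lambda_{\min,\rm in},\lambda_{\min,\rm out}$ valid simultaneously for all pairs. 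Everything after that is a routine union bound and a comparison of thresholds.
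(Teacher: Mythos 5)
Your proposal is correct and follows essentially the same route as the paper: identify the marginal statistics of $(a_i^{(k)},b_j^{(k)})$ with measurements of $C_{A_i,B_j}$ via Eq.~(\ref{eq:channel_statistics}), apply \autoref{lem:alg_ind} per pair, and take a union bound over the $n^2$ pairs. You are in fact somewhat more explicit than the paper on two points it treats as immediate observations, namely the marginalization over unused wires and the replacement of per-pair constants by the uniform $d_A,d_B,\lambda_{\min,\rm in},\lambda_{\min,\rm out}$.
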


\autoref{lem:ind_informal} is a paraphrase of \autoref{lem:ind} by rewriting Eq. (\ref{eq:kappaepsilon2}) and choosing $\{P_{\alpha}^{i}\}_{\alpha=1}^{d_{A_i}^2}$ and $\{Q_{\beta}^{j}\}_{\beta=1}^{d_{B_j}^2}$ to be SIC-POVMs so that $\lambda_{\min,\rm in} = O(d_A^2)$ and $\lambda_{\min,\rm out} = O(d_B^2)$.

\subsubsection{Proof of Lemma \ref{lem:alg_ind}} \label{appss:alg_ind}


\begin{lem} \label{lem:2normbounds}
    Let $\{P_\alpha\}$ be an informationally complete POVM and $F :=\sum_\alpha \kketbbra{P_\alpha}$ be its frame operator. For two Hermitian operators $\rho$ and $\sigma$, define $p_\alpha:=\Tr[P_\alpha \rho]$ and $q_\alpha:=\Tr[P_\alpha \sigma]$. Then one has
    \begin{align} \label{eq:2normbounds}
        \frac{\sum_{\alpha}(p_\alpha-q_\alpha)^2}{\lambda_{\max}(F)} \leq \|\rho - \sigma\|_2^2 \leq \frac{\sum_{\alpha}(p_\alpha-q_\alpha)^2}{\lambda_{\min}(F)} \,,
    \end{align}
    where $\|X\|_2:= \sqrt{\Tr[X^\dag X]}$ denotes the Schatten 2-norm, also known as the Frobenius norm, and $\lambda_{\max}(F)$ and $\lambda_{\min}(F)$ are the maximum and minimum eigenvalues of $F$, respectively.
\end{lem}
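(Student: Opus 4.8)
The plan is to pass to the vectorized (``double-ket'') picture, where both sides of the claimed inequality become quadratic forms in one and the same vector, so that the estimate reduces to the variational characterization of the eigenvalues of a single Hermitian operator. Set $X := \rho - \sigma$, which is Hermitian, and recall from \autoref{def:frame} the notation $\kket{X} = (X \otimes I)\kket{I} = \sum_{ij} X_{ij}\ket i \ket j$. Two elementary identities do all the work. First, since each $P_\alpha$ is Hermitian, $\bbrakket{P_\alpha}{X} = \Tr[P_\alpha^\dagger X] = \Tr[P_\alpha X] = p_\alpha - q_\alpha$; in particular this quantity is real. Second, $\bbrakket{X}{X} = \sum_{ij}|X_{ij}|^2 = \Tr[X^\dagger X] = \|\rho-\sigma\|_2^2$.

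Combining the first identity with the definition $F = \sum_\alpha \kketbbra{P_\alpha}$ gives, using the reality of $p_\alpha - q_\alpha$,
\begin{align}
\sum_\alpha (p_\alpha - q_\alpha)^2 = \sum_\alpha \bigl|\bbrakket{P_\alpha}{X}\bigr|^2 = \bbra{X}\Bigl(\sum_\alpha \kketbbra{P_\alpha}\Bigr)\kket{X} = \bbra{X} F \kket{X}\,.
\end{align}
Hence the statement to be proved is exactly the pair of inequalities $\lambda_{\min}(F)\,\bbrakket{X}{X} \le \bbra{X} F \kket{X} \le \lambda_{\max}(F)\,\bbrakket{X}{X}$, which is the standard two-sided bound obtained by diagonalizing the Hermitian (indeed positive) operator $F$ and expanding $\kket{X}$ in its eigenbasis.

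The only point requiring a remark is the legitimacy of dividing by $\lambda_{\min}(F)$: because $\{P_\alpha\}$ is informationally complete, the vectors $\{\kket{P_\alpha}\}$ span $\spc{H}\otimes\spc{H}$, so $F = \sum_\alpha \kketbbra{P_\alpha}$ is positive definite and $\lambda_{\min}(F) > 0$. Dividing the displayed operator inequalities through by $\lambda_{\max}(F)$ and $\lambda_{\min}(F)$ respectively, and substituting $\bbrakket{X}{X} = \|\rho-\sigma\|_2^2$ and $\bbra{X}F\kket{X} = \sum_\alpha(p_\alpha-q_\alpha)^2$, yields \Eq{eq:2normbounds}. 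I do not anticipate any genuine obstacle: the argument is short once the vectorization conventions are set up, and the only care needed is to note that $p_\alpha - q_\alpha$ is real (so that $(p_\alpha-q_\alpha)^2 = |\bbrakket{P_\alpha}{X}|^2$) and to invoke informational completeness for the strict positivity of $F$.
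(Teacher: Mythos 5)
Your proof is correct and follows essentially the same route as the paper's: both reduce the claim to the identity $\sum_\alpha(p_\alpha-q_\alpha)^2=\bbra{X}F\kket{X}$ in the vectorized picture and then apply the two-sided eigenvalue bound for the quadratic form of $F$ (the paper merely factors $F=S^\dag S$ through an intermediate operator $S$, which is the same computation). Your added remark that informational completeness makes $F$ positive definite, so that $\lambda_{\min}(F)>0$, is a worthwhile clarification the paper leaves implicit.
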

\begin{proof}
    Let $T := \rho - \sigma$ and $t_{\alpha} := p_\alpha - q_\alpha$. Then $t_\alpha = \Tr[P_\alpha \rho]-\Tr[P_\alpha \sigma] = \Tr[P_\alpha T]$. Define the vector $\ket{t} := \sum_\alpha t_\alpha \ket{\alpha}$, and operator $S := \sum_\alpha \ket{\alpha}\bbra{P_\alpha}$, where $\{\ket\alpha\}$ is a set of orthonormal vectors indexed by $\alpha$. Then one has $\ket{t} = S\kket{T}$ and $S^\dag S = F$.
    \begin{align}
        \sum_\alpha t_\alpha^2 =  \braket{t}{t} = \bbra{T} S^\dag S \kket{T} = \bbra{T} F \kket{T}\,.
    \end{align}
    Since $\lambda_{\min}(F)\bbrakket{T}{T} \leq \bbra{T} F \kket{T} \leq \lambda_{\max}(F)\bbrakket{T}{T}$, we obtain
    \begin{align}
        \lambda_{\min}(F)\bbrakket{T}{T} \leq \sum_\alpha t_\alpha^2 \leq \lambda_{\max}(F)\bbrakket{T}{T} \,,
    \end{align}
    which is equivalent to Eq. (\ref{eq:2normbounds}) since $\bbrakket{T}{T} = \|T\|_2^2$.
\end{proof}

\begin{proof}[Proof of \autoref{lem:alg_ind}]
Let $\hat\rho_A := \Tr_B[\hat\rho_{AB}]$ and $\hat\rho_B := \Tr_A[\hat\rho_{AB}]$ be the marginal states of the reconstructed state $\hat\rho_{AB}$.
Define $\tau := \rho_{AB} - \rho_A \otimes \rho_B$ and $\hat\tau := \hat\rho_{AB} - \hat\rho_A \otimes \hat\rho_B$, the error of \autoref{alg:c1} is then the difference between $\|\tau\|_1=\chione\rho{A}{B}$ and $\|\hat{\tau}\|_1=\chione{\hat\rho}{A}{B}$. 

Let $p_{\alpha\beta}:= \Tr[(P_\alpha \otimes Q_\beta)\rho_{AB}]$, $p^A_{\alpha}:=\sum_\beta p_{\alpha\beta}$ and $p^B_{\beta} = \sum_\alpha p_{\alpha\beta}$. Let $\hat{p}^A_{\alpha}:=\sum_\beta \hat{p}_{\alpha\beta}$ and $\hat{p}^B_{\beta} = \sum_\alpha \hat{p}_{\alpha\beta}$. Over the $N$ measurement outcomes, $\hat{p}_{\alpha\beta}$ is the average of independent Bernouli random variables, and has mean $p_{\alpha\beta}$. The same is true for $\hat{p}^A_{\alpha}$ and $\hat{p}^B_{\beta}$, whose means are $p^A_{\alpha}$ and $p^B_{\beta}$, respectively.

Let $\varepsilon_1:= \xi\varepsilon$. By Hoeffding's inequality,
\begin{align}
    \Pr(|p_{\alpha\beta} - \hat{p}_{\alpha\beta}| \leq \varepsilon_1) &\geq 1-2 e^{-2\varepsilon_1^2 N} \\
    \Pr(|p^A_{\alpha} - \hat{p}^A_{\alpha}| \leq \varepsilon_1) &\geq 1-2 e^{-2\varepsilon_1^2 N}\\
    \Pr(|p^B_{\beta} - \hat{p}^B_{\beta}| \leq \varepsilon_1) &\geq 1-2 e^{-2\varepsilon_1^2 N}
\end{align}
By the union bound,
\begin{align}
    \Pr\left[ \left(\forall\alpha,\beta, |p_{\alpha\beta} - \hat{p}_{\alpha\beta}| \leq \varepsilon_1\right) \wedge \left(\forall \alpha, |p^A_{\alpha} - \hat{p}^A_{\alpha}| \leq  \varepsilon_1\right) \wedge \left(\forall \beta, |p^B_{\beta} - \hat{p}^B_{\beta}| \leq \varepsilon_1\right)  \right] \geq 1-2 (d_A^2d_B^2+d_A^2+d_B^2) e^{-2\varepsilon_1^2 N},
\end{align}
namely with probability no less than $1-2 (d_A^2d_B^2+d_A^2+d_B^2) e^{-2\varepsilon_1^2 N} = 1-\kappa(\varepsilon)$, all $(d_A^2d_B^2+d_A^2+d_B^2)$ number of the following inequalities
\begin{align}
    |p_{\alpha\beta} - \hat{p}_{\alpha\beta}| \leq & \varepsilon_1, ~\forall \alpha,\beta \label{eq:phat1} \\
    |p^A_{\alpha} - \hat{p}^A_{\alpha}| \leq & \varepsilon_1, ~\forall \alpha\\
    |p^B_{\beta} - \hat{p}^B_{\beta}| \leq & \varepsilon_1, ~\forall \beta \label{eq:phat3}
\end{align}
are satisfied. We then analyze the estimation error of $\chione\rho{A}{B}$ in this case. Define $t_{\alpha\beta}:=p_{\alpha\beta} - p^A_{\alpha}p^B_{\beta}$. By Eqs. (\ref{eq:phat1}-\ref{eq:phat3}),
\begin{align}
|t_{\alpha\beta} - \hat{t}_{\alpha\beta}|  & = |p_{\alpha\beta} - \hat{p}_{\alpha\beta} - p^A_{\alpha}p^B_{\beta} + \hat{p}^A_{\alpha}\hat{p}^B_{\beta}| \\
&\leq  |p_{\alpha\beta} - \hat{p}_{\alpha\beta}| + |p^A_{\alpha}p^B_{\beta} - p^A_{\alpha}\hat{p}^B_{\beta}|  + |p^A_{\alpha}\hat{p}^B_{\beta} - \hat{p}^A_{\alpha}\hat{p}^B_{\beta}| \\
&\leq  \varepsilon_1 + p^A_{\alpha}|p^B_{\beta} -\hat{p}^B_{\beta}|  + |p^A_{\alpha} - \hat{p}^A_{\alpha}| \hat{p}^B_{\beta} \\
&\leq  (1+p^A_{\alpha} + \hat{p}^B_{\beta})\varepsilon_1
\end{align}

Since $\sum_\alpha p^A_{\alpha} = \sum_\beta \hat{p}^B_{\beta} = 1$, we have
\begin{align}
\sum_{\alpha,\beta} (t_{\alpha\beta} - \hat{t}_{\alpha\beta})^2 &\leq \sum_{\alpha,\beta} (1+p^A_{\alpha} + \hat{p}^B_{\beta})^2\varepsilon_1^2\\
&\leq  \sum_{\alpha,\beta} (1+4p^A_{\alpha} + 4\hat{p}^B_{\beta}) \varepsilon_1^2\\
& = (d_A^2 d_B^2 + 4d_B^2 + 4d_A^2) \varepsilon_1^2
\end{align}

Note that $t_{\alpha\beta} = \Tr[(P_\alpha\otimes Q_\beta)\tau]$ and $\hat{t}_{\alpha\beta} = \Tr[(P_\alpha\otimes Q_\beta)\hat{\tau}]$. Applying \autoref{lem:2normbounds} to POVM $\{P_\alpha\otimes Q_\beta\}$ and operators $\tau$ and $\hat{\tau}$, we obtain
\begin{align}
\|\tau - \hat{\tau}\|_2^2 \leq \frac{\sum_{\alpha,\beta}\left(t_{\alpha\beta}-\hat{t}_{\alpha\beta}\right)^2}{\lambda_{\min}(F\otimes G)} \leq \frac{(d_A^2 d_B^2 + 4d_B^2 + 4d_A^2) \varepsilon_1^2}{\lambda_{\min}(F)\lambda_{\min}(G)}
\end{align}

Using the inequality $\|X\|_1^2 \leq \rank(X) \|X\|_2^2$ for any operator $X$ \cite{coles2019strong}, we have
\begin{align}
\|\tau - \hat{\tau}\|_1 \leq \sqrt{\rank(\tau - \hat{\tau})}\|\tau - \hat{\tau}\|_2 \leq d_Ad_B \sqrt{\frac{d_A^2 d_B^2 + 4d_B^2 + 4d_A^2 }{\lambda_{\min}(F)\lambda_{\min}(G)}} \, \varepsilon_1 = \varepsilon
\end{align}
which proves Eq. (\ref{eq:c1_bound}) since $| \|\tau\|_1 - \|\hat{\tau}\|_1 | \leq \|\tau - \hat\tau\|_1$.
\end{proof}

\subsection{Details of the algorithm under Assumption \ref{ass:totalorder}}\label{app:totalorder}

In this section, we give the pseudocode of the algorithm under \autoref{ass:totalorder} and discuss its correctness and efficiency.

\begin{algorithm}[H]
    \caption{Quantum causal unravelling algorithm under \autoref{ass:totalorder}}\label{alg:order_linear}
    \Indentp{0.5em}
    \SetInd{0.5em}{1em}
    \SetKwInOut{Preprocessing}{Preprocessing}
    \SetKwInOut{Input}{Input}
    \SetKwInOut{Output}{Output}
    \SetKwFor{Whenever}{whenever}{do}{end}

    \ResetInOut{Input}
    \Input{Black-box access of a quantum comb $\map{C}$ with input wires $A_1,\dots,A_n$ and output wires $B_1,\dots,B_n$, number of queries $N$}
    \ResetInOut{Output}
    \Output{Ordering of inputs and outputs} 
    \BlankLine

    $\ind \gets \independent(\map{C}, N, \chi_{\min}/2)$\;
    \For{$k \gets 1$ \KwTo $n$}
    {
        $c_A(k) \gets | \{ i | \ind_{k,i}=\False \}|$ \;
        $c_B(k) \gets | \{ i | \ind_{i,k}=\False \}|$ \;
    }
    Sort $\{A_1,\dots,A_n\}$ in descending order of $c_A(k)$ as $\{A_{\sigma(1)},\dots,A_{\sigma(n)}\}$\;
    Sort $\{B_1,\dots,B_n\}$ in ascending order of $c_B(k)$ as $\{B_{\pi(1)},\dots,B_{\pi(n)}\}$\;
    Output $(A_{\sigma(1)},B_{\pi(1)}),\dots,(A_{\sigma(n)},B_{\pi(n)})$\;
\end{algorithm}


Here the threshold $\chi_-$ is set to $\chi_-=\chi_{\min}/2$. Now we discuss the correctness and efficiency of \autoref{alg:order_linear} by proving \autoref{thm:order_linear}.


\begin{proof}[Proof of \autoref{thm:order_linear}]
    Let $\varepsilon_0 = \chi_{\min}/3$ and $\kappa_0 = \kappa / n^2$. By \autoref{lem:ind}, with probability $1-n^2 \kappa_0 = 1-\kappa$, the independence tests produce the true answers: (i) if $\ind_{i,j}=\True$, then $\chione{C}{A_i}{B_j} \leq \chi_- + \varepsilon_0 < \chi_{\min}$, and according to \autoref{ass:totalorder}, $A_i$ and $B_j$ are independent; (ii) if $\ind_{i,j}=\False$, then $\chione{C}{A_i}{B_j} > \chi_- - \varepsilon_0 > 0$, and according to \autoref{ass:totalorder}, $A_i$ and $B_j$ are correlated.
    In this case, \autoref{ass:totalorder} further ensures that $c_A(1),\dots,c_A(n)$ ($c_B(1),\dots,c_B(n)$) are distinct and \autoref{alg:order_linear} always gives the correct causal unravelling.

    The running time of \autoref{alg:order_linear} mostly attributes to \autoref{alg:ind}. We first discuss how to choose $N$, the number of queries to $\map{C}$. The relationship between $N$, $\varepsilon_0$ and $\kappa_0$ is given in \autoref{lem:ind_informal} as 
    $N = O\left(d_A^6 d_B^6 \varepsilon_0^{-2} \log(d_A d_B \kappa_0^{-1}) \right)$. 
    By our choice of parameters $\varepsilon_0 = \chi_{\min}/3$ and $\kappa_0 = \kappa / n^2 $, we have
    \begin{align}
        N = O\left(d_A^6 d_B^6 \chi_{\min}^{-2} \log(n d_A d_B \kappa^{-1}) \right). 
    \end{align}

    To analyze the computational complexity, we consider the two main parts of \autoref{alg:ind}: the query part and the estimation part. The query part invokes the channel $N$ times, and each time $n$ state preparation and $n$ measurements are performed. Here we assume that state preparation on a $d$-dimensional Hilbert space takes time $O(d)$ \cite{plesch2011quantum} and performing a POVM with $m$ outcomes takes time $O(m^2)$, according to the Naimark's dilation theorem that converts $m$-outcome POVM to projective measurement via an $m$-dimensional unitary gate \cite{peres2006quantum} implementable with $O(m^2)$ basic gates \cite{mottonen2004quantum}.  To perform the POVM with $m=d_B^2$ elements, we need $O(d_B^4)$ gates, and thus the query part takes a total time $O(Nn(d_A+d_B^4))$. The estimation part invokes \autoref{alg:c1} for $n^2$ times, and each call to \autoref{alg:c1} takes time $O(N+d_A^4d_B^4)$, $O(N)$ for computing $\hat{p}$ and $O(d_A^4d_B^4)$ for handling matrices and vectors. We don't count the time of inverting the matrix $F$ ($G$) since they are known before the execution of the algorithm and their inverses can be computed in advance. To sum up, the computational complexity of \autoref{alg:order_linear} is $O(Nn(n+d_A+d_B^4)+n^2d_A^4d_B^4)$. If we take $N=\Omega(d_A^4d_B^4)$, consistent with the analysis of the sample complexity in the previous paragraph, the first term dominates and the computational complexity is $O(Nn(n+d_A+d_B^4))$.
\end{proof}

\subsection{Details of the algorithm under Assumption \ref{ass:memoryless}}\label{app:memoryless}

In this section, we give the pseudocode of the algorithm under \autoref{ass:memoryless} and the proof of its correctness (\autoref{thm:memoryless}).

    \begin{algorithm}[H]
        \caption{Quantum causal unravelling algorithm under \autoref{ass:memoryless}\label{alg:memoryless}}
        \Indentp{0.5em}
        \SetInd{0.5em}{1em}
        \SetKwInOut{Input}{Input}
        \SetKwInOut{Output}{Output}
        \SetKw{Break}{break}
        \SetKwData{True}{true}
        \SetKwData{False}{false}
        \SetKwData{ind}{ind}
        \SetKwData{matched}{matched}
        \SetKwFunction{independent}{independence}
    
        \ResetInOut{Input}
        \Input{Black-box access of a quantum comb $\map{C}$ with input wires $A_1,\dots,A_n$ and output wires $B_1,\dots,B_n$, number of queries $N$, threshold $\chi_-$}
        \ResetInOut{Output}
        \Output{Ordering of inputs and outputs} 
        \BlankLine
    
        $\ind \gets \independent(\map{C}, N, \chi_-)$\;
        Initialize Boolean arrays $\matched_A$ and $\matched_B$ to $\False$\;
        \For{$i \gets 1$ \KwTo $n$}
        {
            \For{$j \gets 1$ \KwTo $n$}
            {
                \If{$\ind_{i,j} = \False$}
                {
                    $\pi(i) \gets j$\;
                    $\matched_A(i) \gets \True$\;
                    $\matched_B(j) \gets \True$\;
                    \Break\tcp*{Exit the loop for $j$}
                }
            }
        }\label{line:memoryless_match}
        \For{$i \gets 1$ \KwTo $n$}
        {
            \If{$\matched_A(i) = \False$}
            {
                \For{$j \gets 1$ \KwTo $n$}
                {
                    \If{$\matched_B(j) = \False$}
                    {
                        $\pi(i) \gets j$\;
                        $\matched_A(i) \gets \True$\;
                        $\matched_B(j) \gets \True$\;
                        \Break\tcp*{Exit the loop for $j$}
                    }
                }
            }
        }
        Output $(A_{1},B_{\pi(1)}),\dots,(A_{n},B_{\pi(n)})$\;
    \end{algorithm}
    \autoref{alg:memoryless} first matches the pairs $(A_i,B_j)$ that are correlated, and according to \autoref{ass:memoryless}, $A_i$ and $B_j$ must belong to the same tooth. For the inputs and outputs left unmatched by the first step, they can be matched arbitrarily, since the inputs and outputs have no correlation and are compatible with any causal unravelling.
    
    If we further assume that for each correlated input-output pair $(A_i,B_j)$, one has $\chione\rho{A_i}{B_j} \geq \chi_{\min}$ for some $\chi_{\min}>0$, then \autoref{alg:memoryless} produces an exact answer ($\map{C} \in \set{Comb}[ (A_1,B_{\pi(1)}),\dots,(A_n,B_{\pi(n)}) ]$) if one take $\chi_- = \chi_{\min}/3$, and the sample complexity is given by Eq. (\ref{eq:order_linear}). On the other hand, if such assumption of $\chi_{\min}$ does not exist, the algorithm still produces an answer, which is approximate. 

Here we consider the case where no assumption on $\chi_{\min}$ is made, and give the correctness and efficiency (\autoref{thm:memoryless}) merely based on \autoref{ass:memoryless}.

\begin{proof}[Proof of \autoref{thm:memoryless}]

    Picking $\chi_- = \varepsilon_0$, according to \autoref{lem:ind}, with probability $1-n^2 \kappa_0$,
    \begin{enumerate}
        \item if $\ind_{i,j} = \False$, then $\chione C{A_i}{B_j} > 0$
        \item if $\ind_{i,j} = \True$, then $\chione C{A_i}{B_j} \leq 2\chi_-$
    \end{enumerate}
    Since by \autoref{ass:memoryless}, for every input $A_i$, there is at most one output $B_j$ such that $\chione C{A_i}{B_j} > 0$, and therefore there is at most one $j$ satisfying $\ind_{i,j}=\False$. Let $A_{\rm matched} := \{A_i | \exists j, \ind_{i,j}=\False \}$ and $B_{\rm matched}:=\{ B_{\pi(i)} | A_i \in A_{\rm matched}\}$ be the inputs and outputs that are matched up to \autoref{line:memoryless_match} of \autoref{alg:memoryless}. 

    Define the channel $\map{D}$ by its action on product states as follows:
    \begin{align}
        \map{D}(\rho_{A_1} \otimes \dots \otimes \rho_{A_n}) :=  \bigotimes_{i \in A_{\rm matched}} \map{C}_{A_i \to B_{\pi(i) }}(\rho_{A_i}) \otimes   \bigotimes_{i \notin A_{\rm matched}}  \map{C}_{A_i \nrightarrow B_{\pi(i)} }(\rho_{A_i})
    \end{align}
    where $\map{C}_{A_i \to B_{\pi(i) }}(\rho_{A_i}) := \Tr_{B_{\neq \pi(i)}}[\map{C}(\rho_{A_i} \otimes I_{A_{\neq i}}/d_{A_{\neq i}})]$ is channel $\map{C}$ restricted on input $A_i$ and output $B_{\pi(i)}$, and $\map{C}_{A_i \nrightarrow B_{\pi(i)} }(\rho_{A_i}) := \Tr[\rho_{A_i}] C_{B_{\pi(i)}}$ is a constant channel that outputs $C_{B_{\pi(i)}}$, the marginal Choi state of $\map{C}$ on system $B_{\pi(i)}$.

    Clearly, $\map{D} \in \set{Comb}[ (A_{1}, B_{\pi(1)}), \dots , (A_{n}, B_{\pi(n)}) ]$, since it is a tensor product of channels from $A_{i}$ to $B_{\pi(i)}$. Now we show that the difference between $\map{C}$ and $\map{D}$ is small.

    According to \autoref{ass:memoryless}, $\map{C} = \bigotimes_{i=1}^n \map{C}_{A_i \to B_{\pi'(i)}} = \bigotimes_{i \in A_{\rm matched}} \map{C}_{A_i \to B_{\pi'(i) }} \otimes   \bigotimes_{i \notin A_{\rm matched}}  \map{C}_{A_i \to B_{\pi'(i) }}$.
    For channel $\map{C}$ and each $i \in A_{\rm matched}$, $A_{i}$ is correlated to $B_{\pi(i)}$ but not correlated to all other outputs, and by the way $\pi$ is computed, one must have $\pi(i)=\pi'(i)$. Therefore,
    \begin{align}
        \| \map{C} - \map{D} \|_\diamond & = \left\|\bigotimes_{i \in A_{\rm matched}} \map{C}_{A_i \to B_{\pi'(i) }} \otimes   \bigotimes_{i \notin A_{\rm matched}}  \map{C}_{A_i \to B_{\pi'(i) }} -  \bigotimes_{i \in A_{\rm matched}} \map{C}_{A_i \to B_{\pi(i) }} \otimes   \bigotimes_{i \notin A_{\rm matched}} \map{C}_{A_i \nrightarrow B_{\pi(i)} } \right\|_\diamond \\
        & = \left\|\bigotimes_{i \notin A_{\rm matched}}  \map{C}_{A_i \to B_{\pi'(i) }}  -  \bigotimes_{i \notin A_{\rm matched}} \map{C}_{A_i \nrightarrow B_{\pi(i)} }  \right\|_\diamond \label{eq:CABCAB}
    \end{align}
    and the difference between $\map{C}$ and $\map{D}$ only occurs in the second part involving $i \notin A_{\rm matched}$. Since
    \begin{align}
        \bigotimes_{i \notin A_{\rm matched}}  \Tr[\rho_{A_i}] C_{B_{\pi(i)}} = \prod_{i \notin A_{\rm matched}}  \Tr[\rho_{A_i}] \bigotimes_{j \notin B_{\rm matched}} C_{B_j} = \bigotimes_{i \notin A_{\rm matched}}  \Tr[\rho_{A_i}] C_{B_{\pi'(i)}}
    \end{align}
    we have $\bigotimes_{i \notin A_{\rm matched}}  \map{C}_{A_i \nrightarrow B_{\pi(i) }} = \bigotimes_{i \notin A_{\rm matched}}  \map{C}_{A_i \nrightarrow B_{\pi'(i) }}$, where $\map{C}_{A_i \nrightarrow B_{\pi'(i) }}$ is the constant channel defined as $\map{C}_{A_i \nrightarrow B_{\pi'(i)} }(\rho_{A_i}) := \Tr[\rho_{A_i}] C_{B_{\pi'(i)}}$. Eq. (\ref{eq:CABCAB}) then becomes
    \begin{align} \label{eq:CDdia}
        \| \map{C} - \map{D} \|_\diamond & = \left\|\bigotimes_{i \notin A_{\rm matched}}  \map{C}_{A_i \to B_{\pi'(i) }}  -  \bigotimes_{i \notin A_{\rm matched}} \map{C}_{A_i \nrightarrow B_{\pi'(i)} }  \right\|_\diamond \leq \sum_{i \notin A_{\rm matched}} \| \map{C}_{A_i \to B_{\pi'(i) }} - \map{C}_{A_i \nrightarrow B_{\pi'(i)} } \|_\diamond
    \end{align}


    Consider the channel $\map{C}_{A_i \to B_{\pi'(i) }}$. For $i \notin A_{\rm matched}$, $\ind_{i,\pi'(i)} = \True$ and with high probability $\chione C{A_i}{B_{\pi'(i)}} \leq 2\chi_-$. By definition of $\chi_1$,
    \begin{align}\label{eq:diff_choi_leq_chi}
        \| C_{A_i,B_{\pi'(i)}} - C_{A_i}\otimes C_{B_{\pi'(i)}} \|_1 = \left\| C_{A_i,B_{\pi'(i)}} - \frac{I_{A_i}}{d_{A_i}}\otimes C_{B_{\pi'(i)}} \right\|_1 \leq 2\chi_-
    \end{align}
    Eq. (\ref{eq:diff_choi_leq_chi}) is the distance between the Choi states of $\map{C}_{A_i \to B_{\pi'(i)} }$ and $\map{C}_{A_i \nrightarrow B_{\pi'(i)}}$, which gives a bound of the diamond-norm distance between the channels:
    \begin{align}\label{eq:CCdchi}
        \| \map{C}_{A_i \to B_{\pi'(i) }} - \map{C}_{A_i \nrightarrow B_{\pi'(i)} } \|_\diamond \leq d_{A_i} \left\| C_{A_i,B_{\pi'(i)}} - \frac{I_{A_i}}{d_{A_i}}\otimes C_{B_{\pi'(i)}} \right\|_1 \leq 2d_{A_i} \chi_-
    \end{align}

    Combining Eqs. (\ref{eq:CDdia}) and (\ref{eq:CCdchi}), we obtain
    \begin{align}
        \| \map{C} - \map{D} \|_\diamond \leq 2 \chi_- \sum_{i \notin A_{\rm matched}}d_{A_i} \leq 2 n d_A \varepsilon_0
    \end{align}
    where $d_A := \max_i d_{A_i}$. The sample complexity of \autoref{alg:memoryless} is given by \autoref{lem:ind_informal} as
    \begin{align}
        N = O\left(d_A^6 d_B^6 \varepsilon_0^{-2} \log(d_A d_B \kappa_0^{-1}) \right)
    \end{align}
    Let $\kappa := n^2 \kappa_0$ and $\varepsilon := 2 n d_A \varepsilon_0$, we have
    \begin{align}
        N = O\left(n^2 d_A^8 d_B^6 \varepsilon^{-2} \log(n d_A d_B \kappa^{-1}) \right)
    \end{align}
    and with probability $1-\kappa$,  $\| \map{C} - \map{D} \|_\diamond \leq \varepsilon$. The running time of \autoref{alg:memoryless} is the same as \autoref{alg:order_linear} and has been discussed in \autoref{app:totalorder}.
\end{proof}

\subsection{Analysis of the causal unravelling algorithm in the  approximate case}\label{app:approximate}

We first define the `approximate rank' of a state to characterize the lowest-rank state close to a given state with respect to the Hilbert-Schmidt distance.

\begin{definition}
    \begin{align}
        \rank_\eta(\rho) := \min_{\sigma: \|\sigma - \rho\|_2 \leq \eta} \rank(\sigma)
    \end{align}
\end{definition}

Using $\rank_\eta$, we write down the assumption for approximate causal unravelling, bounding the approximate ranks of marginal Choi state of $\map{C}$.

\begin{assumption} \label{ass:close_to_low_rank}
Let the output of \autoref{func:recursive} be $(A_{\sigma(1)},B_{\pi(1)}),\dots,(A_{\sigma(n)},B_{\pi(n)})$. 
Let $\map{C}^{(k,k-1)}$ be the reduced map of $\map{C}$ obtained by keeping only the fist $k$ inputs $A_{\sigma(1)},\dots,A_{\sigma(k)}$ and $k-1$ outputs $B_{\pi(1)},\dots,B_{\pi(k-1)}$, and let $C^{(k,k-1)}$ be the Choi state of $\map{C}^{(k,k-1)}$.
 We assume that $C^{(k,k-1)}$ are approximately polynomial-rank, namely
 \begin{align} \label{eq:rank_eta_leq_r}
    \forall k=2,\dots,n, \rank_{\eta^{(k)}}(C^{(k,k-1)}) = r^{(k)} \,, 
 \end{align}
where $\eta^{(k)}$ are small non-negative numbers and $r^{(k)}$ are bounded by a polynomial of $n$.
\end{assumption}

Note that \autoref{ass:close_to_low_rank} can be efficiently verified.
Ref. \cite{o2015quantum} provides an algorithm to decide whether a state $\rho$ is close to a low-rank state, namely whether there exists $\rho'$ such that $\|\rho-\rho'\|_1\leq\varepsilon$ and $\rank(\rho') \leq r$, in time $\Theta(r^2 \varepsilon^{-1})$. The algorithm in Ref. \cite{o2015quantum}  is for trace distance, but can be translated easily to the Hilbert-Schimidt distance via Eq. (\ref{eq:norm12_main}) to verify Eq. (\ref{eq:rank_eta_leq_r}).
Given this algorithm, we can verify \autoref{ass:close_to_low_rank} during the execution of \autoref{func:recursive}.
In the $(n-k+1)$-th recursive call, the value of $\map{C}$ entering the algorithm is $\map{C}^{(k)}$, and the state $C^{(k,k-1)}$ can be obtained by $C^{(k,k-1)} = \Tr_{B_y}[C^{(k)}]$, for every candidate $B_y$. Therefore, we can insert one extra step in the $\mathbf{foreach}$ loop to verify whether $C^{(k,k-1)}$ is close enough to a rank-$r^{(k)}$ state.
In the complexity analysis, we assume \autoref{ass:close_to_low_rank} is true and do not include the time to verify it. 

Under \autoref{ass:close_to_low_rank}, the error and efficiency of \autoref{func:recursive} are given in the following theorem.

\begin{theo}\label{thm:approximate}
Under \autoref{ass:close_to_low_rank}, for any confidence parameter $\kappa_0>0$, \autoref{func:recursive} outputs a causal unravelling $(A_1,B_{\pi(1)}),\dots,(A_n,B_{\pi(n)})$ satisfying the following conditions:

\begin{enumerate}
   \item With probability $1-\kappa_0$, the output of \autoref{func:recursive} is an approximate causal unravelling for $\map{C}$ in the following sense:
      \begin{align}
          \exists \map{D} \in \set{Comb}[ (A_{\sigma(1)},B_{\pi(1)}),\dots,(A_{\sigma(n)},B_{\pi(n)}) ], ~ \|C-D \|_1 \leq 8\sqrt{2} ~ n r_{\max}^{1/4} \eta_{\max}^{1/2}  \,,
      \end{align}
        where $\eta_{\max}:=\max_k \{\eta^{(k)}\}$ and $r_{\max}:=\max_k \{r^{(k)}\}$.
   \item The number of queries to $\map{C}$ is in the order of
    \begin{align}
        T_{\rm sample}=O \left(n^3\eta_{\max}^{-4} \log (n \kappa_0^{-1})\right) \,.
    \end{align}
   \item The computational complexity is in the order of $O(T_{\rm sample}n\log d_A)$.
\end{enumerate}
\end{theo}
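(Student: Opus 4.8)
The plan is to keep the query‑ and gate‑count analysis of \autoref{thm:unitary2} essentially unchanged and to replace only its correctness part. For statements~2 and~3, I would repeat verbatim the counting of the proof of \autoref{thm:unitary2}: \autoref{func:recursive} runs at most $T_{\rm test}(n)\le n^{3}$ independence tests, each of which calls \autoref{func:findlast_direct} with three SWAP tests; setting the per‑test confidence to $\kappa=\kappa_{0}/(3n^{3})$ and using the union bound, all $3n^{3}$ SWAP tests are simultaneously $\varepsilon$‑accurate with probability at least $1-\kappa_{0}$. The only change is the calibration of the thresholds: take $\varepsilon=\delta/5$ and $\delta=\Theta(\eta_{\max}^{2})$, chosen so that $\zeta:=\sqrt{\delta+4\varepsilon}=4\eta_{\max}$. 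Then \autoref{func:SWAPTEST} uses $N=O(\varepsilon^{-2}\log\kappa^{-1})=O(\eta_{\max}^{-4}\log(n\kappa_{0}^{-1}))$ queries per SWAP test, giving $T_{\rm sample}=n^{3}N=O(n^{3}\eta_{\max}^{-4}\log(n\kappa_{0}^{-1}))$ and computational complexity $O(T_{\rm sample}\,n\log d_{A})$, exactly as claimed.

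For statement~1 I would first extract a per‑step estimate. Condition on all SWAP tests being $\varepsilon$‑accurate, and let $\map C^{(k)}$ be the channel entering the $(n-k+1)$‑th recursive call, $C^{(k)}$ its Choi state, and $C^{(k-1)}:=\Tr_{A_{\sigma(k)}B_{\pi(k)}}[C^{(k)}]$ the Choi state of the reduced channel $\map C^{(k-1)}$. The algorithm selects a pair $(A_{\sigma(k)},B_{\pi(k)})$ for which \autoref{func:findlast_direct} returns \True, so by \Eq{eq:p1p2p3_dist}, $\|C_{1}^{(k)}-C_{2}^{(k)}\|_{2}\le\zeta$, where $C_{1}^{(k)}=\Tr_{B_{\pi(k)}}[C^{(k)}]=C^{(k,k-1)}$ and $C_{2}^{(k)}=I_{A_{\sigma(k)}}/d_{A}\otimes C^{(k-1)}$. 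Let $\sigma_{k}$ be the low‑rank state with $\rank(\sigma_{k})\le r^{(k)}$ and $\|\sigma_{k}-C_{1}^{(k)}\|_{2}\le\eta^{(k)}$ granted by \autoref{ass:close_to_low_rank}. Inserting $\sigma_{k}$ and $I_{A_{\sigma(k)}}/d_{A}\otimes\Tr_{A_{\sigma(k)}}[\sigma_{k}]$ into the triangle inequality for $\|C_{1}^{(k)}-C_{2}^{(k)}\|_{1}$, bounding the terms involving $\sigma_{k}$ by \Eq{eq:norm12_main} (in the form $\|X\|_{1}\le 2\sqrt{\min\{\rank\rho,\rank\sigma\}}\,\|X\|_{2}$), and using that the partial trace is a contraction in trace norm and satisfies $\|\Tr_{A}[X]\otimes I_{A}/d_{A}\|_{2}\le\|X\|_{2}$, one obtains
\begin{align}
\mu_{k}\ :=\ \|C_{1}^{(k)}-C_{2}^{(k)}\|_{1}\ \le\ 2\sqrt{r^{(k)}}\,\bigl(4\eta^{(k)}+\zeta\bigr)\ \le\ 16\sqrt{r_{\max}}\,\eta_{\max}\,,
\end{align}
using $\eta^{(k)}\le\eta_{\max}$ and $\zeta=4\eta_{\max}$.

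It then remains to assemble a single exact comb $\map D\in\set{Comb}[(A_{\sigma(1)},B_{\pi(1)}),\dots,(A_{\sigma(n)},B_{\pi(n)})]$ with $\|C-D\|_{1}$ controlled by the $\mu_{k}$. I would repair the teeth one at a time, from the outermost ($k=n$) inward, using the following marginal‑adjustment lemma: if $\rho$ is a state on $\spc H_{X}\otimes\spc H_{Y}$ and $\tau$ a state on $\spc H_{Y}$ with $\|\Tr_{X}\rho-\tau\|_{1}\le\mu$, there is a state $\hat\rho$ with $\Tr_{X}\hat\rho=\tau$ and $\|\hat\rho-\rho\|_{1}\le 2\sqrt{2\mu}$ (purify both, use Uhlmann's theorem and the Fuchs--van~de~Graaf inequality to bring the purifications within Euclidean distance $\sqrt{2\mu}$, and trace out the purifying system). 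Start from $\map D^{[n]}:=\map C$. At step $k$ (from $n$ down to $2$) the channel $\map D^{[k]}$ is already a comb on teeth $k+1,\dots,n$, hence, iterating \autoref{prop:last_tooth}, $\map D^{[k]}=\map H_{k}\circ\map F^{(k)}$, where $\map F^{(k)}$ maps the remaining inputs $A_{\sigma(1)}\cdots A_{\sigma(k)}$ to the outputs $B_{\pi(1)}\cdots B_{\pi(k)}$ plus a memory $M_{k}$ with $\Tr_{M_{k}}\circ\map F^{(k)}=\map C^{(k)}$, and $\map H_{k}$ implements the teeth $k+1,\dots,n$. Apply the marginal‑adjustment lemma to $C_{\map F^{(k)}}$ with $X=B_{\pi(k)}M_{k}$ and target marginal $C^{(k-1)}\otimes I_{A_{\sigma(k)}}/d_{A}$, whose trace distance to $\Tr_{B_{\pi(k)}M_{k}}[C_{\map F^{(k)}}]=\Tr_{B_{\pi(k)}}[C^{(k)}]$ equals $\mu_{k}$; this yields $\map F^{(k-1)}$ with $\|C_{\map F^{(k-1)}}-C_{\map F^{(k)}}\|_{1}\le 2\sqrt{2\mu_{k}}$ that is a comb with last tooth $(A_{\sigma(k)},B_{\pi(k)}M_{k})$ whose reduced channel is \emph{exactly} $\map C^{(k-1)}$ (fixing the $B_{\pi(k)}M_{k}$‑marginal to that product form forces $\Tr_{A_{\sigma(k)}B_{\pi(k)}M_{k}}[C_{\map F^{(k-1)}}]=C^{(k-1)}$ and preserves trace‑preservation and positivity). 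Put $\map D^{[k-1]}:=\map H_{k}\circ\map F^{(k-1)}$: the teeth $k+1,\dots,n$ survive (they sit downstream of $\map F$, and discarding $B_{\pi(k)}$ leaves the output product in $A_{\sigma(k)}$), tooth $k$ is added, and $\|C_{\map D^{[k-1]}}-C_{\map D^{[k]}}\|_{1}\le\|C_{\map F^{(k-1)}}-C_{\map F^{(k)}}\|_{1}\le 2\sqrt{2\mu_{k}}$ since composing with a common downstream block $\map H_{k}$ is trace‑norm non‑increasing on the link product. After $n-1$ steps, $\map D:=\map D^{[1]}$ is a full comb and
\begin{align}
\|C-D\|_{1}\ \le\ \sum_{k=2}^{n}2\sqrt{2\mu_{k}}\ \le\ (n-1)\cdot 2\sqrt{2\cdot 16\sqrt{r_{\max}}\,\eta_{\max}}\ \le\ 8\sqrt{2}\,n\,r_{\max}^{1/4}\,\eta_{\max}^{1/2}\,,
\end{align}
which is statement~1.

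The main obstacle is this last step: one must ensure that repairing tooth $k$ neither damages the already‑repaired outer teeth nor forces a square root onto the \emph{accumulated} error (a bottom‑up reconstruction would do exactly that, since re‑dilating a channel known only up to trace distance $\nu$ costs $O(\sqrt{\nu})$, and iterating $\nu_{k}\approx 2\sqrt{2\mu_{k}}+O(\sqrt{\nu_{k-1}})$ does not close). The fix, implemented above, is to repair top‑down on the full channel while keeping the reduced channel exactly fixed at each step and only ever composing with a common downstream block, so that the $2\sqrt{2\mu_{k}}$ errors add linearly. The residual work is purely bookkeeping: checking that the marginal‑adjusted $C_{\map F^{(k-1)}}$ is a valid Choi state and has the stated comb form via \autoref{prop:last_tooth}, and verifying that the $\mu_{k}$ bound uses \autoref{ass:close_to_low_rank} with the witness taken to be a state.
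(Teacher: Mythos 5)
Your proposal is correct and lands on the stated constants, but it assembles the global comb $\map D$ by a genuinely different route from the paper's. The per-step ingredients coincide up to constants: you combine the SWAP-test accuracy (\autoref{lem:SWAP} plus a union bound over the $3n^3$ tests, $\kappa=\kappa_0/3n^3$) with an approximate-rank conversion from Hilbert--Schmidt to trace norm that is exactly the content of the paper's \autoref{lem:norm_soft_rank} and \autoref{lem:2norm_chi_direct_c} --- the paper gets $\mu_k\le 2\sqrt{r^{(k)}}(\zeta+2\eta^{(k)})$ where your three-term triangle inequality gives $2\sqrt{r^{(k)}}(\zeta+4\eta^{(k)})$, and after tuning $\delta,\varepsilon=\Theta(\eta_{\max}^2)$ both yield $8\sqrt{2}\,n\,r_{\max}^{1/4}\eta_{\max}^{1/2}$; the complexity counting is identical. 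The divergence is the accumulation step. The paper's \autoref{lem:2ton} works \emph{bottom-up} ($i=2$ to $n$): it maintains an isometric comb $E^{(i)*}$ approximating a purification $C^{(i)*}$ of the true marginal, pays $O(\sqrt{\delta^{(i)}})$ via Uhlmann (\autoref{lem:p-tr}) to align the new tooth, and then appends an isometric channel, which preserves trace norm so the accumulated error enters \emph{linearly}. You work \emph{top-down} on the full channel, doing marginal surgery on the first block of a decomposition $\map D^{[k]}=\map H_k\circ\map F^{(k)}$ so that the reduced channel stays exactly $\map C^{(k-1)}$, paying $O(\sqrt{\mu_k})$ per step for the same Uhlmann reason, with contractivity under the fixed downstream block $\map H_k$ keeping the sum linear; your marginal-adjustment lemma is essentially \autoref{lem:p-tr} in disguise. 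Both routes are sound, and yours has the merit of never leaving the space of physical channels. Note, though, that your parenthetical claim that a bottom-up reconstruction necessarily puts a square root on the \emph{accumulated} error is contradicted by the paper's own proof: the isometric-dilation bookkeeping in \autoref{lem:2ton} is precisely what makes bottom-up work with linear accumulation. Two points to tighten in a write-up: the decomposition $\map D^{[k]}=\map H_k\circ\map F^{(k)}$ is the realization theorem for quantum combs \cite{chiribella2008quantum,chiribella2009theoretical}, not merely an iteration of \autoref{prop:last_tooth}; and the step $\|C_{\map D^{[k-1]}}-C_{\map D^{[k]}}\|_1\le\|C_{\map F^{(k-1)}}-C_{\map F^{(k)}}\|_1$ deserves the one-line justification that the Choi state of $\map H_k\circ\map F$ is obtained from $C_{\map F}$ by a fixed completely positive trace-preserving map.
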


We prove \autoref{thm:approximate} in the following subsections. In \autoref{appss:indep_approx}, we analyze the accuracy of the independence tests in the approximate case. In \autoref{appss:recursion_approx} we analyze how the error in each recursion step affects the final error of the algorithm. The main proof of \autoref{thm:approximate} is in \autoref{appss:main_proof_approx}.

\subsubsection{Accuracy of independence tests} \label{appss:indep_approx}

The following lemma is useful for converting the Hilbert-Schimidt distance to the trace distance in the approximate case.
\begin{lem} \label{lem:norm_soft_rank}
For two quantum states $\rho$ and $\sigma$, 
\begin{align}
    & \|\rho - \sigma \|_1 \leq  \sqrt{4\rank_\eta(\sigma)} \left( \|\rho - \sigma\|_2 +2\eta \right)
\end{align}
\end{lem}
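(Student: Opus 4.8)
The plan is to reduce the claimed trace-norm bound to the Hilbert--Schmidt bound already available in Eq.~(\ref{eq:norm12_main}) by inserting the low-rank state that witnesses $\rank_\eta(\sigma)$. Concretely, let $\sigma^\star$ be a state attaining the minimum in the definition of $\rank_\eta(\sigma)$ (the minimum is attained since $\rank$ is integer-valued), so that $\|\sigma^\star-\sigma\|_2\le\eta$ and $\rank(\sigma^\star)=\rank_\eta(\sigma)=:r$. The point is that both $\rho$ and $\sigma$ are close in Hilbert--Schmidt distance to the rank-$r$ operator $\sigma^\star$, and for a difference in which at least one of the two operators has rank at most $r$, the trace norm is controlled by $\sqrt{4r}$ times the Hilbert--Schmidt norm regardless of the rank of the other operator.

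First I would apply the triangle inequality for the trace norm, $\|\rho-\sigma\|_1\le\|\rho-\sigma^\star\|_1+\|\sigma^\star-\sigma\|_1$, and handle the two terms separately. For each term, the elementary inequality $\tfrac{4ab}{a+b}\le 4\min\{a,b\}$ shows that whenever one of $a,b$ equals $r$ the right-hand factor in Eq.~(\ref{eq:norm12_main}) is at most $4r$; since $\rank(\sigma^\star)=r$, this yields $\|\rho-\sigma^\star\|_1\le\sqrt{4r}\,\|\rho-\sigma^\star\|_2$ and $\|\sigma^\star-\sigma\|_1\le\sqrt{4r}\,\|\sigma^\star-\sigma\|_2$, with no dependence on $\rank(\rho)$ or $\rank(\sigma)$.

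Then I would bound the two Hilbert--Schmidt norms. The second is immediate from the defining property of $\sigma^\star$: $\|\sigma^\star-\sigma\|_2\le\eta$, hence $\|\sigma^\star-\sigma\|_1\le\sqrt{4r}\,\eta$. For the first I use the triangle inequality for $\|\cdot\|_2$, $\|\rho-\sigma^\star\|_2\le\|\rho-\sigma\|_2+\|\sigma-\sigma^\star\|_2\le\|\rho-\sigma\|_2+\eta$, so $\|\rho-\sigma^\star\|_1\le\sqrt{4r}\,(\|\rho-\sigma\|_2+\eta)$. Adding the two estimates gives $\|\rho-\sigma\|_1\le\sqrt{4r}\,(\|\rho-\sigma\|_2+2\eta)$ with $r=\rank_\eta(\sigma)$, which is exactly the claim.

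I do not anticipate a real obstacle. The only step that needs a moment of care is that Eq.~(\ref{eq:norm12_main}) is stated for pairs of density operators, so one must note that $\sigma^\star$ may indeed be chosen to be a genuine state (this is built into the definition of $\rank_\eta$ as a minimum over states), and that the upper bound in Eq.~(\ref{eq:norm12_main}) can always be relaxed to $\sqrt{4r}$ via $\tfrac{4ab}{a+b}\le 4b$ with $b=r$ — this is what lets the ranks of $\rho$ and $\sigma$ drop out of the final bound.
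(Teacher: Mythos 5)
Your proposal is correct and follows essentially the same route as the paper's proof: introduce the low-rank witness $\sigma'$ of $\rank_\eta(\sigma)$, split $\|\rho-\sigma\|_1$ by the triangle inequality through $\sigma'$, apply the relaxed form of Eq.~(\ref{eq:norm12_main}) with the factor $\sqrt{4\rank(\sigma')}$ to each piece, and recombine using $\|\rho-\sigma'\|_2\le\|\rho-\sigma\|_2+\eta$. The remark about $\tfrac{4ab}{a+b}\le 4\min\{a,b\}$ making the rank of $\rho$ irrelevant is exactly the step the paper uses implicitly.
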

\begin{proof}
By definition of $\rank_\eta$, there exists a state $\sigma'$ such that $\rank(\sigma')=\rank_\eta(\sigma)$ and $\|\sigma - \sigma'\|_2 \leq \eta$. By Eq. (\ref{eq:norm12_main}), $\|\rho - \sigma'\|_1\leq \sqrt{4\rank(\sigma')}\|\rho - \sigma'\|_2$, and we have
\begin{align}
    \|\rho - \sigma \|_1  & \leq \|\rho - \sigma'\|_1 + \|\sigma - \sigma'\|_1 \\
        & \leq \sqrt{4\rank(\sigma')}\|\rho - \sigma'\|_2  + \sqrt{4\rank(\sigma')}\|\sigma - \sigma'\|_2 \\
        & \leq \sqrt{4\rank(\sigma')} \left( \|\rho - \sigma\|_2 + \|\sigma - \sigma'\|_2 \right)  + \sqrt{4\rank(\sigma')}\|\sigma - \sigma'\|_2 \\
        & \leq \sqrt{4\rank(\sigma')} \left( \|\rho - \sigma\|_2 +2\|\sigma - \sigma'\|_2 \right) \\
        & = \sqrt{4\rank_\eta(\sigma)} \left( \|\rho - \sigma\|_2 +2\eta \right)
\end{align}
\end{proof}


We first consider the accuracy of the independence test implemented in \autoref{func:findlast_direct}, in an arbitrary iteration of \autoref{func:recursive}. Define $C_1 := \Tr_{B_y}[C]$ and $C_2 := I_{A_x}/d_{A_x} \otimes \Tr_{A_xB_y}[C]$ as in \autoref{func:findlast_direct}. By \autoref{lem:SWAP}, for each of the estimate $p_i, (i=1,2,3)$, with probability $1-\kappa$, $p_i$ is $\varepsilon$-close to the true value. By the union bound, with probability at least $1-3\kappa$, all three estimates are $\varepsilon$-close to their corresponding true values, and therefore
\begin{align} \label{eq:p1p2p3_dist_c}
 &~ \left|p_1+p_2-2p_3 - \|C_1- C_2\|_2^2 \right| \nonumber \\
\leq &~ \left|p_1- \Tr[C_1^2] \right| +\left|p_2- \Tr[C_2^2] \right| + \left|2p_3- 2\Tr[C_1C_2] \right| \nonumber \\
\leq &~ 4\varepsilon \,.
\end{align}

Next, by \autoref{lem:norm_soft_rank} and Eq. (\ref{eq:norm12_main}),
\begin{align} \label{eq:norm12_C1C2_c}
    \sqrt{2} \|C_1- C_2\|_2 \leq \|C_1- C_2\|_1 \leq \sqrt{4\rank_\eta(C_1)} ~ \left(\|C_1- C_2\|_2 + 2\eta\right)
\end{align}

Now we consider the output of \autoref{func:findlast_direct}. If \autoref{func:findlast_direct} returns \True, then $p_1+p_2-2p_3\leq \delta$. From Eq. (\ref{eq:p1p2p3_dist_c}), $\|C_1- C_2\|_2 \leq \sqrt{\delta + 4\varepsilon}$ and then from Eq. (\ref{eq:norm12_C1C2_c}), $\|C_1- C_2\|_1 \leq \sqrt{4\rank_\eta(C_1)} \left( \|C_1- C_2\|_2 + 2\eta \right) \leq \sqrt{4\rank_\eta(C_1)} \left( \sqrt{\delta + 4\varepsilon} + 2\eta\right)$. 
Since $\chione C {A_x}{A_{\neq x}B_{\neq y}} = \|C_1- C_2\|_1$, we have the following lemma:

\begin{lem}\label{lem:2norm_chi_direct_c}
Let $\map{C}$ be the input channel of \autoref{func:findlast_direct}. With probability $1-3\kappa$, \autoref{func:findlast_direct} satisfies the following:
\begin{enumerate}
\item If \autoref{func:findlast_direct} returns \True, then
\begin{align}
    \chione C {A_x}{A_{\neq x}B_{\neq y}} \leq \sqrt{4\rank_\eta(\Tr_{B_y}[C])} \left( \sqrt{\delta + 4\varepsilon} + 2\eta\right) \,, \label{eq:2norm_chi_c}
\end{align}
where $C$ is the Choi state of $\map{C}$;
\item If \autoref{func:findlast_direct} returns \False, then
\begin{align}
    \chione C {A_x}{A_{\neq x}B_{\neq y}} >\sqrt{2 (\delta - 4\varepsilon) }\,.
\end{align}
\end{enumerate}
\end{lem}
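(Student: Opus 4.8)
The plan is to assemble the lemma directly from the three SWAP-test guarantees and the norm conversion in \autoref{lem:norm_soft_rank}, following the informal argument sketched just above the statement. First I would apply \autoref{lem:SWAP} to each of the three SWAP tests run inside \autoref{func:findlast_direct}, which output estimates $p_1\approx\Tr[C_1^2]$, $p_2\approx\Tr[C_2^2]$ and $p_3\approx\Tr[C_1C_2]$, where $C_1:=\Tr_{B_y}[C]$ and $C_2:=I_{A_x}/d_{A_x}\otimes\Tr_{A_xB_y}[C]$. Each estimate is within $\varepsilon$ of its true value with probability $1-\kappa$, so by the union bound all three are within $\varepsilon$ simultaneously with probability at least $1-3\kappa$; I would condition on this event henceforth. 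On this event, the triangle inequality combined with the identity $\|C_1-C_2\|_2^2=\Tr[C_1^2]+\Tr[C_2^2]-2\Tr[C_1C_2]$ from Eq.~(\ref{eq:HS_from_Tr}) gives $\bigl|p_1+p_2-2p_3-\|C_1-C_2\|_2^2\bigr|\le 4\varepsilon$.

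Next I would convert between the Hilbert--Schmidt and trace distances of $C_1-C_2$. The lower bound $\sqrt{2}\,\|C_1-C_2\|_2\le\|C_1-C_2\|_1$ is the left inequality of Eq.~(\ref{eq:norm12_main}), and the upper bound $\|C_1-C_2\|_1\le\sqrt{4\,\rank_\eta(C_1)}\,\bigl(\|C_1-C_2\|_2+2\eta\bigr)$ is precisely \autoref{lem:norm_soft_rank} applied with $\sigma=C_1=\Tr_{B_y}[C]$. Now I split into the two output cases. If \autoref{func:findlast_direct} returns \True, then by construction $p_1+p_2-2p_3\le\delta$, hence $\|C_1-C_2\|_2^2\le\delta+4\varepsilon$, and plugging into the upper bound gives $\|C_1-C_2\|_1\le\sqrt{4\,\rank_\eta(\Tr_{B_y}[C])}\,\bigl(\sqrt{\delta+4\varepsilon}+2\eta\bigr)$. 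If it returns \False, then $p_1+p_2-2p_3>\delta$, hence $\|C_1-C_2\|_2^2>\delta-4\varepsilon$, and the lower bound yields $\|C_1-C_2\|_1>\sqrt{2(\delta-4\varepsilon)}$.

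Finally I would identify $\|C_1-C_2\|_1$ with $\chione C{A_x}{A_{\neq x}B_{\neq y}}$. Since $\map C$ is trace-preserving, the marginal of its Choi state on $A_x$ is $C_{A_x}=I_{A_x}/d_{A_x}$, so $C_2=C_{A_x}\otimes C_{A_{\neq x}B_{\neq y}}$ is exactly the product of the two marginals entering the definition of $\chi_1$, while $C_1=\Tr_{B_y}[C]$ is the joint marginal on $A_xA_{\neq x}B_{\neq y}$; therefore $\|C_1-C_2\|_1=\chione C{A_x}{A_{\neq x}B_{\neq y}}$. Substituting this into the two inequalities from the case analysis yields the two claims of the lemma. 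I do not expect any genuinely hard step here --- the proof is essentially a bookkeeping exercise --- and the only points needing a little care are propagating the $1-3\kappa$ failure probability correctly through the union bound and invoking the $\rank_\eta$ version (rather than the exact-rank version of Eq.~(\ref{eq:norm12_main})) so that the bound stays meaningful when $\Tr_{B_y}[C]$ is only approximately low rank, as permitted by \autoref{ass:close_to_low_rank}.
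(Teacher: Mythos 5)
Your proposal is correct and follows essentially the same route as the paper's proof: union bound over the three SWAP-test estimates to get $\bigl|p_1+p_2-2p_3-\|C_1-C_2\|_2^2\bigr|\le 4\varepsilon$, then the lower bound from Eq.~(\ref{eq:norm12_main}) and the upper bound from \autoref{lem:norm_soft_rank} with $\sigma=C_1$, followed by the same case split on the returned value and the identification $\|C_1-C_2\|_1=\chione C{A_x}{A_{\neq x}B_{\neq y}}$. The only difference is that you spell out why $C_2$ is the product of marginals (via $C_{A_x}=I_{A_x}/d_{A_x}$), which the paper leaves implicit.
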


\subsubsection{Combining errors in all recursions}\label{appss:recursion_approx}

\begin{lem} \label{lem:p-tr}
For quantum states $\rho$ and $\sigma$ in Hilbert space $\spc{H}_A$, if $\|\rho - \sigma\|_1 \leq \varepsilon$, then for any purification of $\rho$, $\ket{\psi}_{AR}$ with an ancillary system $R$ satisfying $\dim R \geq \rank(\sigma)$, there exists a purification of $\sigma$, $\ket{\phi}_{AR}$ such that
\begin{align}
    \| \ketbra{\psi} - \ketbra{\phi} \|_1 \leq 2\sqrt\varepsilon
\end{align}
\end{lem}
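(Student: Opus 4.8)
The plan is to deduce this from Uhlmann's theorem together with the Fuchs--van de Graaf inequality and the exact expression for the trace distance between two pure states. Throughout I write $F(\rho,\sigma):=\|\sqrt{\rho}\sqrt{\sigma}\|_1 = \Tr\sqrt{\sqrt{\rho}\,\sigma\sqrt{\rho}}$ for the fidelity.

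First I would use Uhlmann's theorem to produce the purification $\ket{\phi}_{AR}$. The hypothesis is that $\ket{\psi}_{AR}$ already purifies $\rho$, so the Schmidt rank of $\ket{\psi}$ across the $A$:$R$ cut equals $\rank(\rho)$ and in particular $\dim R\ge\rank(\rho)$; combined with the assumed bound $\dim R\ge\rank(\sigma)$, the reference system $R$ is large enough to support optimal purifications of both $\rho$ and $\sigma$. Uhlmann's theorem then gives, for the fixed purification $\ket{\psi}_{AR}$ of $\rho$, a purification $\ket{\phi}_{AR}$ of $\sigma$ achieving $|\braket{\psi}{\phi}| = F(\rho,\sigma)$.

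Next I would bound the overlap from below. The Fuchs--van de Graaf inequality gives $1-F(\rho,\sigma)\le \frac12\|\rho-\sigma\|_1\le\frac{\varepsilon}{2}$, and since $F(\rho,\sigma)\le 1$ this yields $1-|\braket{\psi}{\phi}|^2 = \big(1-F(\rho,\sigma)\big)\big(1+F(\rho,\sigma)\big)\le 2\big(1-F(\rho,\sigma)\big)\le\varepsilon$. Finally, invoking the identity $\frac12\|\ketbra{\psi}-\ketbra{\phi}\|_1 = \sqrt{1-|\braket{\psi}{\phi}|^2}$ valid for unit vectors, one gets $\|\ketbra{\psi}-\ketbra{\phi}\|_1\le 2\sqrt{\varepsilon}$, which is the claim.

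The one point that needs care is verifying that Uhlmann's theorem is applicable under the weak dimension hypothesis $\dim R\ge\rank(\sigma)$ rather than the textbook sufficient condition $\dim R\ge\dim\spc{H}_A$; this is handled by the observation above that the existence of $\ket{\psi}_{AR}$ already forces $\dim R\ge\rank(\rho)$ as well, so the maximization over purifications of $\sigma$ inside $\spc{H}_A\otimes\spc{H}_R$ is genuinely attained and equals $F(\rho,\sigma)$. The remaining steps are just a chaining of standard inequalities and require no nontrivial computation.
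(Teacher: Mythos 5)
Your proof is correct and follows essentially the same route as the paper's: apply Uhlmann's theorem to obtain a purification $\ket{\phi}_{AR}$ achieving the fidelity, lower-bound the fidelity via the Fuchs--van de Graaf inequality $1-F\le\tfrac12\|\rho-\sigma\|_1$, and convert to trace distance using the exact pure-state formula; the only cosmetic difference is your use of the unsquared fidelity convention, and you additionally spell out why $\dim R\ge\rank(\sigma)$ suffices for Uhlmann's maximum to be attained, a point the paper leaves implicit.
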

\begin{proof}
By Uhlmann's Theorem, there exists a purification of $\sigma$, $\ket{\phi}_{AR}$ such that $F(\rho,\sigma) = F(\ketbra{\psi}, \ketbra{\phi})$.
\begin{align}
    \frac12 \| \ketbra{\psi} - \ketbra{\phi} \|_1 = \sqrt{1-F(\ketbra{\psi}, \ketbra{\phi})} = \sqrt{1-F(\rho,\sigma)} \leq \sqrt{1-\left(1-\frac12\|\rho - \sigma\|_1\right)^2} \leq \sqrt{\varepsilon - \varepsilon^2/4} \leq \sqrt{\varepsilon}
\end{align}
\end{proof}

\begin{lem}\label{lem:2ton}
Let $\map{C}$ be a quantum channel and $C$ be its Choi state. Let $C^{(i)}:= \Tr_{A_{i+1}B_{i+1}\dots A_{n}B_{n}}[C]$ be the Choi operator on first $i$ input-output pairs. For $n\geq 2$, if
\begin{align} \label{eq:CDi}
    \forall i=2,\dots,n,~ \exists \map{D}^{(i)\circ} \in \set{Comb}[(A_1\dots A_{i-1},B_1\dots B_{i-1}), (A_i, \emptyset )], ~ \|C^{(i)\circ} - D^{(i)\circ}\|_1 \leq \delta^{(i)} \,, 
\end{align}
(Note that the above can be directly obtained from $\chione{C}{A_i}{A_1\dots A_{i-1}B_1\dots B_{i-1}} \leq \delta^{(i)}$.) 
where $C^{(i)\circ}:= \Tr_{B_i}[C^{(i)}]$, then
\begin{align}
    \exists \map{D} \in \set{Comb}[(A_1,B_1), \dots , (A_n, B_n)], ~ \|C - D\|_1 \leq 2\sqrt{\delta^{(2)}} + 4 \sum_{i=3}^n \sqrt{\delta^{(i)}} \,,
\end{align}
where $D$ is the Choi state of $\map{D}$.
\end{lem}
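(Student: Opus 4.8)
The plan is to build the target comb one tooth at a time, producing combs $\map{D}^{(i)}\in\set{Comb}[(A_1,B_1),\dots,(A_i,B_i)]$ for $i=1,\dots,n$ with $\map{D}^{(n)}=\map{D}$, while controlling the approximation error at the level of Stinespring purifications so that it accumulates \emph{additively} in the $\sqrt{\delta^{(i)}}$ rather than through nested radicals. Two preliminary observations set up the induction. First, any two-tooth comb in $\set{Comb}[(A_1\cdots A_{i-1},B_1\cdots B_{i-1}),(A_i,\emptyset)]$ has a Choi operator of the form $E^{(i-1)}\otimes I_{A_i}/d_{A_i}$, where $E^{(i-1)}$ is the Choi state of a channel from $A_1\cdots A_{i-1}$ to $B_1\cdots B_{i-1}$ (immediate for a comb whose last tooth has empty output; cf.\ \cite{chiribella2008quantum,chiribella2009theoretical}). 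Hence the hypothesis \eqref{eq:CDi} reads $\|C^{(i)\circ}-E^{(i-1)}\otimes I_{A_i}/d_{A_i}\|_1\le\delta^{(i)}$ for some such $E^{(i-1)}$, and since $\Tr_{A_i}[C^{(i)\circ}]=C^{(i-1)}$, tracing out $A_i$ gives $\|C^{(i-1)}-E^{(i-1)}\|_1\le\delta^{(i)}$. Second, for each $i$ fix the canonical Stinespring purification $\ket{\psi^{(i)}}$ of the reduced channel $\map{C}^{(i)}$, on systems $A_1\cdots A_iB_1\cdots B_iE_i$ with $\Tr_{E_i}\ketbra{\psi^{(i)}}=C^{(i)}$; then the same vector $\ket{\psi^{(i)}}$ purifies $C^{(i)\circ}=\Tr_{B_i}[C^{(i)}]$ with purifying register $B_iE_i$, and $\ket{\psi^{(i-1)}}$ purifies $C^{(i-1)}$ with purifying register $E_{i-1}$. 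All environments and purifying registers are padded, once and for all, to a common dimension large enough for \autoref{lem:p-tr} and the equivalence of purifications to apply; this bookkeeping is routine and I will not dwell on it.

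The induction hypothesis $H(i)$ is: there is a comb $\map{D}^{(i)}\in\set{Comb}[(A_1,B_1),\dots,(A_i,B_i)]$, with Choi $D^{(i)}$, and a purification $\ket{\phi^{(i)}}$ of $D^{(i)}$ on the same systems as $\ket{\psi^{(i)}}$, with $\|\ketbra{\psi^{(i)}}-\ketbra{\phi^{(i)}}\|_1\le t_i$, where $t_2:=2\sqrt{\delta^{(2)}}$ and $t_i:=t_{i-1}+4\sqrt{\delta^{(i)}}$ for $i\ge3$, so $t_n=2\sqrt{\delta^{(2)}}+4\sum_{i=3}^n\sqrt{\delta^{(i)}}$. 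Granting $H(n)$ and using that $\Tr_{E_n}$ cannot increase the trace norm, $\|C-D\|_1=\|C^{(n)}-D^{(n)}\|_1\le t_n$, the claimed bound. Base case $i=2$: \autoref{lem:p-tr} applied to $\ket{\psi^{(1)}}$ and $E^{(1)}$ (using $\|C^{(1)}-E^{(1)}\|_1\le\delta^{(2)}$) gives a purification $\ket{\phi^{(1)}}$ of $E^{(1)}$ with $\|\ketbra{\psi^{(1)}}-\ketbra{\phi^{(1)}}\|_1\le2\sqrt{\delta^{(2)}}$; take $\map{D}^{(1)}$ to be the channel with Choi $E^{(1)}$ (a one-tooth comb is just a channel). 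Then \autoref{lem:p-tr} applied to $\ket{\psi^{(2)}}$ and $E^{(1)}\otimes I_{A_2}/d_{A_2}$ gives $\ket{\phi^{(2)}}$ purifying $E^{(1)}\otimes I_{A_2}/d_{A_2}$ with $\|\ketbra{\psi^{(2)}}-\ketbra{\phi^{(2)}}\|_1\le2\sqrt{\delta^{(2)}}$; set $D^{(2)}:=\Tr_{E_2}\ketbra{\phi^{(2)}}$. Since $\Tr_{B_2}[D^{(2)}]=E^{(1)}\otimes I_{A_2}/d_{A_2}$ with $E^{(1)}$ a channel Choi, the recursive Choi characterisation of combs \cite{chiribella2008quantum} certifies $\map{D}^{(2)}\in\set{Comb}[(A_1,B_1),(A_2,B_2)]$, so $H(2)$ holds.

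For the inductive step $i-1\to i$ with $i\ge3$: \autoref{lem:p-tr} applied to $\ket{\psi^{(i)}}$ and $E^{(i-1)}\otimes I_{A_i}/d_{A_i}$ gives a purification $\ket{\xi_i}$ of the latter with $\|\ketbra{\psi^{(i)}}-\ketbra{\xi_i}\|_1\le2\sqrt{\delta^{(i)}}$, and \autoref{lem:p-tr} applied to $\ket{\psi^{(i-1)}}$ and $E^{(i-1)}$ gives a purification $\ket{\chi^{(i-1)}}$ of $E^{(i-1)}$ with $\|\ketbra{\psi^{(i-1)}}-\ketbra{\chi^{(i-1)}}\|_1\le2\sqrt{\delta^{(i)}}$. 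Now $\ket{\xi_i}$ and $\ket{\chi^{(i-1)}}\otimes\ket{\Phi_+}_{A_i\tilde A_i}/\sqrt{d_{A_i}}$ both purify the \emph{same} operator $E^{(i-1)}\otimes I_{A_i}/d_{A_i}$ (first factor purified by $E_{i-1}$, second by $\tilde A_i$), so there is an isometry $Y_i$ acting only on the purifying registers with $\ket{\xi_i}=(\mathrm{id}\otimes Y_i)\bigl(\ket{\chi^{(i-1)}}\otimes\ket{\Phi_+}_{A_i\tilde A_i}/\sqrt{d_{A_i}}\bigr)$. Define $\ket{\phi^{(i)}}:=(\mathrm{id}\otimes Y_i)\bigl(\ket{\phi^{(i-1)}}\otimes\ket{\Phi_+}_{A_i\tilde A_i}/\sqrt{d_{A_i}}\bigr)$, which purifies $D^{(i-1)}\otimes I_{A_i}/d_{A_i}$, and $D^{(i)}:=\Tr_{E_i}\ketbra{\phi^{(i)}}$. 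Then $\Tr_{B_i}[D^{(i)}]=D^{(i-1)}\otimes I_{A_i}/d_{A_i}$ with $D^{(i-1)}$ a comb Choi, so $\map{D}^{(i)}$ is an $i$-tooth comb by the recursive characterisation. Since isometric conjugation and tensoring by a normalised state leave the trace norm unchanged, $\|\ketbra{\xi_i}-\ketbra{\phi^{(i)}}\|_1=\|\ketbra{\chi^{(i-1)}}-\ketbra{\phi^{(i-1)}}\|_1\le\|\ketbra{\chi^{(i-1)}}-\ketbra{\psi^{(i-1)}}\|_1+\|\ketbra{\psi^{(i-1)}}-\ketbra{\phi^{(i-1)}}\|_1\le2\sqrt{\delta^{(i)}}+t_{i-1}$, whence $\|\ketbra{\psi^{(i)}}-\ketbra{\phi^{(i)}}\|_1\le2\sqrt{\delta^{(i)}}+(2\sqrt{\delta^{(i)}}+t_{i-1})=t_i$, i.e.\ $H(i)$.

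I expect the main obstacle to be exactly the bookkeeping that keeps the error linear in the $\sqrt{\delta^{(i)}}$: re-running \autoref{lem:p-tr} on a trace distance that already contains the accumulated error $t_{i-1}$ would produce a tower of nested square roots and a badly degraded bound. The purification-level induction avoids this — each step feeds \autoref{lem:p-tr} only the fresh quantity $\delta^{(i)}$ (twice: once for the marginal involving the newly added input $A_i$, once to re-purify $E^{(i-1)}$), contributing $4\sqrt{\delta^{(i)}}$, while the previously accumulated distance $t_{i-1}$ is carried intact by the isometry $Y_i$ and enters only through a triangle inequality; the base case costs only $2\sqrt{\delta^{(2)}}$ since $\map{D}^{(1)}$ may be taken to equal $E^{(1)}$, removing the second application. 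A secondary point — that each $\map{D}^{(i)}$ really is an $i$-tooth comb — is handled by the recursive Choi characterisation of quantum combs \cite{chiribella2008quantum,chiribella2009theoretical}: since $\Tr_{B_i}[D^{(i)}]$ is constructed to factorise as (a comb Choi on the first $i-1$ pairs) $\otimes\,I_{A_i}/d_{A_i}$, and $D^{(i)}$ is manifestly positive with the correct input marginal $I_{A_1\cdots A_i}/d_{A_1\cdots A_i}$, no explicit tooth-by-tooth dilation of $\map{D}^{(i-1)}$ is ever needed.
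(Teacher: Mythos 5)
Your proposal is correct and follows essentially the same route as the paper's proof: an induction that maintains, at each level $i$, a purification of an $i$-tooth comb Choi state close to a fixed purification of $C^{(i)}$, invoking the Uhlmann-type \autoref{lem:p-tr} twice per step (once against $D^{(i)\circ}$, once against its $A_i$-marginal) and transporting the accumulated error via an isometry on the purifying registers, yielding the same recursion $t_i=t_{i-1}+4\sqrt{\delta^{(i)}}$ with base $t_2=2\sqrt{\delta^{(2)}}$. The only cosmetic difference is that the paper packages the approximant as an isometric comb $\map E^{(i)*}$ whose last tooth outputs $B_iM_i$ and extracts the connecting isometry from the comb decomposition of $D^{(i)*}$, whereas you work directly with purification vectors and the unitary equivalence of purifications of $E^{(i-1)}\otimes I_{A_i}/d_{A_i}$; these are the same construction.
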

\begin{proof}~

In this proof, calligraphic letters denote linear maps and italic letters denote the corresponding Choi states.

Define a proposition $P(i)$: there exists a purification of $C^{(i)}$, denoted as $C^{(i)*} \in S(\spc{H}_{A_1 \dots A_{i}} \otimes \spc{H}_{B_1 \dots B_{i}} \otimes \spc{H}_{M_i})$, where $M_i$ is the purifying system, and an isometric channel $\map E^{(i)*} \in \set{Comb}[(A_1,B_1),\dots, (A_{i-1}, B_{i-1}), (A_i, B_iM_i)]$ such that
\begin{align}
    \|C^{(i)*} - E^{(i)*} \|_1 \leq \varepsilon^{(i)} \,.
\end{align}

Consider $P(2)$. Pick any purification of $C^{(2)}$ as $C^{(2)*}$, and $C^{(2)*}$ is also a purification of $C^{(2)\circ}$. Pick a purification of $D^{(2)\circ}$ as $D^{(2)*}$ such that its trace distance to $C^{(2)*}$ is minimized. Since $\|C^{(2)\circ} - D^{(2)\circ}\|_1 \leq \delta^{(2)}$, according to \autoref{lem:p-tr}, we have $\|C^{(2)*} - D^{(2)*}\|_1 \leq 2\sqrt{\delta^{(2)}}$. By choosing $E^{(2)*}=D^{(2)*}$, we prove $P(2)$ with $\varepsilon^{(2)} = 2\sqrt{\delta^{(2)}}$.

Assume $P(i-1)$ holds with $\varepsilon^{(i-1)}$, then there exists a purification of $C^{(i-1)}$, denoted as $C^{(i-1)*} \in S(\spc{H}_{A_1 \dots A_{i-1}} \otimes \spc{H}_{B_1 \dots B_{i-1}} \otimes \spc{H}_{M_{i-1}})$, and an isometric channel $\map E^{(i-1)*} \in \set{Comb}[(A_1,B_1),\dots, (A_{i-2}, B_{i-2}), (A_{i-1}, B_{i-1}M_{i-1})]$ such that
\begin{align}\label{eq:CEminus1}
    \|C^{(i-1)*} - E^{(i-1)*} \|_1 \leq \varepsilon^{(i-1)} \,.
\end{align}

Pick a purification of $C^{(i)}$ as $C^{(i)*} \in S(\spc{H}_{A_1 \dots A_{i}} \otimes \spc{H}_{B_1 \dots B_{i}} \otimes \spc{H}_{M_{i}})$ where $\spc{H}_{M_{i}}$ is a large enough Hilbert space. Then $C^{(i)*}$ is also a purification of $C^{(i)\circ}$. Pick a purification of $D^{(i)\circ}$ as $D^{(i)*}$ such that its trace distance to $C^{(i)*}$ is minimized. Since $\|C^{(i)\circ} - D^{(i)\circ}\|_1 \leq \delta^{(i)}$ (\ref{eq:CDi}), according to \autoref{lem:p-tr}, we have
\begin{align} \label{eq:CDstar}
\| C^{(i)*} - D^{(i)*} \|_1 \leq 2\sqrt{\|C^{(i)\circ} - D^{(i)\circ}\|_1} \leq 2\sqrt{\delta^{(i)}} \,.
\end{align}

Let $(D')^{(i-1)} := \Tr_{A_i}[ D^{(i)\circ} ]$. Pick a purification of $(D')^{(i-1)}$ as $(D')^{(i-1)*}\in S(\spc{H}_{A_1 \dots A_{i-1}} \otimes \spc{H}_{B_1 \dots B_{i-1}} \otimes \spc{H}_{M_{i-1}})$  such that its trace distance to $C^{(i-1)*}$ is minimized. From \autoref{lem:p-tr} we have
\begin{align}\label{eq:DprimeC}
    \|(D')^{(i-1)*} - C^{(i-1)*}\|_1 \leq 2\sqrt{\|(D')^{(i-1)} - C^{(i-1)}\|_1 } = 2\sqrt{\| \Tr_{A_i}[D^{(i)\circ}-C^{(i)\circ}] \|_1 } \leq 2\sqrt{\| D^{(i)\circ} - C^{(i)\circ} \|_1 } \leq 2 \sqrt{\delta^{(i)}}
\end{align}

Note that $\map D^{(i)*} \in \set{Comb}[(A_1\dots A_{i-1},B_1\dots B_{i-1}), (A_i, B_iM_i)]$ since $\map{D}^{(i)\circ} \in \set{Comb}[(A_1\dots A_{i-1},B_1\dots B_{i-1}), (A_i, \emptyset )]$. Therefore $\map D^{(i)*}$ can be written as a concatenation of two channels with the first channel being $(D')^{(i-1)*}$:

\begin{align}
    \map D^{(i)*} =&~ (\map I_{B_1\dots B_{i-1}} \otimes \map{D}_i) \circ ((\map D')^{(i-1)*} \otimes \map I_{A_i})  \label{eq:Distar}     
\end{align}
where $\map D_i: S(\spc{H}_{A_i}\otimes\spc{H}_{M_{i-1}}) \to S(\spc{H}_{B_i}\otimes\spc{H}_{M_i})$ is an isometric channel since $D^{(i)*}$ is isometric.

Define
\begin{align}
    \map E^{(i)*} := (\map I_{B_1\dots B_{i-1}} \otimes \map{D}_i) \circ (\map E^{(i-1)*} \otimes \map I_{A_i})
\end{align}
and let $E^{(i)*}$ be its Choi state. This is an isometric channel because both $\map{D}_i$ and $\map E^{(i-1)*}$ are. Define $\tilde{\map{D}}_i(\rho): S(\spc{H}_{M_{i-1}}) \to S(\spc{H}_{A'_i}\otimes \spc{H}_{B_i} \otimes \spc{H}_{M_i})$ as $\tilde{\map{D}}_i(\rho) :=  (\map{D}_i \otimes \map I_{A'_i})( \rho \otimes \ketbra{\Phi^+}_{A_i A'_i})$, which is an isometric channel and preserves trace norm. Then we have
\begin{align}
\nonumber \| D^{(i)*} - E^{(i)*} \|_1 & = \left\| (\tilde{\map{D}}_i \otimes \map I_{A'_1 \dots A'_i} \otimes \map I_{B_1 \dots B_{i-1}}) \left( (D')^{(i-1)*} - E^{(i-1)*} \right) \right\|_1 \\
\nonumber    & = \| (D')^{(i-1)*} - E^{(i-1)*} \|_1 \\
\nonumber    & \leq \| (D')^{(i-1)*} - C^{(i-1)*} \|_1 + \| C^{(i-1)*} - E^{(i-1)*} \|_1 \\
    & \leq 2\sqrt{\delta^{(i)}} + \varepsilon^{(i-1)} \label{eq:DEstar}
\end{align}
where the last step uses Eqs. (\ref{eq:DprimeC}) and (\ref{eq:CEminus1}).

Last, from Eqs. (\ref{eq:CDstar}) and (\ref{eq:DEstar}),
\begin{align}
\| C^{(i)*} - E^{(i)*} \|_1 \leq \| C^{(i)*} - D^{(i)*} \|_1 + \| D^{(i)*} - E^{(i)*} \|_1 \leq 4\sqrt{\delta^{(i)}} + \varepsilon^{(i-1)}
\end{align}
Therefore, we can choose $\varepsilon^{(i)} = \varepsilon^{(i-1)} + 4\sqrt{\delta^{(i)}}$. Performing induction with base case $\varepsilon^{(2)} = 2\sqrt{\delta^{(2)}}$, we have
\begin{align}
    \varepsilon^{(n)} = 2\sqrt{\delta^{(2)}} + 4 \sum_{i=3}^n \sqrt{\delta^{(i)}}
\end{align}
Picking $D := \Tr_{M_n}[ E^{(n)*}]$, we have
\begin{align}
    \| C - D \|_1 \leq \| C^{(n)*} - E^{(n)*} \|_1 \leq \varepsilon^{(n)} = 2\sqrt{\delta^{(2)}} + 4 \sum_{i=3}^n \sqrt{\delta^{(i)}}
\end{align}
\end{proof}

\subsubsection{Proof of \autoref{thm:approximate}} \label{appss:main_proof_approx}

Now we are ready to prove \autoref{thm:approximate}.

\begin{proof}[Proof of \autoref{thm:approximate}]

In \autoref{func:recursive}, using the relation $T_{\rm test}(n) \leq n^2 + T_{\rm test}(n-1)$, we can find that there are $T_{\rm test}(n) \leq n^3$ independence tests, each of which invokes three SWAP tests. Each SWAP test produces an estimate within error $\varepsilon$ with probability no less than $1-\kappa$ according to \autoref{lem:SWAP}.
Let $\kappa = \kappa_0/3n^3$, and from now we assume all SWAP tests are within error $\varepsilon$, which has probability no less than $1-3n^3\kappa = 1-\kappa_0$.

Let $\map{C}^{(n)}$ be the initial value of $\map{C}$ in \autoref{func:recursive}, and let $\map{C}^{(k)}$, which has $k$ input wires and $k$ output wires, be the value of $\map{C}$ in the $(n-k+1)$-th recursive call of \autoref{func:recursive}.

Consider testing whether $(A_x,B_y)$ is the last tooth of $\map{C}^{(k)}$ with \autoref{func:findlast_direct}. Let $S$ be the set of all wires of $\map{C}^{(k)}$ excluding $A_x$ and $B_y$. If $(A_x,B_y)$ passes the test, namely \autoref{func:findlast_direct} returns \True, by \autoref{lem:2norm_chi_direct_c}, 
\begin{align}
    \chione {C^{(k)}}{A_x}{S} \leq \sqrt{4 r^{(k)}} \left( \sqrt{\delta + 4\varepsilon} + 2\eta^{(k)}\right) \,.
\end{align}

By \autoref{lem:2ton}, there exists $ \map{D} \in \set{Comb}[(A_{\sigma(1)},B_{\pi(1)}),\dots,(A_{\sigma(n)},B_{\pi(n)})]$ satisfying
\begin{align} \label{eq:CDleqrmaxetamax}
 \|C - D\|_1 \leq 
    4\sum_{i=2}^n \sqrt{\sqrt{4 r^{(k)}} \left( \sqrt{\delta + 4\varepsilon} + 2\eta^{(k)}\right)} \leq 4n \sqrt{\sqrt{4 r_{\max}} \left( \sqrt{\delta + 4\varepsilon} + 2\eta_{\max}\right)}\,,
\end{align}
where $\eta_{\max}:=\max_k \{\eta^{(k)}\}$ and $r_{\max}:=\max_k \{r^{(k)}\}$.


Now we pick $\delta$ and $\varepsilon$ such that $\sqrt{\delta+4\varepsilon}=2\eta_{\max}$ and $\delta=4\varepsilon$, which leads to $\delta=2\eta_{\max}^2$. Then Eq. (\ref{eq:CDleqrmaxetamax}) simplifies to
\begin{align}
 \|C - D\|_1 \leq 8\sqrt{2} ~ n r_{\max}^{1/4} \eta_{\max}^{1/2} 
\end{align}

According to \autoref{func:SWAPTEST}, $N=O(\varepsilon^{-2} \log\kappa^{-1}) = O(\delta^{-2} \log (n \kappa_0^{-1})) = O(\eta_{\max}^{-4} \log (n \kappa_0^{-1}))$. The sample complexity is then
\begin{align}
    T_{\rm sample} = N T_{\rm test}(n) = O(n^3\eta_{\max}^{-4} \log (n \kappa_0^{-1}))
\end{align}
and the computational complexity is $O(T_{\rm sample} n \log d_A)$ from the realization of SWAP tests.


\end{proof}

\subsection{A quantum causal unravelling algorithm with $c\geq 1$} \label{app:largerc}

Here we give the generalization of \autoref{func:recursive} for larger $c$, which decomposes the original process into the form of \autoref{fig:PQcomb} where each interaction involves at most $c$ inputs and $c$ outputs.

\begin{algorithm}[H]
    \caption{Quantum causal unravelling algorithm with $c\geq 1$}\label{func:recursive_c}
    \Indentp{0.5em}
    \SetInd{0.5em}{1em}
    \SetKwInOut{Preprocessing}{Preprocessing}
    \SetKwInOut{Input}{Input}
    \SetKwInOut{Output}{Output}
    \SetKw{Next}{next}
    \SetKwFunction{unravel}{unravel}

    \ResetInOut{Input}
    \Input{Black-box access to quantum channel $\map{C}$, upper bound $c$ of the partition size}
    \ResetInOut{Output}
    \Output{A causal unravelling  of $\map{C}$}
    \BlankLine
    \For{$(c_P, c_Q) \in \{1,\dots,c\}\times\{1,\dots,c\}$}
    {
       \ForEach{$P \subset \{A_1,\dots,A_{n_{\rm in}}\}$ with $|P|=c_P$ and $Q\subset \{B_1,\dots,B_{n_{\rm out}}\}$ with $|Q|=c_Q$}
       {
           \If(\tcp*[f]{Done by checking whether $P$ and $A_{\notin P}B_{\notin Q}$ are independent}){$(P,Q)$ is the last tooth}
           {
               Let $\map{C}_{A_{\notin P},B_{\notin Q}}(\rho) := \Tr_{Q}[\map{C}(\rho \otimes \tau_{P})]$, where $\tau_P$ is an arbitrary state on $P$\;
                \DontPrintSemicolon\tcp*[r]{Remove $P$ and $Q$ from $\map{C}$}\PrintSemicolon 
               Recursively run this algorithm on the reduced channel $\map{C}_{A_{\notin P},B_{\notin Q}}$\;
               Output $(P,Q)$ and exit this function\tcp*{Append $(P,Q)$ to the end of the output}
           }
       }
    }
    \tcp{The algorithm reaches here in two cases:}
    \tcp{1. The channel $\map{C}$ has no input wires or no output wires, in which case the foreach-loop is skipped; or}
    \tcp{2. The channel $\map{C}$ cannot be further decomposed by picking no more than $c$ inputs and outputs}
        Let $P$ be the set of input wires and $Q$ be the set of output wires\;
        Output $(P,Q)$ and exit this function\;
\end{algorithm}

The order that $c_P$ and $c_Q$ are enumerated is not explicit from the description above, but it is recommended to give priority to smaller values of $c_P$ and $c_Q$. This ensures that the algorithm exhausts all possibilities for smaller partition sizes before considering larger partitions, and will output the causal unravelling with minimal partition size.
This is preferred since smaller partition sizes unravels more causal information about the channel.

We analyze the complexity of \autoref{func:recursive_c} based on the following assumption similar to \autoref{ass:close_to_low_rank}.

\begin{assumption} \label{ass:close_to_low_rank_c}
Let the output of \autoref{func:recursive_c} be $(P_{1},Q_{1}),\dots,(P_m,Q_m)$. 
Let $\map{C}^{(k,k-1)}$ be the reduced map of $\map{C}$ obtained by keeping only the fist $k$ partitions of inputs $P_1\cup\dots\cup P_k$ and $k-1$ partitions of outputs $Q_1 \cup\dots\cup Q_{k-1}$, and let $C^{(k,k-1)}$ be the Choi state of $\map{C}^{(k,k-1)}$.
 We assume that $C^{(k,k-1)}$ are approximately polynomial-rank, namely
 \begin{align} \label{eq:rank_eta_leq_r_c}
    \forall k=2,\dots,m, \rank_{\eta^{(k)}}(C^{(k,k-1)}) = r^{(k)} \,, 
 \end{align}
where $\eta^{(k)}$ are small non-negative numbers and $r^{(k)}$ are bounded by a polynomial of $n$.
\end{assumption}

The complexity of \autoref{func:recursive_c} is given in the following theorem: 

\begin{theo}\label{thm:largerc}
Under \autoref{ass:close_to_low_rank_c}, for any confidence parameter $\kappa_0>0$, \autoref{func:recursive_c} outputs a causal unravelling $(P_{1},Q_{1}),\dots,(P_m,Q_m)$ satisfying the following conditions:

\begin{enumerate}
   \item With probability $1-\kappa_0$, the output of \autoref{func:recursive_c} is an approximate causal unravelling for $\map{C}$ in the following sense:
      \begin{align}
          \exists \map{D} \in \set{Comb}[ (P_{1},Q_{1}),\dots,(P_m,Q_m) ], ~ \|C - D \|_1 \leq 8\sqrt{2} ~ m r_{\max}^{1/4} \eta_{\max}^{1/2}  \,,
      \end{align}
        where $\eta_{\max}:=\max_k \{\eta^{(k)}\}$ and $r_{\max}:=\max_k \{r^{(k)}\}$.
   \item The number of queries to $\map{C}$ is in the order of
    \begin{align}
        T_{\rm sample}=O \left(n^{2c+1}\eta_{\max}^{-4} \log (n \kappa_0^{-1})\right) \,
    \end{align}
    where $n = \max\{n_{\rm in},n_{\rm out}\}$.
   \item The computational complexity is in the order of $O(T_{\rm sample}n\log d_A)$.
\end{enumerate}
\end{theo}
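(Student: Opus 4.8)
The plan is to follow the three-step structure of the proof of \autoref{thm:approximate}, adapting each ingredient to teeth that may contain up to $c$ inputs and $c$ outputs.

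\emph{Counting the tests.} At a recursion level with $n'$ remaining input and output wires, \autoref{func:recursive_c} enumerates all pairs $(P,Q)$ with $|P|=c_P\le c$ and $|Q|=c_Q\le c$; there are $\sum_{c_P,c_Q=1}^{c}\binom{n'}{c_P}\binom{n'}{c_Q}=O(n'^{\,2c})$ of them. The remaining problem size strictly decreases at each recursion, so there are at most $n$ levels and $T_{\rm test}(n)=O(n^{2c+1})$. By \autoref{prop:last_tooth_informal}, testing whether $(P,Q)$ is the last tooth is an independence test between $P$ and $A_{\notin P}B_{\notin Q}$, carried out exactly as in \autoref{func:findlast_direct} via three SWAP tests estimating $\Tr[C_1^2]$, $\Tr[C_2^2]$, $\Tr[C_1C_2]$ with $C_1:=\Tr_Q[C]$ and $C_2:=I_P/d_P\otimes\Tr_{PQ}[C]$, where $d_P:=\prod_{A_i\in P}d_{A_i}$.

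\emph{Accuracy of the independence tests.} Treating $P$ and $A_{\notin P}B_{\notin Q}$ each as a single composite system, the proof of \autoref{lem:2norm_chi_direct_c} goes through verbatim, since nothing there used that the tested systems were single wires. On acceptance, \autoref{lem:SWAP}, Eq. (\ref{eq:norm12_main}) and \autoref{lem:norm_soft_rank} give $\chione C{P}{A_{\notin P}B_{\notin Q}}\le\sqrt{4\,\rank_\eta(\Tr_Q[C])}\,(\sqrt{\delta+4\varepsilon}+2\eta)$, and on rejection $\chione C{P}{A_{\notin P}B_{\notin Q}}>\sqrt{2(\delta-4\varepsilon)}$. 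In the $(n-k+1)$-st call the relevant approximate rank is $\rank_{\eta^{(k)}}(\Tr_Q[C^{(k)}])=\rank_{\eta^{(k)}}(C^{(k,k-1)})=r^{(k)}$, controlled by \autoref{ass:close_to_low_rank_c}; note that although discarding the block $Q$ could in principle inflate the ordinary rank by $d_Q:=\prod_{B_j\in Q}d_{B_j}$, we work directly with $\rank_{\eta^{(k)}}$ of the already-reduced Choi states, so this factor never enters the final bound.

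\emph{Combining the errors.} I would upgrade \autoref{lem:2ton} to combs whose $i$-th tooth has input block $P_i$ and output block $Q_i$, grouping $P_1\cup\dots\cup P_{i}$ and $Q_1\cup\dots\cup Q_{i}$ as in the $c=1$ argument. The induction is unchanged: one purifies the partial Choi state on the first $i$ blocks, applies Uhlmann's theorem through \autoref{lem:p-tr} to replace each $\chi_1$-bounded marginal by a product state at trace-norm cost $O(\sqrt{\delta^{(i)}})$, exploits that the comb condition makes the map $P_i\to(\text{earlier blocks})$ constant so the last tooth $\map{D}_i$ acts only on $P_i$ and the purifying memory, and telescopes over the at most $m$ blocks. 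This yields $\map{D}\in\set{Comb}[(P_1,Q_1),\dots,(P_m,Q_m)]$ with $\|C-D\|_1\le 4m\sqrt{\sqrt{4 r_{\max}}\,(\sqrt{\delta+4\varepsilon}+2\eta_{\max})}$. Choosing $\sqrt{\delta+4\varepsilon}=2\eta_{\max}$ and $\delta=4\varepsilon$, i.e.\ $\delta=2\eta_{\max}^2$, collapses this to $8\sqrt2\,m\,r_{\max}^{1/4}\eta_{\max}^{1/2}$. A union bound over all $\le 3T_{\rm test}(n)$ SWAP tests with $\kappa=\kappa_0/(3T_{\rm test}(n))$ fixes the overall success probability at $1-\kappa_0$, and \autoref{func:SWAPTEST} uses $N=O(\varepsilon^{-2}\log\kappa^{-1})=O(\eta_{\max}^{-4}\log(n\kappa_0^{-1}))$ samples per test, so $T_{\rm sample}=N\cdot T_{\rm test}(n)=O(n^{2c+1}\eta_{\max}^{-4}\log(n\kappa_0^{-1}))$, with the computational complexity picking up the usual $O(n\log d_A)$ factor from implementing the SWAP tests.

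The main obstacle is the upgrade of \autoref{lem:2ton}: one must check that $\map{D}^{(i)\circ}\in\set{Comb}[(P_1\cup\dots\cup P_{i-1},Q_1\cup\dots\cup Q_{i-1}),(P_i,\emptyset)]$ can still be factored so that its isometric dilation $\map{D}^{(i)*}$ ends in a single isometry acting on $P_i$ together with the purifying memory, which is what keeps the inductive bookkeeping closed. This holds for exactly the same structural reason as in the $c=1$ case — the comb condition is precisely the constancy of that map — so the generalization is really a matter of notation rather than a new idea; the one point to state carefully is that $\rank_{\eta^{(k)}}$ in \autoref{ass:close_to_low_rank_c} refers to the reduced Choi states $C^{(k,k-1)}$ along the \emph{output} causal unravelling produced by the algorithm, so the recursion of \autoref{lem:rank2}-type rank bounds is not even needed here.
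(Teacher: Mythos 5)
Your proposal is correct and follows essentially the same route as the paper's proof: the same $O(n^{2c+1})$ count of independence tests, the same block-wise application of \autoref{lem:2norm_chi_direct_c}, the same upgrade of \autoref{lem:2ton} to teeth with input blocks $P_i$ and output blocks $Q_i$ telescoping over $m$ steps, and the same parameter choice $\delta=2\eta_{\max}^2$, $\delta=4\varepsilon$. Your closing observation --- that \autoref{ass:close_to_low_rank_c} bounds $\rank_{\eta^{(k)}}$ of the reduced Choi states directly, so no \autoref{lem:rank2}-type recursion is needed --- is exactly how the paper handles the rank bookkeeping in the approximate case.
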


\begin{proof}
We first count the number of independence tests in \autoref{func:recursive_c}. Let $T_{\rm test}(n)$ be the number of independence tests required for a channel with $n_{\rm in}\leq n$ input wires and $n_{\rm out}\leq n$ output wires. The number of possibilities to choose from $n$ wires a subset of no more than $c$ wires  is $\sum_{k=1}^c \binom{n}{c} = O(n^c)$. In the worst case, we need to enumerate all such subsets of input and output wires, which takes time $O(n^{2c})$.
In each recursion, the number of inputs and outputs of the channel is reduced by at least one, so we obtain the recursive relation $T_{\rm test}(n) \leq O(n^{2c})+T_{\rm test}(n-1)$ with $T_{\rm test}(0)=O(1)$, solving which gives $T_{\rm test}(n)=O(n^{2c+1})$.

The remaining part of this proof is similar to the proof of \autoref{thm:approximate} in \autoref{appss:main_proof_approx}.

Let $\map{C}^{(n)}$ be the initial value of $\map{C}$ in \autoref{func:recursive}, and let $\map{C}^{(k)}$, which has $k$ partitions of inputs $P_1\cup\dots\cup P_k$ and $k$ partitions of outputs $Q_1 \cup\dots\cup Q_{k}$, be the value of $\map{C}$ in the $(m-k+1)$-th recursive call of \autoref{func:recursive}.

Consider testing whether $(P,Q)$ is the last tooth of $\map{C}^{(k)}$ with \autoref{func:findlast_direct}. Let $S$ be the set of all wires of $\map{C}^{(k)}$ excluding $P$ and $Q$. If $(P,Q)$ passes the test, namely \autoref{func:findlast_direct} returns \True, by \autoref{lem:2norm_chi_direct_c}, 
\begin{align}
    \chione {C^{(k)}}{P}{S} \leq \sqrt{4 r^{(k)}} \left( \sqrt{\delta + 4\varepsilon} + 2\eta^{(k)}\right) \,.
\end{align}

By a slight variation of \autoref{lem:2ton}, where the premise and conclusion are defined for the newly defined $\map{C}^{(k)}$ in this proof, there exists $ \map{D} \in \set{Comb}[(P_1,Q_1),\dots,(P_m,Q_m)]$ satisfying
\begin{align} \label{eq:CDleqrmaxetamax_c}
 \|C - D\|_1 \leq 4 m \sqrt{\sqrt{4 r_{\max}} \left( \sqrt{\delta + 4\varepsilon} + 2\eta_{\max}\right)}\,,
\end{align}
where $\eta_{\max}:=\max_k \{\eta^{(k)}\}$ and $r_{\max}:=\max_k \{r^{(k)}\}$. Note that the induction in the proof of \autoref{lem:2ton} now has $m$ steps instead of $n$.


Now we pick $\delta$ and $\varepsilon$ such that $\sqrt{\delta+4\varepsilon}=2\eta_{\max}$ and $\delta=4\varepsilon$, which leads to $\delta=2\eta_{\max}^2$. Then Eq. (\ref{eq:CDleqrmaxetamax_c}) simplifies to
\begin{align}
 \|C - D\|_1 \leq 8\sqrt{2} ~ n r_{\max}^{1/4} \eta_{\max}^{1/2} 
\end{align}

According to \autoref{func:SWAPTEST}, $N=O(\varepsilon^{-2} \log\kappa^{-1}) = O(\delta^{-2} \log (n \kappa_0^{-1})) = O(\eta_{\max}^{-4} \log (n \kappa_0^{-1}))$. The sample complexity is then
\begin{align}
    T_{\rm sample} = N T_{\rm test}(n) = O(n^3\eta_{\max}^{-4} \log (n \kappa_0^{-1}))
\end{align}
and the computational complexity is $O(T_{\rm sample} n \log d_A)$ from the realization of SWAP tests.

\end{proof}



\end{document}